\def\ps@pprintTitle{%
 \let\@oddhead\@empty
 \let\@evenhead\@empty
 \def\@oddfoot{\centerline{\thepage}}%
 \let\@evenfoot\@oddfoot}
\newcolumntype{C}[1]{>{\centering\let\newline\\\arraybackslash\hspace{0pt}}m{#1}}
\newcolumntype{L}[1]{>{\raggedright\let\newline\\\arraybackslash\hspace{0pt}}m{#1}}
\newcolumntype{R}[1]{>{\raggedleft\let\newline\\\arraybackslash\hspace{0pt}}m{#1}}
\def\@citex[#1]#2{%
  \let\@citea\@empty
  \@cite{\@for\@citeb:=#2\do
    {\@citea\def\@citea{;\penalty\@m\ }%
     \edef\@citeb{\expandafter\@firstofone\@citeb}%
     \if@filesw\immediate\write\@auxout{\string\citation{\@citeb}}\fi
     \@ifundefined{b@\@citeb}{\mbox{\reset@font\bfseries ?}%
       \G@refundefinedtrue
       \@latex@warning
         {Citation `\@citeb' on page \thepage \space undefined}}%
       {\csname b@\@citeb\endcsname}}}{#1}}
\newcolumntype{\expand}{}
\long\@namedef{NC@rewrite@\string\expand}{\expandafter\NC@find}
  \def\problem@arg{#1}%
  \def\problem@framed{framed}%
  \def\problem@lined{lined}%
  \def\problem@doublelined{doublelined}%
    \def\problem@hline{}%
      \def\problem@hline{\hline\hline}%
      \def\problem@hline{\hline}%
    \def\problem@tablelayout{|>{\bfseries}lX|c}%
    \def\problem@title{\multicolumn{2}{|l|}{%
        \raisebox{-\fboxsep}{\textsc{\normalsize #2}}%
      }}%
    \def\problem@tablelayout{>{\bfseries}lXc}%
    \def\problem@title{\multicolumn{2}{l}{%
        \raisebox{-\fboxsep}{\textsc{\normalsize #2}}%
      }}%
\def\ra{{\rightarrow}}
\def\Lra{{\Leftrightarrow}}
\def\bF{{\mathbb F}}
\def\bN{{\mathbb N}}
\def\bZ{{\mathbb Z}}
\def\cC{{\mathcal C}}
\def\cGL{\mathrm{GL}}
\def\cI{{\mathcal I}}
\def\cJ{{\mathcal J}}
\def\cL{{\mathcal L}}
\def\cM{{\mathcal M}}
\def\cS{\mathrm{S}}
\def\cR{{\mathcal R}}
\def\cN{{\mathcal N}}
\def\cV{{\mathcal V}}
\newcommand{\rank}{\text{\rm rank}}
\newcommand{\Hw}[1]{\left|#1\right|}
\newcommand{\supp}[1]{\mathrm{Supp}(#1)}
\newcommand{\GI}{\mathcal{GI}}
\def\mA{\bm{A}}
\def\mB{\bm{B}}
\def\mC{\bm{C}}
\def\mD{\bm{D}}
\def\mG{\bm{G}}
\def\mO{\bm{0}}
\def\mP{\bm{P}}
\def\mQ{\bm{Q}}
\def\mS{\bm{S}}
\def\mX{\bm{X}}
\def\mY{\bm{Y}}
\def\mId{\bm{I}}
\def\mH{\bm{H}}
\def\mV{\bm{V}}
\def\mZ{\bm{Z}}
\def\m0{\bm{0}}
\def\vx{\bm{x}}
\def\vO{\bm{0}}
\def\vy{\bm{y}}
\def\vc{\bm{c}}
\def\vb{\bm{b}}
\def\vd{\bm{d}}
\def\ve{\bm{e}}
\def\vm{\bm{m}}
\def\vs{\bm{s}}
\def\vz{\bm{z}}
\def\vv{\bm{v}}
\def\v0{\bm{0}}
\newtheorem{remark}{Remark}
\newtheorem{example}{Example}
\newtheorem{proposition}{Proposition}
\newtheorem{lemma}{Lemma}
\newtheorem{definition}{Definition}
\newtheorem{theorem}{Theorem}
\newtheorem{corollary}{Corollary}
\newtheorem{cipher}{Cipher}
\begin{document}

\begin{frontmatter}

\title{Generalized Inverse Based Decoding}

\author[uaic]{Ferucio Lauren\c tiu \c Tiplea}
\ead{ferucio.tiplea@uaic.ro}
\author[uav,litis]{Vlad Florin Dr\u agoi}
\ead{vlad.dragoi@uav.ro}


\address[uaic]{Department of Computer Science, 
               ``Alexandru Ioan Cuza'' University of Ia\c si, Romania}

\address[uav]{Department of Computer Science, 
              Aurel Vlaicu University of Arad, Romania}

\address[litis]{LITIS Lab, Université de Rouen, 
                Avenue de l’Universitée, 
                76800 Saint-Étienne-du-Rouvray, France}

\begin{abstract}
The concept of Generalized Inverse based Decoding (GID) is introduced, as 
an algebraic framework for the syndrome decoding problem (SDP) and 
low weight codeword problem (LWP). 
The framework has ground on two characterizations by generalized 
inverses (GIs), one for the null space of a matrix and the other for the 
solution space of a system of linear equations over a finite field.
Generic GID solvers are proposed for SDP and LWP. 
It is shown that information set decoding (ISD) algorithms,
such as Prange, Lee-Brickell, Leon, and Stern's algorithms, are 
particular cases of GID solvers. 
All of them search GIs or elements of the null 
space under various specific strategies. 
However, as the paper shows the ISD variants 
do not search through the entire space, while our solvers do even 
when they use just one Gaussian elimination. 
Apart from these, our GID framework clearly shows how each ISD 
algorithm, except for Prange's solution, can be used as an SDP or LWP solver. 
A tight reduction from our problems, viewed as optimization problems, 
to the MIN-SAT problem is also provided. 
Experimental results show a very good behavior of the GID solvers. 
The domain of easy weights can be reached by a very few iterations 
and even enlarged. 

\end{abstract}

\begin{keyword}
Syndrome decoding, low weight codeword, 
information set decoding, generalized inverse.
\end{keyword}

\end{frontmatter}

\tableofcontents

\section{Introduction}

\paragraph{McEliece's cryptosystem}

Rapid evolution of quantum computing \cite{LGBNA2021,Pres2018,TaFu2019}, 
as well as Shor's famous quantum algorithm for discrete logarithm 
and factorization \cite{Shor1994} are urging public key cryptography 
to move away from number theoretic based solutions. Initiated in 
2015 by the NIST, the post-quantum standardization process searches 
for quantum secure techniques for key exchange and digital signatures. 
Round 3 candidates \cite{PQC} 
for key exchange/key encapsulation mechanism are 
either code-based or lattice-based solutions. Amongst them, we 
discover one of the oldest public key encryption scheme, proposed 
in 1978 by McEliece \cite{McEl1978}. McEliece had the idea to generate 
a linear code that admits an efficient decoding algorithm 
(the private key) and to mask its structure (the masked code is 
the public key). Under the assumption that the public code is 
indistinguishable from a random code, breaking the scheme resumes 
to solving the syndrome decoding problem (SDP) for random codes, 
which is NP-complete \cite{BeET1978}. So, until one proves that 
there exist NP-complete problems that can be solved in polynomial 
time by quantum computers (such a result would be major breakthrough 
in complexity theory \cite{BeVa1997,Simo1994}) 
the McEliece scheme is considered quantum secure. 

As being the foundation of the McEliece scheme, a significant 
interest in finding algorithms for SDP emerged. Also, having 
a good estimation of the work factor required by algorithms for 
SDP is mandatory to asses the security level of the aforementioned scheme \cite{EsBe2021,HaSe2013,BBCPS2019,Pete2010}. 

\paragraph{Information set decoding}

Intiated in 1962 by E. Prange \cite{Pran1962}, Information Set Decoding 
(ISD) is a well-known technique for solving several fundamental problems 
in coding theory, e.g., SDP, low weight codeword 
problem (LWP), and even code equivalence problem (CEP) (see \cite{HuKS2021}). 
When $\bF_q=\bF_2$ both SDP and LWP are NP-Complete (see \cite{BeET1978} 
for SDP and \cite{Vard1997} for LWP). Up-to-date, for solving SDP and LWP, 
the latest variants of ISD \cite{BJMM2012,MaOz2015,BoMa2018} are the most 
efficient techniques, in term of 
time complexity. The decisional SDP takes as input the parity-check matrix 
$\mH\in\cM_{n-k,n}(\bF_q)$ of a linear code $\cC$ over a finite field 
with $q$ elements $\bF_q$, a syndrome vector $\vs\in\bF_q^{n-k}$ and an 
integer $t$, and asks if there is a solution to the equation $\mH\vx=\vs$ 
satisfying $\Hw{\vx}\leq t,$ where ``$\Hw{\vx}$'' denotes the Hamming weight of the 
vector $\vx$. In SDP, syndrome vectors are always non-zero. When $\vs=\mO$ 
the problem becomes LWP. The main idea behind the original ISD technique 
(Prange's approach) 
is to pick a sufficiently large set of error-free coordinates such that 
the corresponding columns of $\mH$ form an invertible submatrix. This is equivalent 
to computing two matrices $\mP\in\cGL_{n-k}(\bF_q)$ and $\mQ\in\cS_n(\bF_q)$ 
such that 
$\mP\mH\mQ=\begin{pmatrix} \mV & \mId_{n-k} \end{pmatrix}.$ This gives us
$\vx=\mQ\begin{pmatrix} \mO\\\mP\vs \end{pmatrix}.$ The correct information 
set for SDP is found when $\Hw{\mP\vs}\leq t.$ In other words, this procedure 
will stop eventually, as there is a permutation which sends the support of 
the solution outside the information set. 
Relaxation of certain architectural constraints of Prange's 
algorithm or the addition of optimizations, mainly based on the birthday 
paradox (also known as the meet-in-the-middle approach), has led to many 
improvements  
\cite{LeBr1988,Ster1988,Leon1988,Dume1989,Dume1991,CoGF1991,CaCh1998,FiSe2009,BeLP2011,MaMT2011,BJMM2012,MaOz2015,BoMa2018}.

The difficulty of solving SDP highly depends on the range of values for 
the parameter $t.$ In a cryptographic context, it is frequent to select hard 
instances, i.e., where $t$ is close to the Gilbert-Varshamov bound or 
sub-linear in the code-length \cite{BeET1978,Gilb1952,Vars1957}. 
When $t$ is linear in $n$, optimizations 
to Prange's algorithm have better complexity results even in the first term 
in the exponent. However, when $t$ is sub-linear in $n$, which is the case 
of all NIST code-based submissions, the advantage of all the improvements 
vanish asymptotically. To be more precise, for $t=o(n)$ and $n\to\infty$ 
the work factor of existing algorithms for SDP equals 
$2^{-t\log(1-k/n)(1+o(1))}$ \cite{CaSe2016}. 

It worths mentioning that ISD is not the only technique for solving SDP. 
For example, statistical decoding has a quite different approach 
\cite{FoKI2007,Jabr2001,Over2006,DeTJ2017}. However, it does not achieve
performance comparable to even the simplest ISD techniques, e.g., 
Prange's algorithm. 

Both SDP and LWP can be seen as particular cases of some well-known 
generic problems, Coset Weight Problem (CWP) and Subspace Weight Problem 
(SWP) \cite{BeET1978}. 
The difference between CWP and SDP, repectively between SWP and LWP, 
resides in the input matrix $\mA$ which for CWP and SWP is an arbitrary matrix 
from $\cM_{m,n}(\bF).$ Notice that when $m=n-k\leq n$ and $\rank(\mA)=n-k$, 
CWP becomes SDP. In order to stay as general as possible, in this 
paper we will present solutions for CWP/SWP and restrict to full 
rank matrices when discussions move towards coding theory.  

\paragraph{Generalized inverse of a matrix}

Since one has to solve a system of linear equations to find a solution 
for CWP, the idea of computing the inverse of the matrix $\mA$ comes 
natural in mind. However, as $\mA$ is not square, we can not apply 
this technique here.  
Nevertheless, the concept of matrix inverse exists in the case of 
non-square matrices. 
It is known as \textit{generalized inverse} (GI). Given a matrix 
$\mA\in\cM_{m,n}(\bF)$, a GI for $\mA$ is a matrix $\mX\in\cM_{n,m}(\bF)$ satisfying $\mA\mX\mA=\mA.$ Several types of 
inverses are known, such as, reflexive, normalized, and 
pseudo-inverse (or Moore-Penrose inverse \cite{Moor1920,Penr1955}). 
In particular, the Moore-Penrose inverse is a helpful tool 
when minimum norm solutions are required over the field of 
real or complex numbers. However, when moving to finite fields, 
things change a lot, mainly due to the geometrical properties 
of the scalar product. Results regarding the GI 
in arbitrary and finite fields exist 
\cite{Pear1968,Fult1978,BeGr2006} and are going to be used 
and extended here in the context of linear codes. 

There were several attempts to use GI of a 
matrix in cryptography and coding theory
\cite{WuDa1998,Sun2001,NgDa2013,NgDa2017,Fini2005}.
In 1998, Wu and Dawson \cite{WuDa1998} have proposed a public-key
cryptosystem based on GIs, but three years later 
it was cryptanalyzed \cite{Sun2001}. 
Dang and Nguyen \cite{NgDa2013,NgDa2017} have used 
pseudo-inverses, the strongest form of a GI, to
design key exchange protocols and protocols for privacy-preserving 
auditing data in cloud. 

The only reference of GIs with respect to SDP is by Finiasz \cite{Fini2005} 
($\mH$ is the parity-check matrix,
$\mS$ is the syndrome, and the threshold is $w$): 
\begin{quote}
``For instance, when $w$ is 
larger than $n/2$, solving SD becomes easy, as computing a 
pseudo-inverse $\mH^{-1}$ of $\mH$ and computing $\mH^{-1}\cdot\mS$ 
will return a valid solution with large probability. However, for 
smaller values of $w$, when a single solution exists, finding it 
becomes much harder.'' 
\end{quote}

The pseudo-inverse of a matrix, when it exists, is unique. Its use 
in enumerating a space of possible values is doomed to failure. 
But if the inverse concept is relaxed, we can broaden the search 
spectrum, and things become affordable. But the question is: 
how affordable? Can we list the entire possible solution space?
This is the question our paper wants to answer.

\paragraph{Contributions}
We discuss below the contributions that our work makes.

\smallskip
\textbf{GI based solvers for CWP and SWP.}
Our first contribution is to propose an algebraic formalism based on the 
GI of a matrix to address both CWP and SWP. This formalism 
allows us to have a unified vision of the two problems, and it provides the 
main tools for understanding all the algorithmic improvements for solving 
CWP and SWP. 

We begin by a careful inspection of the solutions of a linear system of 
equations $\mA\vx=\vb$ with $\vb\not=\mO$, and prove that all its solutions 
can be obtained only by GIs. More precisely, we show that 
\begin{equation*}
\nonumber
  \{\vx\in\bF_q^n\;|\;\mA\vx=\vb\}=\{\mX\vb\;|\;\mX \text{ is a GI of }\mA\}.
\end{equation*}
This characterization allows us to attack CWP in a very direct way:
sample GIs (by means of some strategy) until a solution with the desired 
Hamming weight is reached. 
For example, one could fix a transformation $(\mP,\mQ)\in \cGL_r(\bF)\times \cS_n(\bF)$
      with $\mP\mA\mQ=\begin{pmatrix} \mV & \mId_m\end{pmatrix}$ for some
      $\mV$, and then search solutions of the form $\mX\vb$, where  
    $$\mX\in \left\{\mQ\begin{pmatrix} \mZ \\ \mId_r-\mV\mZ\end{pmatrix}\mP 
           \mid \mZ\in\cM_{n-m,m}(\bF)\right\}.$$ 
This method covers the whole space of solutions of the system $\mA\vx=\vb$. 

For SWP, we prove first that the null space of $\mA$ can be
characterized by
\begin{equation*}
\nonumber
  \{\vx\in\bF_q^n\;|\;\mA\vx=\mO\}=\{(\mY-\mX)\vb\;|\;\mY \text{ is a GI of }\mA\},
\end{equation*}
where $\mX$ ($\vb$, resp.) is an arbitrary but fixed GI of $\mA$ (non-zero vector, resp.).
Thus, this characterization allows us to design a generic algorithm for SWP as 
the one above: fix first a GI $\mX$ of $\mA$ and a non-zero vector $\vb$, and 
then sample GIs $\mY$ of $\mA$ until a solution with the desired Hamming 
weight is reached. The sampling of GIs is with respect to some strategy. 
For instance, if we decompose $\mA$ into 
$\mP\mA\mQ=\begin{pmatrix} \mV & \mId_m \end{pmatrix}$, then the null
space of $\mA$ is 
\begin{equation*}
        \left\{\mQ\begin{pmatrix} \mZ \\ -\mV\mZ \end{pmatrix}\mP\vb
               \bigm| \mZ\in\cM_{n-m,m}(\bF)
        \right\}.
\end{equation*}      
So, the sampling can be on arbitrary matrices $\mZ$. 

We will use the terminology {\em GI based Decoding} (GID) 
to refer to any of the GI-based techniques presented above, and 
{\em GID solver} for any algorithm that falls under it.

\smallskip
\textbf{Information set decoding versus GID.} 
Our GID technique works as a common denominator for many existing 
information set decoding techniques, such as Prange, Lee-Brickell, Leon, 
Stern, Finiasz-Sendrier (and probably all). It explains the essence of 
all these methods in a very clear and unified way. 
For instance, we show in the paper that Prange's algorithm computes 
particular GIs until it finds the desired solution, without 
covering the entire space of solutions. 
More exactly, given $\mH\in\cM_{n-k,n}(\bF)$ a parity-check matrix of 
a linear code and a syndrome $\vs\in\bF^{n-k}$, Prange's algorithm generates 
solutions to the equation $\mH\vx=\vs$ of the form $\mX\vs$, where 
\begin{equation*}
    \mX\in\left\{\mQ\begin{pmatrix} \mO \\ \mId_r
                  \end{pmatrix}\mP \mid
              (\mP,\mQ)\in \cGL_r(\bF)\times \cS_n(\bF),\ 
              (\exists\mV:\,\mP\mH\mQ=\begin{pmatrix} 
                                          \mV&  \mId_r
                                       \end{pmatrix})\right\}.
\end{equation*}

However, as we prove in the paper, $\mP$ must be in $\cGL_m(\bF)$
to cover the entire space of solutions. The same holds for the other
ISD techniques discussed in  paper, and probably for all techniques
that share Prange's idea. 

In terms of GID, each ISD technique is just a strategy to search a 
partial subspace of the space of solutions or of the null space. This
view allows us to easily convert each such ISD technique into one 
working for SDP or LWP.

\smallskip
\textbf{A tight reduction to MIN-SAT.} 
Both CWP and SWP can be viewed as decision problems associated to two 
optimization problems, namely the {\em minimum coset weight problem} 
(MIN-CWP) and the {\em minimum subset weight problem} (MIN-SWP). 
Our GI-based approach allows to tightly reduce these optimization
problems (MIN-CWP and MIN-SWP) to the well-known MIN-SAT problem, when
$\bF=\bF_2$. 
As the reduction is very tight, we expect many techniques working for MIN-SAT 
to apply to the two problems.

\smallskip
\textbf{Reaching easy weights by means of GID.}
Our simulations have shown that for small length codes, GID solvers  
behave very similar to ISD decoders in terms of performance. 
We have also noticed through simulations that there is an interval of 
Hamming weights where the GID solvers manage to efficiently find solutions 
for CWP and SWP. For example 
when $\bF=\bF_2$ the interval is symmetric and centered in $n/2$ (see \cite{DAST2019}). 
Our simulations suggest that it is rather easy in general to find solutions 
within this range. However, we know that hard instances exist even for this 
interval (see for example \cite{Vard1997}). 
Our simulations show that for codes of length up to $n=3000$, with just  
one $(\mP,\mQ)$ decomposition, we have reached solutions with Hamming weights in the 
range $[r\frac{q-1}{q}-\sqrt{n},r\frac{q-1}{q}+n-r+\sqrt{n}]$ in only a few 
seconds on an ordinary laptop computer. 


\paragraph{Paper organization}

The article begins by setting the notation and basic definitions from 
coding theory (next section). Section \ref{GI} is dedicated to the
GI. The two central problems, CWP and SWP, are 
treated in Section \ref{CWGI}, where two generic GID solvers for them
are presented.  
Moving forward to ISD, Section \ref{ISD} starts with some historical 
considerations. After that, its focus is on the first ISD decoder, 
i.e., Prange's algorithm (\ref{ISD1}). Till the end of Section \ref{ISD} 
several variants of ISD are considered within the framework of GID. 
The GI allows us to make a closed reduction from CWP 
and SWP, viewed as optimization problems, to the well-known MIN-SAT 
problem (Section \ref{GID-CSP}). 
Section \ref{ExpRez} considers some practical issues, by providing 
experimental tests on a variety of code parameters.        

\section{Preliminaries}\label{sec:Prelim}

We fix the basic notation on linear algebra and coding 
theory that we will use in the paper (for details, the reader 
is referred to the standard textbooks such as 
\cite{Roma2007,Gent2017,Roth2006,HuKS2021}).

Generic fields are denoted by $\bF$. When we want to emphasize 
that a field is finite and has the order $q$, we will write 
$\bF_q$. $\bF^n$ stands for the $n$-dimensional vector space 
over $\bF$. The vectors of $\bF^n$ will be denoted by lowercase 
letters, such as $\vx$, and written in column form. 
The $i$th element of $\vx\in\bF^n$ is denoted $\vx(i)$, where $1\leq i\leq n$,
and the support of $\vx$ is  
$\supp{\vx}=\{i\mid 1\leq i\leq n,\,\vx(i)\not =0\}$.
The cardinality of $\supp{\vx}$ is the Hamming weight of $\vx$.
We shall simply denote this as $\Hw{\vx}$. 
If $\cI$ is a non-empty subset of $\{1,\ldots,n\}$, $\vx_{\cI}$ stands 
generally for the restriction of $\vx$ to $\cI$, that is, the vector 
of size $|\cI|$ (the cardinality of $\cI$) that is obtained
from $\vx$ by removing all entries on positions outside $\cI$. 
The operator ``$|\cdot|$'' is used both for the Hamming weight and the 
cardinality of a set. However, the distinction will always be clear 
from the context.  

The set of $m\times n$ matrices with elements in $\bF$ is denoted 
$\cM_{m,n}(\bF)$. Matrices will be denoted by uppercase bold 
letters, such as $\mA$. $\mA(i,j)$ denotes the element of $\mA$ at 
the intersection of row $i$ and column $j$. 
$\mId_r$ stands for the identity matrix of size $r$, and 
$\mId_{m,n,r}$ is $\mId_r$ extended with zeroes to an 
$m\times n$ matrix, i.e., 
$\mId_{m,n,r}=\begin{pmatrix} \mId_r & \mO \\ \mO & \mO \end{pmatrix}$. 
$\cGL_m(\bF)$ denotes the general linear group of order $m$
over $\bF$ (i.e., the group of all invertible matrices 
$\mA\in\cM_{m,m}(\bF)$). Its subgroup consisting of 
permutation matrices is denoted $\cS_n(\bF)$. 
$\pi_{\mQ}$ stands for the permutation induced by 
$\mQ\in\cS_n(\bF)$, and $\pi([i,j])$ is the image of the interval 
$[i,j]$ through the permutation $\pi$. 

As usual, $\mA^t$ ($\mA^{-1}$, $\rank(\mA)$) 
stands for the transpose (inverse, rank) of $\mA\in\cM_{m,n}(\bF)$. 
The {\em range} ({\em null space}) of $\mA$ is 
$\cR(\mA)=\{y\in\bF^n \mid \exists\vx\in\bF^m:\,\vx^t\mA=\vy^t\}$
($\cN(\mA)=\{\vx\in\bF^n \mid \mA\vx=\vO\}$). 
We will use $\langle\mA\rangle$ to denote the vector space spanned by $\mA.$

A linear $[n,k]$ code over $\bF_q$ is a vector subspace $\cC$ of 
$\bF_q^n$ of dimension $k$. 
Any matrix $\mG\in\cM_{k,n}(\bF_q)$ whose rows form a basis 
for $\cC$ is a generator matrix for $\cC$. 
A parity-check matrix for $\cC$ is a generator matrix $\mH$ 
for the dual code $\cC^\perp$.

\section{The GI of a matrix over arbitrary fields}
\label{GI}

The GI of a matrix has been much studied over the 
fields of real and complex numbers. Not all the results valid in this 
context remain valid when moving to an arbitrary field, especially 
to finite fields. Consequently, in this section, we will recall some 
results that are valid for matrices over arbitrary fields, and when 
needed, we will specialize them to finite fields. 
We will mainly follow \cite{BeGr2006,Pear1968,Fult1978}, but we draw 
attention to the fact that some results will be presented in our own 
approach, which we consider appropriate for the case of finite fields.

\subsection{Definitions and existence}\label{GI1}

\begin{definition}
Let $\bF$ be an arbitrary field and $\bm{A}\in\cM_{m,n}(\bF)$ 
be a matrix. A {\em GI of $\mA$} is any matrix 
$\mX\in\cM_{n,m}(\bF)$ that fulfills 
\begin{equation}\label{GI-Eq01}
\mA\mX\mA=\mA.
\end{equation}
\end{definition}

There are specialized cases of GI, but we do not 
mention them here because they are not used in our paper.

From the definition one can easily see that $\mA^{-1}$ is the only 
GI of $\mA$ when $\mA$ is non-singular. 
That is, in such a case, the GI exists and is unique. 
Before moving on to the analysis of the existence of the generalized 
inverse in the general case, let us analyze in more detail its 
definition and the connection with solving systems of linear 
equations. Let $\GI(\mA)$ stand for the set of GIs 
of $\mA$.

\begin{theorem}[\cite{BeGr2006}]\label{GI-T01}
Let $\mA\in\cM_{m,n}(\bF)$ and $\mX\in\cM_{n,m}(\bF)$. Then, 
  $$\mX\in \GI(\mA)\ \ \Lra\ \ 
    (\forall \vb\in \cR(\mA))(\mX\vb \text{ is a solution to } \mA\vx=\vb).$$ 
\end{theorem}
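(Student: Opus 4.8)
The plan is to prove the two implications separately, each by a one-line matrix computation, using only the fact that $\cR(\mA)$ is precisely the set of right-hand sides for which $\mA\vx=\vb$ is consistent; that is, $\vb\in\cR(\mA)$ if and only if $\vb=\mA\vy$ for some $\vy\in\bF^n$. No finiteness of $\bF$ is needed, which is consistent with the scope of this section.

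For the forward implication, I would assume $\mX\in\GI(\mA)$, i.e. $\mA\mX\mA=\mA$, and fix an arbitrary $\vb\in\cR(\mA)$, written as $\vb=\mA\vy$. Then $\mA(\mX\vb)=\mA\mX\mA\vy=(\mA\mX\mA)\vy=\mA\vy=\vb$, which is exactly the assertion that $\mX\vb$ is a solution of $\mA\vx=\vb$. Since $\vb\in\cR(\mA)$ was arbitrary, the right-hand side of the equivalence holds.

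For the converse, I would assume that $\mA\mX\vb=\vb$ holds for every $\vb\in\cR(\mA)$, and deduce $\mA\mX\mA=\mA$. The cleanest route: for each $\vy\in\bF^n$ the vector $\mA\vy$ lies in $\cR(\mA)$, so applying the hypothesis with $\vb=\mA\vy$ gives $\mA\mX(\mA\vy)=\mA\vy$, i.e. $(\mA\mX\mA-\mA)\vy=\vO$ for all $\vy\in\bF^n$; a matrix that annihilates every vector of $\bF^n$ is the zero matrix, so $\mA\mX\mA=\mA$. (Equivalently, one may instantiate $\vb=\mA\ve_j$, the $j$-th column of $\mA$, for $j=1,\dots,n$, to conclude that the $j$-th column of $\mA\mX\mA$ equals the $j$-th column of $\mA$.)

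There is no genuine obstacle in this proof. The only point that deserves a moment's care is bookkeeping of dimensions: the quantifier ``$\forall\,\vb\in\cR(\mA)$'' must be read with $\cR(\mA)$ as the column space $\{\mA\vy\mid\vy\in\bF^n\}$, so that, given $\mA\in\cM_{m,n}(\bF)$ and $\mX\in\cM_{n,m}(\bF)$, the expressions $\mX\vb\in\bF^n$ and $\mA\mX\vb\in\bF^m$ typecheck and the statement ``$\mX\vb$ solves $\mA\vx=\vb$'' makes sense. Once that is settled, both directions are immediate rewrites of the defining identity $\mA\mX\mA=\mA$, and the equivalence follows.
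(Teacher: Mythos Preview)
Your proof is correct and is exactly the standard argument for this classical characterization of generalized inverses. The paper itself does not supply a proof of this theorem; it merely cites it from \cite{BeGr2006}, so there is nothing to compare against beyond noting that your argument is the one found in that reference.

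One small remark: you were right to flag the typing of $\cR(\mA)$. The paper's Preliminaries section literally defines $\cR(\mA)=\{\vy\in\bF^n \mid \exists\vx\in\bF^m:\,\vx^t\mA=\vy^t\}$, which is the \emph{row} space and lives in $\bF^n$; with that reading the theorem would not typecheck. The intended meaning, as you inferred and as used throughout the paper (and in \cite{BeGr2006}), is the column space $\{\mA\vy\mid \vy\in\bF^n\}\subseteq\bF^m$. With that understood, both implications are the immediate rewrites of $\mA\mX\mA=\mA$ that you gave.
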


\begin{theorem}[\cite{BeGr2006}]\label{GI-T02}
Let $\mA\in\cM_{m,n}(\bF)$, $\vb\in \cR(\mA)$, and $\mX\in \GI(\mA)$. 
Then, $\vx$ is a solution to $\mA\vx=\vb$ if and only if 
$\vx=\mX\vb + (\mId-\mX\mA)\vc$, for some $\vc\in \bF^n$.  
\end{theorem}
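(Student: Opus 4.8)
The plan is to prove the two implications separately, each by a short matrix manipulation resting on the defining identity $\mA\mX\mA=\mA$ together with the fact, supplied by Theorem~\ref{GI-T01}, that $\mA\mX\vb=\vb$ whenever $\vb\in\cR(\mA)$ and $\mX\in\GI(\mA)$ (equivalently, writing $\vb=\mA\vy$, this is $\mA\mX\mA\vy=\mA\vy$).

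For the ``if'' direction I would start from $\vx=\mX\vb+(\mId-\mX\mA)\vc$ and multiply on the left by $\mA$, getting $\mA\vx=\mA\mX\vb+(\mA-\mA\mX\mA)\vc$. The second summand is $\mO$ by the GI property, so $\mA\vx=\mA\mX\vb=\vb$, the last equality because $\vb\in\cR(\mA)$ (invoke Theorem~\ref{GI-T01}). Hence every vector of that form is a solution, for any choice of $\vc\in\bF^n$.

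For the ``only if'' direction, given a solution $\vx$ of $\mA\vx=\vb$, I would exhibit the witness $\vc=\vx$ itself: then $\mX\vb+(\mId-\mX\mA)\vx=\mX\vb+\vx-\mX(\mA\vx)=\mX\vb+\vx-\mX\vb=\vx$, using $\mA\vx=\vb$ in the middle step. This shows $\vx$ has the claimed parametrized form.

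There is no genuine obstacle here: the statement follows directly from the algebraic identity $\mA\mX\mA=\mA$ and from Theorem~\ref{GI-T01}. The only point deserving a moment of care is the equality $\mA\mX\vb=\vb$ in the ``if'' direction, which really uses $\vb\in\cR(\mA)$; for a right-hand side outside $\cR(\mA)$ the system has no solution at all, so both directions of the equivalence would be vacuous or false without that hypothesis.
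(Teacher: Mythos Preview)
Your proof is correct and is precisely the standard argument. Note, however, that the paper does not supply its own proof of this theorem: it merely cites \cite{BeGr2006} and states the result without a proof environment, so there is nothing in the paper to compare against beyond the statement itself.
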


It is also well known that $\cR(\mId-\mX\mA)=\cN(\mA)$
(see, for instance, \cite{BeGr2006,Rohd2003}). Therefore, 
by Theorem \ref{GI-T02}, $\mX\vb$ is a solution to the system $\mA\vx=\vb$
(when it is consistent) and any other solution can be obtained
by adding arbitrary elements from the null space of $\mA$ to $\mX\vb$. 

GIs exist for all matrices over arbitrary fields
\cite{Pear1968,Fult1978}. A first step in showing this is based on the 
following theorem.

\begin{theorem}[\cite{BeGr2006}]\label{GI-T03}
Let $\mA\in\cM_{m,n}(\bF)$, $\mP\in\cGL_m(\bF)$,
and $\mQ\in\cGL_n(\bF)$. Then, the function
$f:\GI(\mA)\ra \GI(\mP\mA\mQ)$ given by $f(\mX)=\mQ^{-1}\mX\mP^{-1}$, 
for any $\mX\in \GI(\mA)$, is a bijection. 
\end{theorem}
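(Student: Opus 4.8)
The plan is to verify that $f$ is well-defined (i.e., that it really lands in $\GI(\mP\mA\mQ)$), then to exhibit an explicit two-sided inverse; bijectivity is then immediate. Because $\mP$ and $\mQ$ are invertible, nothing subtle happens, so the whole argument is a short chain of matrix identities.

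First I would check well-definedness. Given $\mX\in\GI(\mA)$, a direct computation gives
\[
  (\mP\mA\mQ)(\mQ^{-1}\mX\mP^{-1})(\mP\mA\mQ)=\mP(\mA\mX\mA)\mQ=\mP\mA\mQ,
\]
using $\mQ\mQ^{-1}=\mId_n$, $\mP^{-1}\mP=\mId_m$, and the defining equation $\mA\mX\mA=\mA$. Hence $f(\mX)=\mQ^{-1}\mX\mP^{-1}\in\GI(\mP\mA\mQ)$, so $f$ is a genuine function $\GI(\mA)\ra\GI(\mP\mA\mQ)$.

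Next I would introduce the candidate inverse $g:\GI(\mP\mA\mQ)\ra\GI(\mA)$ defined by $g(\mY)=\mQ\mY\mP$. To see $g$ is well-defined, observe that $\mA=\mP^{-1}(\mP\mA\mQ)\mQ^{-1}$, so the same computation as above (applied with $\mA,\mP,\mQ$ replaced by $\mP\mA\mQ,\mP^{-1},\mQ^{-1}$) shows $g(\mY)\in\GI(\mA)$; concretely,
\[
  \mA(\mQ\mY\mP)\mA=\mP^{-1}\big((\mP\mA\mQ)\mY(\mP\mA\mQ)\big)\mQ^{-1}
  =\mP^{-1}(\mP\mA\mQ)\mQ^{-1}=\mA .
\]
Finally, $g$ and $f$ are mutually inverse: $g(f(\mX))=\mQ(\mQ^{-1}\mX\mP^{-1})\mP=\mX$ and $f(g(\mY))=\mQ^{-1}(\mQ\mY\mP)\mP^{-1}=\mY$, again just cancelling the invertible factors $\mP$ and $\mQ$. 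Therefore $f$ is a bijection, with inverse $g$.

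I do not expect any real obstacle here: the only piece of content is the one-line identity $(\mP\mA\mQ)(\mQ^{-1}\mX\mP^{-1})(\mP\mA\mQ)=\mP(\mA\mX\mA)\mQ$, and the rest is purely formal bookkeeping. An alternative packaging would prove injectivity of $f$ directly by left-multiplying by $\mQ$ and right-multiplying by $\mP$, and surjectivity by checking that $\mQ\mY\mP$ is a preimage of $\mY$; but presenting $g$ as an explicit two-sided inverse is cleaner and lets the well-definedness computation do double duty for both halves.
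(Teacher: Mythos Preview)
Your proof is correct and essentially the same as the paper's. The paper states that well-definedness and injectivity are straightforward and then proves surjectivity by showing that $\mX=\mQ\mY\mP\in\GI(\mA)$ whenever $\mY\in\GI(\mP\mA\mQ)$, which is exactly your map $g$; you simply package the same computation as an explicit two-sided inverse rather than as separate injectivity and surjectivity checks (the alternative you yourself mention in the last paragraph).
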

\begin{proof}
It is straightforward to check that $f$ is well-defined and 
one-to-one. It remains to prove that any GI $\mY$ 
of $\mP\mA\mQ$ is of the form $\mQ^{-1}\mX\mP^{-1}$, for some 
$\mX\in \GI(\mA)$. 

If $\mY\in \GI(\mP\mA\mQ)$, then $\mP\mA\mQ\mY\mP\mA\mQ=\mP\mA\mQ$, 
which is equivalent to $\mA\mQ\mY\mP\mA=\mA$, since $\mP$ and $\mQ$ 
are non-singular matrices. However, this shows that 
$\mX=\mQ\mY\mP\in \GI(\mA)$ and $\mY=\mQ^{-1}\mX\mP^{-1}$.
\end{proof}

\begin{corollary}\label{GI-C01}
Let $\mA,\mB\in\cM_{m,n}(\bF)$. If $\mP\mA\mQ=\mB$ for some 
matrices $\mP\in\cGL_{m}(\bF)$ and $\mQ\in\cGL_{n}(\bF)$, 
then:
\begin{enumerate}
\item $\GI(\mA)=\{\mQ\mX\mP\mid \mX\in \GI(\mB)\}$;
\item $|\GI(\mA)|=|\GI(\mB)|$. 
\end{enumerate}
\end{corollary}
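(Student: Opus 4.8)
The plan is to derive Corollary \ref{GI-C01} directly from Theorem \ref{GI-T03} by observing that the hypothesis $\mP\mA\mQ=\mB$ is precisely the situation covered by that theorem, only written with the roles of $\mA$ and $\mB$ swapped. First I would note that $\mP\in\cGL_m(\bF)$ and $\mQ\in\cGL_n(\bF)$ are invertible, so the equation $\mP\mA\mQ=\mB$ is equivalent to $\mA=\mP^{-1}\mB\mQ^{-1}$. Applying Theorem \ref{GI-T03} with $\mB$ in the role of the base matrix and with the invertible matrices $\mP^{-1}$ and $\mQ^{-1}$ (so that $\mP^{-1}\mB\mQ^{-1}=\mA$), we obtain a bijection $f:\GI(\mB)\ra\GI(\mA)$ given by $f(\mX)=(\mQ^{-1})^{-1}\mX(\mP^{-1})^{-1}=\mQ\mX\mP$ for every $\mX\in\GI(\mB)$.

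From the existence of this bijection, part (1) is immediate: since $f$ is a bijection from $\GI(\mB)$ onto $\GI(\mA)$, its image is all of $\GI(\mA)$, and that image is exactly $\{\mQ\mX\mP\mid\mX\in\GI(\mB)\}$. Part (2) follows at once, because a bijection between two sets forces them to have the same cardinality; I would state this for finite $\bF$ in terms of actual counting, but it holds in general as equality of cardinal numbers.

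For completeness I would also sketch the forward inclusion independently, in case a reader wants to avoid invoking the bijectivity clause of Theorem \ref{GI-T03}: if $\mX\in\GI(\mB)$, then $\mB\mX\mB=\mB$, and substituting $\mB=\mP\mA\mQ$ gives $\mP\mA\mQ\mX\mP\mA\mQ=\mP\mA\mQ$; cancelling the invertible $\mP$ on the left and $\mQ$ on the right yields $\mA(\mQ\mX\mP)\mA=\mA$, so $\mQ\mX\mP\in\GI(\mA)$. The reverse inclusion uses the same computation run backwards together with the fact that $\mX\mapsto\mQ\mX\mP$ has the inverse $\mY\mapsto\mQ^{-1}\mY\mP^{-1}$.

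There is essentially no obstacle here: the corollary is a bookkeeping consequence of Theorem \ref{GI-T03}, and the only thing to be careful about is matching the invertible matrices correctly (using $\mP^{-1},\mQ^{-1}$ rather than $\mP,\mQ$) so that the direction of the bijection and the form $\mQ\mX\mP$ come out right. I would keep the proof to a few lines and simply point to Theorem \ref{GI-T03}.
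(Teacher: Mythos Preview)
Your proposal is correct and follows essentially the same approach as the paper: the paper's proof is the single line ``Apply Theorem \ref{GI-T03} to $\mB$ and $\mA=\mP^{-1}\mB\mQ^{-1}$,'' which is precisely your core argument of rewriting $\mP\mA\mQ=\mB$ as $\mA=\mP^{-1}\mB\mQ^{-1}$ and invoking the bijection of Theorem \ref{GI-T03}. Your additional direct verification of the inclusions is fine but unnecessary for the paper's purposes.
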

\begin{proof}
Apply Theorem \ref{GI-T03} to $\mB$ and $\mA=\mP^{-1}\mB\mQ^{-1}$. 
\end{proof}

\subsection{Computing GIs}\label{GI2}

To facilitate expression, a pair $(\mP,\mQ)$ of matrices as in 
Corollary \ref{GI-C01} will often be called a {\em transformation} of 
$\mA$. The first part of this corollary shows that the set 
$\GI(\mA)$ does not depend on the transformation we apply to $\mA$.
As a result, it suggests the following method for computing  
GIs of $\mA$:
\begin{itemize}
\item Transform the matrix $\mA$ through elementary operator
  matrices $\mP$ and $\mQ$ into a matrix $\mB=\mP\mA\mQ$ for which one 
  can easily compute GIs;
\item For each GI $\mX$ of $\mB$, $\mQ\mX\mP$ is a 
  GI of $\mA$. Besides, all GIs of 
  $\mA$ are obtained in this way.
\end{itemize}

As an example, one may use the canonical form of $\mA$, 
$\mP\mA\mQ=\mId_{m,n,r}$ \cite{Gent2017}. Thus, computing generalized 
inverses for $\mA$ is reduced to computing GIs for 
$\mId_{m,n,r}$. We present below some general constructions that 
also include this case. 
Even if the results are trivial to prove, we prefer to present them 
in the form of a proposition to highlight their usefulness further.

\begin{proposition}\label{GI-P01}
Let $\mA\in\cM_{m,n}(\bF)$ and $\mX\in\cM_{n,m}(\bF)$ be matrices.
\begin{enumerate}
\item If $\mA$ and $\mX$ are divided into blocks of appropriate sizes,
  $\mA=\begin{pmatrix} \mA_1 & \mA_2 \\ \mA_3 & \mA_4 \end{pmatrix}$ and
  $\mX=\begin{pmatrix} \mX_1 & \mX_2 \\ \mX_3 & \mX_4 \end{pmatrix}$, 
  respectively, then $\mX$ is a GI of $\mA$ if and
  only if the following matrix equations are fulfilled:
  \begin{equation}\label{GI-Eq02}
  \left\{
  \begin{aligned}
  (\mA_1\mX_1+\mA_2\mX_3)\mA_1+(\mA_1\mX_2+\mA_2\mX_4)\mA_3=\mA_1 \\ 
  (\mA_1\mX_1+\mA_2\mX_3)\mA_2+(\mA_1\mX_2+\mA_2\mX_4)\mA_4=\mA_2 \\
  (\mA_3\mX_1+\mA_4\mX_3)\mA_1+(\mA_3\mX_2+\mA_4\mX_4)\mA_3=\mA_3 \\
  (\mA_3\mX_1+\mA_4\mX_3)\mA_2+(\mA_3\mX_2+\mA_4\mX_4)\mA_4=\mA_4 
  \end{aligned}
  \right.
  \end{equation}
\item If $\mA$ and $\mX$ are divided into blocks of appropriate sizes,
  $\mA=\begin{pmatrix} \mA_1 & \mA_2 \end{pmatrix}$ and 
  $\mX=\begin{pmatrix} \mX_1 \\ \mX_2 \end{pmatrix}$, respectively, 
  then $\mX$ is a GI of $\mA$ if and
  only if the following matrix equations are fulfilled:
  \begin{equation}\label{GI-Eq03}
  \left\{
  \begin{aligned}
  (\mA_1\mX_1+\mA_2\mX_2)\mA_1=\mA_1 \\
  (\mA_1\mX_1+\mA_2\mX_2)\mA_2=\mA_2
  \end{aligned}
  \right.
  \end{equation}
\item If $\mA$ and $\mX$ are divided into blocks of appropriate sizes,
  $\mA=\begin{pmatrix} \mA_1 \\ \mA_2 \end{pmatrix}$ and
  $\mX=\begin{pmatrix} \mX_1 & \mX_2 \end{pmatrix}$, respectively, 
  then $\mX$ is a GI of $\mA$ if and
  only if the following matrix equations are fulfilled:
  \begin{equation}\label{GI-Eq04}
  \left\{
  \begin{aligned}
  \mA_1(\mX_1\mA_1+\mX_2\mA_2)=\mA_1 \\
  \mA_2(\mX_1\mA_1+\mX_2\mA_2)=\mA_2
  \end{aligned}
  \right.
  \end{equation}
\end{enumerate}
\end{proposition}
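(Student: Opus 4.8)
The plan is to prove all three parts by the same elementary device: expand the triple product $\mA\mX\mA$ in block form and invoke the fact that two block matrices with the same block shape are equal if and only if all corresponding blocks coincide. The phrase ``of appropriate sizes'' should be read as the requirement that the column partition of $\mA$ matches the row partition of $\mX$ and the row partition of $\mA$ matches the column partition of $\mX$; under this compatibility every block product written below is defined, and nothing beyond block-matrix arithmetic is needed.

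For part 1, I would first compute $\mA\mX$ as the $2\times 2$ block matrix whose blocks are formed by the usual row-by-column rule, for instance its $(1,1)$ block is $\mA_1\mX_1+\mA_2\mX_3$ and its $(1,2)$ block is $\mA_1\mX_2+\mA_2\mX_4$, and then right-multiply by $\mA$, again block-wise, to obtain $\mA\mX\mA$ as a $2\times 2$ block matrix whose $(1,1)$ block is $(\mA_1\mX_1+\mA_2\mX_3)\mA_1+(\mA_1\mX_2+\mA_2\mX_4)\mA_3$ and similarly for the remaining three blocks. Since $\mA$ has exactly this block shape, the equation $\mA\mX\mA=\mA$ is equivalent to the conjunction of the four resulting block identities, which are precisely the equations in \eqref{GI-Eq02}.

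Parts 2 and 3 are the degenerate instances of the same computation in which one of the two block dimensions is trivial. For part 2, $\mA=\begin{pmatrix}\mA_1 & \mA_2\end{pmatrix}$ has a single block row, so $\mA\mX=\mA_1\mX_1+\mA_2\mX_2$ is a single square block and $\mA\mX\mA=\begin{pmatrix}(\mA_1\mX_1+\mA_2\mX_2)\mA_1 & (\mA_1\mX_1+\mA_2\mX_2)\mA_2\end{pmatrix}$; matching blocks with $\mA$ gives \eqref{GI-Eq03}. For part 3 it is convenient to associate the triple product as $\mA(\mX\mA)$: since $\mX\mA=\mX_1\mA_1+\mX_2\mA_2$ is a single square block and $\mA=\begin{pmatrix}\mA_1\\\mA_2\end{pmatrix}$ has a single block column, $\mA(\mX\mA)=\begin{pmatrix}\mA_1(\mX_1\mA_1+\mX_2\mA_2)\\\mA_2(\mX_1\mA_1+\mX_2\mA_2)\end{pmatrix}$, and matching its two blocks with those of $\mA$ yields \eqref{GI-Eq04}. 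There is no genuine obstacle in any of this; the only thing requiring care is the bookkeeping of the rectangular block shapes, which is why I would state the partition-compatibility convention explicitly at the outset.
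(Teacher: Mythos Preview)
Your proposal is correct and matches the paper's own proof, which simply reads ``Directly from \eqref{GI-Eq01}'': both arguments amount to expanding $\mA\mX\mA$ block-wise and equating the resulting blocks with those of $\mA$. You have merely made explicit what the paper leaves implicit.
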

\begin{proof}
Directly from \eqref{GI-Eq01}. 
\end{proof}

\begin{example}\label{GI-E01}
We present below some cases of application of Proposition \ref{GI-P01}. 
\begin{enumerate}
\item If $\mA=\begin{pmatrix} \mId_r & \mO \\ \mO & \mO \end{pmatrix}$, 
  then the GIs of $\mA$ have the form 
  $\mX=\begin{pmatrix} \mId_r & \mX_2 \\ \mX_3 & \mX_4 \end{pmatrix}$, where 
  $\mX_2$, $\mX_3$, and $\mX_4$ are arbitrary matrices (of appropriate sizes)
  over $\bF$.
  When $\bF=\bF_q$ we have $|\GI(\mA)|=q^{mn-r^2}$ (see \cite{Fult1978});
\item If $\mA=\begin{pmatrix} \mId_r & \mA_2 \\ \mO & \mO \end{pmatrix}$, 
  then the GIs of $\mA$ have the form 
  $\mX=\begin{pmatrix} \mX_1 & \mX_2 \\ \mX_3 & \mX_4 \end{pmatrix}$, 
  where $\mX_1$, $\mX_2$, $\mX_3$, and $\mX_4$ are arbitrary matrices 
  (of appropriate sizes) over $\bF$ that satisfy $\mX_1+\mA_2\mX_3=\mId_r$.
\item If $\mA=\begin{pmatrix} \mA_1 & \mId_r \end{pmatrix}$, 
  then the GIs of $\mA$ have the form 
  $\mX=\begin{pmatrix} \mX_1 \\ \mX_2 \end{pmatrix}$, 
  where $\mX_1$ and $\mX_2$ are arbitrary matrices 
  (of appropriate sizes) over $\bF$ that satisfy $\mA_1\mX_1+\mX_2=\mId_r$.
\item If $\mA=\begin{pmatrix}\mId_r & \mO\end{pmatrix}$, 
  then the GIs of $\mA$ have the form  
  $\mX=\begin{pmatrix} \mId_r \\ \mX_2 \end{pmatrix}$, 
  where $\mX_2$ is a matrix (of appropriate size) over $\bF$. 
\item If $\mA=\begin{pmatrix} \mId_r \\ \mO\end{pmatrix}$,
  then the GIs of $\mA$ have the form  
  $\mX=\begin{pmatrix} \mId_r & \mX_2 \end{pmatrix}$, where 
  $\mX_2$ is a matrix (of appropriate size) over $\bF$. 
\end{enumerate}
\end{example}

Even if the set of GIs of a matrix $\mA$ is the 
same regardless of the transformation applied to the matrix, 
the equations that define them may be different. 
For instance, a matrix of rank $r$ can be transformed into the
form in Example \ref{GI-E01}(1) and in the form in 
Example \ref{GI-E01}(2). 
The equations that define the GI will be different, 
even if, in the end, we get the same set of GIs. 
But, as will be seen later, these equations will more or less 
facilitate working with GIs (enumeration, processing, 
convergence to solution).
Due to this, given a transformation
$(\mP,\mQ)$ for $\mA$, we will denote by $\GI_{\mP,\mQ}(\mA)$ 
the set of GIs of $\mA$ obtained from the 
GIs of $\mP\mA\mQ$. It is clear that
$\GI_{\mP,\mQ}(\mA)=\GI(\mA)$.

\section{The coset and subspace weight problems and the GI}
\label{CWGI}

This section will discuss possible applications of the generalized 
inverse in solving two closely related hard problems in coding theory. 
The first of them is the {\em coset weight} problem, and the 
second is the {\em subspace weight} problem, a subproblem of the 
first one.

\subsection{Generic GID solver for the coset weight problem}
\label{CWGI1}

The coset weight problem is as follows. 

\begin{problem}[framed]{Coset Weight Problem (CWP)}
Instance: & $\mA\in\cM_{m,n}(\bF)$, $\vb\in\bF^m$,
            and positive integer $t$, where $\bF$ is a finite field; \\
Question: & Is there any solution $\vx_0\in\bF^n$ to $\mA\vx=\vb$ such 
            that $\Hw{\vx_0}\leq t$?  
\end{problem}

In coding theory, CWP occurs in the context of 
{\em syndrome decoding}, where $\mA$ is a full rank matrix of size 
$m\times n$ with $m<n$ and $\vb$ is a syndrome. 
For this reason, it is also called the
{\em syndrome decoding} problem (more details about it are provided 
in Section \ref{ISD0}). 

CWP is NP-complete when $\bF=\bZ_2$ \cite{BeET1978}. 
However, both highlighting easy instances and constructing probabilistic 
polynomial-time algorithms to solve this problem can be of major 
importance when the problem is used to design secure cryptographic 
primitives.

In this section, we will analyze CWP through the generalized 
inverses of the matrix $\mA$. We will present the results, as much as 
possible, for the case of a general finite field $\bF$. But, where 
necessary, we will restrict the analysis to $\bF=\bZ_2$.

The main strategy is the following. Given a CWP instance $(\mA,\vb,t)$, 
we will compute GIs $\mX$ of the matrix $\mA$ and 
check the solution's weight. In fact, once a GI $\mX$ 
is computed, we have two approaches we can follow:
\begin{enumerate}
\item Consider only solutions $\vx=\mX\vb$;
\item Consider solutions $\vx=\mX\vb+(\mId-\mX\mA)\vc$, where 
      $\vc\in\bF^n$. 
\end{enumerate}

However, we will show that any solution to the system 
$\mA\vx=\vb$ can be expressed in the form $\mX\vb$, where 
$\mX\in\GI(\mA)$. 

\begin{theorem}\label{CWGI1-T01}
Let $\mA\in\cM_{m,n}(\bF_q)$ with full row rank and $\vb\in \cR(\mA)$
with $\vb\not=0$.  
Then, 
\begin{equation}
\nonumber
  \{\vx\in\bF_q^n\;|\;\mA\vx=\vb\}=\{\mX\vb\;|\;\mX\in\GI(\mA)\}.
\end{equation}
\end{theorem}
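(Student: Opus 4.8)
The plan is to establish the two inclusions separately. The inclusion ``$\supseteq$'' is immediate: if $\mX\in\GI(\mA)$, then since $\vb\in\cR(\mA)$ (which holds automatically here because $\mA$ has full row rank), Theorem~\ref{GI-T01} says that $\mX\vb$ solves $\mA\vx=\vb$. So the whole content lies in ``$\subseteq$'': every solution $\vx_0$ of $\mA\vx=\vb$ can be written as $\mX\vb$ for a suitable GI $\mX$ of $\mA$.

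For ``$\subseteq$'' I would argue by a rank-one perturbation of an arbitrary but fixed GI. Pick any $\mX_1\in\GI(\mA)$ (one exists, as recalled above), and set $\vz:=\vx_0-\mX_1\vb$. Using $\mA\mX_1\mA=\mA$ one gets $\mA\vz=\mA\vx_0-\mA\mX_1\mA\vx_0=\vb-\vb=\vO$, i.e.\ $\vz\in\cN(\mA)$ (this also follows from Theorems~\ref{GI-T01}--\ref{GI-T02}). Now choose $\vu\in\bF^m$ with $\vu^{t}\vb=1$; such a vector exists precisely because $\vb\neq\vO$ --- e.g.\ $\vu=\vb(j)^{-1}\ve_j$ for any coordinate $j$ with $\vb(j)\neq 0$ --- and this is the only place the hypothesis $\vb\neq\vO$ is used. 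Put $\mX:=\mX_1+\vz\vu^{t}\in\cM_{n,m}(\bF)$. Then $\mA\mX\mA=\mA\mX_1\mA+(\mA\vz)(\vu^{t}\mA)=\mA+\mO=\mA$ since $\mA\vz=\vO$, so $\mX\in\GI(\mA)$; and $\mX\vb=\mX_1\vb+\vz(\vu^{t}\vb)=\mX_1\vb+\vz=\vx_0$. This gives ``$\subseteq$'', and hence the equality.

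A second, more computational route --- worth keeping in mind because it meshes with the explicit parametrizations used later in the paper --- is to invoke Corollary~\ref{GI-C01} on a transformation $(\mP,\mQ)$ with $\mP\mA\mQ=\begin{pmatrix}\mV&\mId_m\end{pmatrix}$ (available since $\rank(\mA)=m$), so that by Example~\ref{GI-E01}(3) one has $\GI(\mA)=\{\mQ\begin{pmatrix}\mZ\\\mId_m-\mV\mZ\end{pmatrix}\mP\mid\mZ\in\cM_{n-m,m}(\bF)\}$. Writing $\vd:=\mP\vb\neq\vO$, the right-hand set of the theorem becomes $\{\mQ\begin{pmatrix}\mZ\vd\\\vd-\mV(\mZ\vd)\end{pmatrix}\mid\mZ\}$; because $\vd\neq\vO$ the map $\mZ\mapsto\mZ\vd$ is onto $\bF^{n-m}$, so this set equals $\{\mQ\begin{pmatrix}\vw\\\vd-\mV\vw\end{pmatrix}\mid\vw\in\bF^{n-m}\}$, and a direct change of variables $\vy=\mQ^{-1}\vx$ shows this is exactly the solution set of $\mA\vx=\vb$.

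I expect the only subtle point to be isolating where $\vb\neq\vO$ matters: without it the construction collapses (when $\vb=\vO$ there is no $\vu$ with $\vu^{t}\vb=1$, and $\mZ\mapsto\mZ\vb=\vO$ is certainly not onto), and indeed the statement is false for $\vb=\vO$ as soon as $n>m$, since then $\mX\vb=\vO$ for every $\mX$ while $\cN(\mA)\neq\{\vO\}$. Everything else is routine matrix algebra resting only on $\mA\mX\mA=\mA$ together with Theorems~\ref{GI-T01}--\ref{GI-T02} and Example~\ref{GI-E01}.
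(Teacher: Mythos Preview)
Your argument is correct. The rank-one perturbation $\mX=\mX_1+\vz\vu^{t}$ is a clean, direct construction, and your identification of where $\vb\neq\vO$ enters is exactly right.

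The paper does \emph{not} argue this way. It proves ``$\supseteq$'' as you do, and then shows the two sets have the same cardinality: the solution set has $q^{n-m}$ elements because $\mA$ has full row rank, and --- via the transformation $\mP\mA\mQ=\begin{pmatrix}\mV&\mId_m\end{pmatrix}$ and Example~\ref{GI-E01}(3) --- the set $\{\mX\vb\mid\mX\in\GI(\mA)\}$ is parametrized by $\mX_1\vb$ with $\mX_1\in\cM_{n-m,m}(\bF_q)$, which hits all of $\bF_q^{n-m}$ since $\vb\neq\vO$, so it also has $q^{n-m}$ elements. Your ``second route'' is essentially this parametrization, except you conclude by an explicit bijection rather than a count. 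What your first argument buys is that it never uses finiteness of the field (so it proves the statement over any $\bF$, not just $\bF_q$) and is fully constructive; what the paper's counting argument buys is brevity and a direct tie-in to the explicit parametrization of $\GI(\mA)$ used throughout the later sections.
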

\begin{proof}
The inclusion ``$\supseteq$'' follows simply from the fact that 
$\mX\vb$ is a solution to $\mA\vx=\vb$, for any $\mX\in\GI(\mA)$.
Showing that the two sets have the same number of elements will 
end the proof.

It is straightforward to verify that 
$\left|\{\vx\in\bF_q^n\;|\;\mA\vx=\vb\}\right|=q^{n-m}$, 
since $\mA$ has full rank.
 
Let 
$\mP\mA\mQ=\begin{pmatrix}
     \mV&\mId_{m}
 \end{pmatrix}$.  
be a transformation of $\mA$. 
By Corollary \ref{GI-C01}, the sets $\GI(\mA)$ and $\GI(\mP\mA\mQ)$ 
are isomorphic and hence we can restrict to evaluate 
$\left|\{\mX\vb\;|\;\mX\in\GI(\mP\mA\mQ)\}\right|$. 
Any GI of $\mP\mA\mQ$ has the form 
$\mX=\begin{pmatrix}
         \mX_1\\\mId_m-\mV\mX_1
     \end{pmatrix}$, 
where $\mX_1\in\cM_{n-m,m}(\bF_q)$ (Example \ref{GI-Eq01}(3)).
As $\vb\not=\mO$, $\mX_1\vb$ can take any value in $\bF_q^{n-m}$.
So, the number of solution 
$\mX\vb=\begin{pmatrix}
         \mX_1\vb\\ (\mId_m-\mV\mX_1)\vb
\end{pmatrix}$ 
is exactly $q^{n-m}$. This shows that the two sets have the same number 
of elements. 
\end{proof}

Hence, a generic GID solver for CWP samples 
$\mX\gets\GI(\mA)$ until $\Hw{\mX\vb}\leq t$. 
The main problem we get is how to do the sampling. 
Our approach is to apply transformations to $\mA$ to easily calculate 
GIs of the transformed matrix and to transfer the
result to $\mA$. 

Theorem \ref{GI-T03} tells us that a single transformation of  
$\mA$ suffices to generate all its GIs. 
Our simulations and previous results on existing ISD algorithms show 
that it is more efficient to use different transformations for 
$\mA$ and run through several GIs for each transformation. 
According to this, we present below a generic GID solver for CWP.

\begin{algorithm}[!h]
    \caption{GID solver for CWP}
    \label{CWGI1-Alg01} 
    \begin{algorithmic}[1]
        \Function{CWGI\_solve}{$\mA,\vb,t$} 
        \Repeat
        \State Choose a transformation $(\mP,\mQ)$ of $\mA$;
        \State $\mX\gets\GI_{\mP,\mQ}(\mA)$ until $\Hw{\mX\vb}\leq t$
               or no more sampling is allowed;
        \Until a solution $\mX\vb$ is found or no more transformation is
               allowed;
        \State\textbf{return} solution $\mX\vb$ or ``fail''. 
        \EndFunction
    \end{algorithmic}
\end{algorithm}

It should be understood that steps 3 and 4 are performed under 
various strategies, each leading to a variant of this generic algorithm.
This will be clear in Section \ref{ISD} when we discuss Prange, 
Lee-Brickell, Leon, Stern, and Finiasz-Sendrier's algorithms.

In the following, we will analyze some transformations that can 
be applied to the matrix $\mA$, focusing on the following three 
aspects:
\begin{itemize}
\item The general form of a GI $\mX$ of $\mA$;
\item Particularities of the solution $\mX\vb$;
\item Algorithmic issues.
\end{itemize}

\subsubsection{A general rank case}\label{CWGI1.1}

Our first result here, regarding the form of the GI,
follows directly from Corollary \ref{GI-C01} and Proposition \ref{GI-P01}.

\begin{corollary}\label{CWGI-C01}
Let $\mA\in\cM_{m,n}(\bF)$, $\mP\in\cGL_m(\bF)$, $\mQ\in\cGL_n(\bF)$,
and $r>0$ be an integer such that $r\leq \rank(\mA)\leq min\{m,n\}$ and 
\begin{equation}\label{CWGI-Eq01}
\mP\mA\mQ=\begin{pmatrix}
        \mId_r & \mA_2 \\
        \mO   & \mA_4
    \end{pmatrix}.
\end{equation}
Then, any GI $\mX$ of $\mA$ is of the form 
\begin{equation}\label{CWGI-Eq02}
\mX=\mQ \begin{pmatrix}
      \mX_1 & \mX_2 \\
      \mX_3 & \mX_4
    \end{pmatrix}
    \mP, 
\end{equation}
where $\mX_1$, $\mX_2$, $\mX_3$, and $\mX_4$ are matrices of appropriate 
sizes that verify the matrix equations:
\begin{equation}\label{CWGI-Eq03}
\left\{
\begin{aligned}
\mX_1 + \mA_2\mX_3 &= \mId_r \\
\mA_4\mX_3 &= \mO \\
(\mX_2 + \mA_2\mX_4)\mA_4 &= \mO \\ 
\mA_4\mX_4\mA_4 &= \mA_4
\end{aligned}
\right.
\end{equation} 
\end{corollary}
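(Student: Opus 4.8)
The plan is to combine Corollary~\ref{GI-C01} with Proposition~\ref{GI-P01}(1) in a completely mechanical way, simply tracking how the block decomposition of the transformed matrix $\mP\mA\mQ$ is induced by the assumed form \eqref{CWGI-Eq01}. First I would set $\mB=\mP\mA\mQ$ and observe that, by Corollary~\ref{GI-C01}(1), the GIs of $\mA$ are exactly the matrices $\mQ\mY\mP$ with $\mY\in\GI(\mB)$; writing $\mY$ in block form with the same row/column partition that $\mB$ carries (namely $r$ and $m-r$ rows, $r$ and $n-r$ columns of $\mB$, so $\mY$ has blocks $\mY_1,\mY_2,\mY_3,\mY_4$ of sizes $r\times r$, $r\times(m-r)$, $(n-r)\times r$, $(n-r)\times(m-r)$) immediately gives \eqref{CWGI-Eq02} after renaming $\mX_i:=\mY_i$.

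Next I would apply Proposition~\ref{GI-P01}(1) to $\mB$ with the block data $\mA_1=\mId_r$, $\mA_2=\mA_2$, $\mA_3=\mO$, $\mA_4=\mA_4$, substituting these into the four equations of \eqref{GI-Eq02}. Because $\mA_3=\mO$, the terms $(\cdots)\mA_3$ vanish, and because $\mA_1=\mId_r$, every product with $\mA_1$ collapses. Equation one becomes $\mX_1+\mA_2\mX_3=\mId_r$; equation two becomes $(\mX_1+\mA_2\mX_3)\mA_2+(\mX_2+\mA_2\mX_4)\mA_4=\mA_2$, which, using equation one, reduces to $(\mX_2+\mA_2\mX_4)\mA_4=\mO$; equation three becomes $\mA_4\mX_3=\mO$; equation four becomes $\mA_4\mX_4\mA_4=\mA_4$. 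These are precisely the four equations of \eqref{CWGI-Eq03}. The converse direction is the same substitution read backwards: any tuple $(\mX_1,\mX_2,\mX_3,\mX_4)$ satisfying \eqref{CWGI-Eq03} yields, by Proposition~\ref{GI-P01}(1), a GI $\mY$ of $\mB$, hence $\mQ\mY\mP\in\GI(\mA)$.

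There is essentially no obstacle here; the only thing to be careful about is bookkeeping of block dimensions — checking that the partition of $\mY$ matching $\mB^{t}$-compatible multiplication is the one stated, and that the hypothesis $r\le\rank(\mA)\le\min\{m,n\}$ guarantees the decomposition \eqref{CWGI-Eq01} is dimensionally sensible (so that $\mId_r$, $\mA_2$, $\mA_4$ have the claimed shapes and $\mP,\mQ$ are of the right orders). I would also note that the simplification of equation two genuinely uses equation one, so the system \eqref{CWGI-Eq03} as written is equivalent to the raw substitution of \eqref{GI-Eq02}, not merely implied by it. Everything else is a one-line dereference of the two earlier results.

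\begin{proof}
By Corollary~\ref{GI-C01}(1) applied to $\mB:=\mP\mA\mQ$, we have $\GI(\mA)=\{\mQ\mY\mP\mid \mY\in\GI(\mB)\}$. Partition $\mY$ conformally with $\mB$ as $\mY=\begin{pmatrix}\mX_1&\mX_2\\\mX_3&\mX_4\end{pmatrix}$, where $\mX_1,\mX_2,\mX_3,\mX_4$ have sizes $r\times r$, $r\times(m-r)$, $(n-r)\times r$, and $(n-r)\times(m-r)$, respectively. This yields the form \eqref{CWGI-Eq02}. Applying Proposition~\ref{GI-P01}(1) to $\mB$ with blocks $\mA_1=\mId_r$, $\mA_3=\mO$ (and $\mA_2$, $\mA_4$ as in \eqref{CWGI-Eq01}), the system \eqref{GI-Eq02} becomes
\begin{equation}
\nonumber
\left\{
\begin{aligned}
\mX_1+\mA_2\mX_3&=\mId_r\\
(\mX_1+\mA_2\mX_3)\mA_2+(\mX_2+\mA_2\mX_4)\mA_4&=\mA_2\\
\mA_4\mX_3&=\mO\\
\mA_4\mX_4\mA_4&=\mA_4.
\end{aligned}
\right.
\end{equation}
Using the first equation, the second one simplifies to $(\mX_2+\mA_2\mX_4)\mA_4=\mO$. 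The resulting system is exactly \eqref{CWGI-Eq03}, and the equivalence is preserved in both directions since the simplification only substituted the first equation into the second. Hence $\mY\in\GI(\mB)$ if and only if the blocks of $\mY$ satisfy \eqref{CWGI-Eq03}, and therefore every GI $\mX$ of $\mA$ has the form \eqref{CWGI-Eq02} with blocks satisfying \eqref{CWGI-Eq03}.
\end{proof}
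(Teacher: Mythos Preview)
Your proof is correct and follows exactly the route the paper indicates (the paper gives no detailed argument beyond the line ``follows directly from Corollary~\ref{GI-C01} and Proposition~\ref{GI-P01}''). One tiny bookkeeping note: the raw fourth equation of \eqref{GI-Eq02} actually reads $\mA_4\mX_3\mA_2+\mA_4\mX_4\mA_4=\mA_4$, so you have also used the third equation to simplify it (not only the first into the second as you state), but the equivalence of the two systems is unaffected.
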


\begin{remark}\label{CWGI-R01}
The transformation \eqref{CWGI-Eq01} can always be obtained by 
Gaussian elimination. If no particular constraints are imposed on 
the matrices $\mA_2$ and $\mA_4$, $\mQ$ can be obtained as a 
permutation matrix.
It is also interesting to remark that $\mX_4$ in \eqref{CWGI-Eq02}
must be a GI of $\mA_4$ (see \eqref{CWGI-Eq03}). 
\end{remark}

\begin{lemma}\label{CWGI-L01}
With the above notation, if the system $\mA\vx=\vb$ is consistent, then 
\begin{equation}\label{CWGI-Eq04}
\mX\vb = \mQ \begin{pmatrix} 
         \mX_1\bar{\vb}' + \mX_2\bar{\vb}'' \\
         \mX_3\bar{\vb}' + \mX_4\bar{\vb}''
       \end{pmatrix},
\end{equation}
where $\bar{\vb}'=(\mP\vb)_{[1,r]}$ and $\bar{\vb}''=(\mP\vb)_{[r+1,m]}$.
\end{lemma}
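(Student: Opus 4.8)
The plan is to obtain \eqref{CWGI-Eq04} by a direct block multiplication: substitute the general form of a GI supplied by Corollary~\ref{CWGI-C01} into the product $\mX\vb$, split $\mP\vb$ compatibly with the block structure, and multiply. First I would recall from \eqref{CWGI-Eq02} that any GI of $\mA$ can be written as $\mX=\mQ\begin{pmatrix}\mX_1 & \mX_2\\\mX_3 & \mX_4\end{pmatrix}\mP$, where the central factor is an $n\times m$ matrix partitioned compatibly with the $r+(m-r)$ row split and $r+(n-r)$ column split of $\mP\mA\mQ$ in \eqref{CWGI-Eq01}; concretely $\mX_1$ is $r\times r$, $\mX_2$ is $r\times(m-r)$, $\mX_3$ is $(n-r)\times r$, and $\mX_4$ is $(n-r)\times(m-r)$. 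Note that the block equations \eqref{CWGI-Eq03} play no role in this identity: it holds for every matrix of the shape \eqref{CWGI-Eq02}.

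Next I would set $\vw:=\mP\vb\in\bF^m$ and split it according to the column partition of the central factor, i.e.\ $\vw=\begin{pmatrix}\bar{\vb}'\\\bar{\vb}''\end{pmatrix}$ with $\bar{\vb}'=\vw_{[1,r]}=(\mP\vb)_{[1,r]}$ the first $r$ coordinates and $\bar{\vb}''=\vw_{[r+1,m]}=(\mP\vb)_{[r+1,m]}$ the remaining $m-r$ coordinates — precisely the two vectors named in the statement. The only computational step is the block product
\begin{equation*}
\begin{pmatrix}\mX_1 & \mX_2\\\mX_3 & \mX_4\end{pmatrix}\begin{pmatrix}\bar{\vb}'\\\bar{\vb}''\end{pmatrix}=\begin{pmatrix}\mX_1\bar{\vb}'+\mX_2\bar{\vb}''\\\mX_3\bar{\vb}'+\mX_4\bar{\vb}''\end{pmatrix},
\end{equation*}
which is legitimate exactly because the partition of $\vw$ matches the column partition of the central factor. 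Left-multiplying by $\mQ$ and recalling $\mP\vb=\vw$ gives $\mX\vb=\mQ\begin{pmatrix}\mX_1\bar{\vb}'+\mX_2\bar{\vb}''\\\mX_3\bar{\vb}'+\mX_4\bar{\vb}''\end{pmatrix}$, which is \eqref{CWGI-Eq04}.

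The consistency hypothesis is only contextual here: when $\mA\vx=\vb$ is consistent we have $\vb\in\cR(\mA)$, so by Theorem~\ref{GI-T01} the vector $\mX\vb$ exhibited in \eqref{CWGI-Eq04} is an actual solution of the system — which is why this decomposition is worth recording — whereas the displayed identity itself is valid for every $\vb\in\bF^m$. I do not expect a genuine obstacle; the only point requiring care is the dimension bookkeeping of the four blocks and the induced $r+(m-r)$ split of $\mP\vb$, so that all the block products above are well formed.
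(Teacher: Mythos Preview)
Your proof is correct and is exactly the approach the paper takes: the paper's proof reads simply ``Directly from \eqref{CWGI-Eq02}'', and you have spelled out that block multiplication. Your remark that the consistency hypothesis is only contextual (the displayed identity holds for any $\vb$) is also accurate.
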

\begin{proof}
Directly from (\ref{CWGI-Eq02}). 
\end{proof}

Equation (\ref{CWGI-Eq04}) gives us some flexibility in choosing the 
matrix $\mX$'s blocks to minimize the solution $\mX\vb$'s weight. 
The following procedure could be used:
\begin{itemize}
    \item Sample $\mX_4$ from $\GI(\mA_4)$;
    \item Compute $\mX_2$ by $\mX_2=-\mA_2\mX_4$;
    \item Generate a matrix $\mX_3$ such that $\mA_4\mX_3=\mO$;
    \item Compute $\mX_1$ by $\mX_1=\mId_r-\mA_2\mX_3$.
\end{itemize}
The procedure is repeated as long as $\Hw{\mX\vb}$ is greater than 
some given threshold $t$. 
The weight of $\mX\vb$ depends on $\mP$, $\mQ$, and the  
matrix $\mX$'s blocks. If $\mQ$ is a permutation, then it can be 
neglected in choosing the matrix $\mX$'s blocks because it does 
not change the weight.

\begin{remark}\label{CWGI-R02}
The above analysis is kept, with minor modifications, also for the 
case where $\mId_r$ occupies another position in \eqref{CWGI-Eq01}. 
For example, if
\begin{equation}\label{CWGI-Eq05}
\mP\mA\mQ=\begin{pmatrix}
        \mA_1 & \mId_r \\
        \mA_3 & \mO
    \end{pmatrix}
\end{equation}
then the blocks of the GI $\mX$ must verify the 
matrix equations:
\begin{equation}\label{GI-Eq06}
\left\{
\begin{aligned}
\mA_1\mX_1 + \mX_3 &= \mId_r \\
\mA_3\mX_1 &= \mO \\
(\mA_1\mX_2 + \mX_4)\mA_3 &= \mO \\ 
\mA_3\mX_2\mA_3 &= \mA_3
\end{aligned}
\right.
\end{equation}

In this case, $\mX_2$ is a GI of $\mA_3$, and 
Lemma \ref{CWGI-L01} holds true. 
\end{remark}

\subsubsection{Rank deficient matrices}\label{CWGI1.2}

This is a sub-case of the previous case, which deals with rank 
deficient matrices. Our first result follows directly from
Corollary \ref{GI-C01} and Proposition \ref{GI-P01}.

\begin{corollary}\label{CWGI-C02}
Let $\mA\in\cM_{m,n}(\bF)$, $\mP\in\cGL_m(\bF)$, and $\mQ\in\cGL_n(\bF)$
such that $rank(\mA)=r<min\{m,n\}$ and  
\begin{equation}\label{CWGI-Eq07}
\mP\mA\mQ=\begin{pmatrix}
        \mId_r & \mA_2 \\
        \mO   & \mO
    \end{pmatrix}.
\end{equation}
Then, any GI $\mX$ of $\mA$ is of the form 
\begin{equation}\label{CWGI-Eq08}
\mX=\mQ \begin{pmatrix}
      \mX_1 & \mX_2 \\
      \mX_3 & \mX_4
    \end{pmatrix}
    \mP, 
\end{equation}
where $\mX_1$, $\mX_2$, $\mX_3$, and $\mX_4$ are matrices of appropriate 
sizes that verify the matrix equation
\begin{equation}\label{CWGI-Eq09}
\mX_1 + \mA_2\mX_3 = \mId_r. 
\end{equation}
\end{corollary}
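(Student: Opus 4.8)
The plan is to read Corollary~\ref{CWGI-C02} as the rank-deficient specialization $\mA_4=\mO$ of Corollary~\ref{CWGI-C01}, so that essentially everything is inherited and only the defining equations need to be pruned. First I would invoke Corollary~\ref{GI-C01} with $\mB=\begin{pmatrix}\mId_r&\mA_2\\\mO&\mO\end{pmatrix}$: since $\mP\mA\mQ=\mB$ with $\mP\in\cGL_m(\bF)$ and $\mQ\in\cGL_n(\bF)$, we have $\GI(\mA)=\{\mQ\mX\mP\mid\mX\in\GI(\mB)\}$. This already forces every GI of $\mA$ into the block shape \eqref{CWGI-Eq08}, with the inner block matrix $\begin{pmatrix}\mX_1&\mX_2\\\mX_3&\mX_4\end{pmatrix}$ ranging exactly over $\GI(\mB)$; hence it remains only to characterise $\GI(\mB)$.

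Second, I would apply Proposition~\ref{GI-P01}(1) to $\mB$ with $\mB_1=\mId_r$, $\mB_2=\mA_2$, $\mB_3=\mO$, $\mB_4=\mO$ (block sizes $\mX_1\in\cM_{r,r}(\bF)$, $\mX_2\in\cM_{r,m-r}(\bF)$, $\mX_3\in\cM_{n-r,r}(\bF)$, $\mX_4\in\cM_{n-r,m-r}(\bF)$, $\mA_2\in\cM_{r,n-r}(\bF)$). Substituting into the four equations \eqref{GI-Eq02}: the third and fourth equations become $\mO=\mO$ (every term on their left carries a factor $\mB_3=\mO$ or $\mB_4=\mO$), hence are vacuous; the first collapses to $\mX_1+\mA_2\mX_3=\mId_r$, which is exactly \eqref{CWGI-Eq09}; and the second becomes $(\mX_1+\mA_2\mX_3)\mA_2=\mA_2$. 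The only point requiring a word of argument is that this last equation is redundant: right-multiplying \eqref{CWGI-Eq09} by $\mA_2$ gives $(\mX_1+\mA_2\mX_3)\mA_2=\mId_r\mA_2=\mA_2$, so it is implied by \eqref{CWGI-Eq09}. Therefore $\begin{pmatrix}\mX_1&\mX_2\\\mX_3&\mX_4\end{pmatrix}\in\GI(\mB)$ if and only if $\mX_1+\mA_2\mX_3=\mId_r$ (with $\mX_2,\mX_4$ completely free and $\mX_3$ free, determining $\mX_1$), and passing back through $\mX\mapsto\mQ\mX\mP$ yields the statement.

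There is no genuine obstacle here; the whole argument is bookkeeping. The two things to be mildly careful about are the consistency of the block sizes above and the (elementary) observation that the second of the four block equations of \eqref{GI-Eq02} is a consequence of the first and so imposes no extra constraint. As an alternative that avoids re-deriving from Proposition~\ref{GI-P01}, one can simply substitute $\mA_4=\mO$ directly into the system \eqref{CWGI-Eq03} of Corollary~\ref{CWGI-C01}: its second, third and fourth equations become trivially true and the surviving one is precisely \eqref{CWGI-Eq09}.
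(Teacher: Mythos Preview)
Your proposal is correct and matches the paper's own justification, which simply states that the corollary ``follows directly from Corollary~\ref{GI-C01} and Proposition~\ref{GI-P01}'' (and introduces the section as ``a sub-case of the previous case''). You have spelled out precisely those details---the bijection $\GI(\mA)=\{\mQ\mX\mP\mid\mX\in\GI(\mB)\}$ from Corollary~\ref{GI-C01} and the collapse of the four block equations~\eqref{GI-Eq02} to the single constraint~\eqref{CWGI-Eq09}---and your alternative of substituting $\mA_4=\mO$ into~\eqref{CWGI-Eq03} is exactly the ``sub-case'' reading the paper hints at.
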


\begin{remark}\label{CWGI-R03}
The transformation \eqref{CWGI-Eq07} can always be obtained by 
Gaussian elimination. If no requirement is imposed on the matrix 
$\mA_2$, $\mQ$ can be obtained as a permutation matrix.
\end{remark}

\begin{lemma}\label{CWGI-L02}
With the above notation, if the system $\mA\vx=\vb$ is consistent, then 
\begin{equation}\label{CWGI-Eq10}
\mX\vb = \mQ \begin{pmatrix} 
         \mX_1\bar{\vb}' \\
         \mX_3\bar{\vb}'
       \end{pmatrix},
\end{equation}
where $\bar{\vb}'=(\mP\vb)_{[1,r]}$. 
\end{lemma}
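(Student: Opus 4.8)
The plan is to mimic the proof of Lemma~\ref{CWGI-L01} and simply substitute the special form of the transformation \eqref{CWGI-Eq07} into the expression \eqref{CWGI-Eq08} for an arbitrary GI $\mX$ of $\mA$. First I would write $\mX\vb = \mQ\begin{pmatrix} \mX_1 & \mX_2 \\ \mX_3 & \mX_4 \end{pmatrix}\mP\vb$ and split $\mP\vb$ into its top block $\bar{\vb}'=(\mP\vb)_{[1,r]}$ and its bottom block $\bar{\vb}''=(\mP\vb)_{[r+1,m]}$, exactly as in the previous lemma. This gives
\begin{equation*}
\mX\vb = \mQ \begin{pmatrix} \mX_1\bar{\vb}' + \mX_2\bar{\vb}'' \\ \mX_3\bar{\vb}' + \mX_4\bar{\vb}'' \end{pmatrix}.
\end{equation*}

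The only extra ingredient needed beyond Lemma~\ref{CWGI-L01} is to show that the terms involving $\bar{\vb}''$ drop out, i.e.\ $\bar{\vb}''=\mO$. This follows from the consistency hypothesis: since $\mA\vx=\vb$ is consistent, $\vb\in\cR(\mA)$, hence $\mP\vb\in\cR(\mP\mA\mQ)$ (multiplying on the left by the invertible $\mP$ preserves the column space, and right-multiplying $\mA$ by the invertible $\mQ$ does not change it either). But from \eqref{CWGI-Eq07} the matrix $\mP\mA\mQ=\begin{pmatrix}\mId_r & \mA_2 \\ \mO & \mO\end{pmatrix}$ has its last $m-r$ rows identically zero, so every vector in its column space has zero entries in positions $r+1,\dots,m$. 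Therefore $\bar{\vb}''=(\mP\vb)_{[r+1,m]}=\mO$, and the displayed expression collapses to
\begin{equation*}
\mX\vb = \mQ \begin{pmatrix} \mX_1\bar{\vb}' \\ \mX_3\bar{\vb}' \end{pmatrix},
\end{equation*}
which is exactly \eqref{CWGI-Eq10}.

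I do not anticipate a genuine obstacle here; the statement is essentially a corollary of Lemma~\ref{CWGI-L01} specialized to the rank-deficient transformation, and the substantive point is the elementary observation that consistency forces the lower part of $\mP\vb$ to vanish. The one thing to be careful about is bookkeeping with block sizes: $\mX_1\in\cM_{r,r}(\bF)$, $\mX_3\in\cM_{n-r,r}(\bF)$, and $\bar{\vb}'\in\bF^r$, so the products are well defined and the result sits in $\bF^n$ after multiplication by $\mQ\in\cGL_n(\bF)$. Equivalently one could just invoke \eqref{CWGI-Eq04} from Lemma~\ref{CWGI-L01} and set $\mA_4=\mO$, whence $\bar{\vb}''$ is annihilated on the left by nothing but is itself zero by the argument above; either route gives the claim in a couple of lines.

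\begin{proof}
By Corollary~\ref{CWGI-C02}, $\mX=\mQ\begin{pmatrix}\mX_1 & \mX_2 \\ \mX_3 & \mX_4\end{pmatrix}\mP$, so
\begin{equation*}
\mX\vb = \mQ \begin{pmatrix} \mX_1\bar{\vb}' + \mX_2\bar{\vb}'' \\ \mX_3\bar{\vb}' + \mX_4\bar{\vb}'' \end{pmatrix},
\end{equation*}
where $\bar{\vb}'=(\mP\vb)_{[1,r]}$ and $\bar{\vb}''=(\mP\vb)_{[r+1,m]}$. Since $\mA\vx=\vb$ is consistent, $\vb\in\cR(\mA)$; as $\mP$ and $\mQ$ are non-singular, $\cR(\mA)=\cR(\mP\mA\mQ)$ up to left-multiplication by $\mP$, more precisely $\mP\vb\in\cR(\mP\mA\mQ)$. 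From \eqref{CWGI-Eq07}, the last $m-r$ rows of $\mP\mA\mQ$ are zero, so every element of its column space vanishes on positions $r+1,\dots,m$. Hence $\bar{\vb}''=\mO$, and the expression above reduces to \eqref{CWGI-Eq10}.
\end{proof}
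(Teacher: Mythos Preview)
Your proof is correct and follows essentially the same approach as the paper: reduce via Lemma~\ref{CWGI-L01} to showing $\bar{\vb}''=\mO$, then use consistency. The paper establishes $\bar{\vb}''=\mO$ by picking a particular solution $\vx_0$ and computing $\mP\vb=\mP\mA\vx_0$ through the transformation, whereas you phrase the same step as a column-space argument; these are the same observation in slightly different language.
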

\begin{proof}
According to Lemma \ref{CWGI-L01}, we only have to prove that 
$\bar{\vb}''=\mO$, where $\bar{\vb}''=(\mP\vb)_{[r+1,m]}$.

Let $\vx_0$ be an arbitrary but fixed solution to $\mA\vx=\vb$. That is,
$\mA\vx_0=\vb$. Then,
$$
\begin{array}{lcl}
  \mP\vb & = & \mP\mA\vx_0 \\ [.5ex]
     & = &\mP\mP^{-1}\begin{pmatrix}
                     \mId_r & \mA_2 \\     
                     \mO   & \mO 
                  \end{pmatrix}
                  \mQ^{-1}\vx_0 \\ [2ex]
     & = & \begin{pmatrix}
             \tilde{\vx}_0'+\mA_2\tilde{\vx}_0'' \\
             \mO
           \end{pmatrix}, 
\end{array}
$$
where $\tilde{\vx}_0'=(\mQ^{-1}\vx_0)_{[1,r]}$ and 
$\tilde{\vx}_0''=(\mQ^{-1}\vx_0)_{[r+1,n]}$. Therefore, $\bar{\vb}''=\mO$. 
\end{proof}

When $\mQ$ is a permutation, Lemma \ref{CWGI-L02} allows us to obtain 
solutions to $\mA\vx=\vb$ with any desired weight distribution
on $\pi_{\mQ}([r+1,n])$, as the next theorem shows. 

\begin{theorem}\label{CWGI-T01}
Let $\mA\in\cM_{m,n}(\bF)$ with $\rank(\mA)=r<\min\{m,n\}$, $\vb\in\bF^m$, 
$\mP\in\cGL_{m}(\bF)$, and $\mQ\in\cS_{n}(\bF)$ such that 
$(\mP\vb)_{[1,r]}\not=\mO$ and 
\begin{equation}
    \mP\mA\mQ=\begin{pmatrix}
             \mId_r & \mA_2 \\
             \mO   & \mO 
           \end{pmatrix}
\end{equation}
Then, for any set $\cI\subseteq\pi_{\mQ}([r+1,n])$, a solution $\vx$ to 
$\mA\vx=\vb$ with the property
$\supp{\vx}|_{\pi_{\mQ}([r+1,n])}=\cI$, can efficiently be computed. 
\end{theorem}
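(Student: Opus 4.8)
The plan is to use Lemma~\ref{CWGI-L02} as the starting point and then exploit the freedom in choosing the block $\mX_3$ of a GI of $\mA$. By Corollary~\ref{CWGI-C02} and Lemma~\ref{CWGI-L02}, under the stated hypotheses any GI $\mX$ of $\mA$ yields the solution
\begin{equation*}
\mX\vb=\mQ\begin{pmatrix}\mX_1\bar{\vb}'\\ \mX_3\bar{\vb}'\end{pmatrix},
\end{equation*}
where $\bar{\vb}'=(\mP\vb)_{[1,r]}$ and the only constraint on the blocks is $\mX_1+\mA_2\mX_3=\mId_r$ (equation~\eqref{CWGI-Eq09}). Since $\mQ$ is a permutation, the entries of $\mX\vb$ indexed by $\pi_{\mQ}([r+1,n])$ are exactly the entries of the vector $\mX_3\bar{\vb}'\in\bF^{n-r}$, while the entries indexed by $\pi_{\mQ}([1,r])$ are those of $\mX_1\bar{\vb}'$. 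So it suffices to show that, given any target set $\cI\subseteq\pi_{\mQ}([r+1,n])$, one can choose $\mX_3\in\cM_{n-r,m}(\bF)$ so that $\supp{\mX_3\bar{\vb}'}$, viewed inside $\pi_{\mQ}([r+1,n])$, equals $\cI$; then set $\mX_1=\mId_r-\mA_2\mX_3$ and $\mX_2,\mX_4$ arbitrarily (say $\mO$), obtaining a genuine GI by Corollary~\ref{CWGI-C02}, and output $\vx=\mX\vb$.

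The key step is the construction of $\mX_3$. Because $\bar{\vb}'=(\mP\vb)_{[1,r]}\not=\mO$ by hypothesis, fix an index $j_0$ with $\bar{\vb}'(j_0)\not=0$. For each row index $\ell\in\{1,\dots,n-r\}$, let $c_\ell=1$ if the position of $\mX_3\bar{\vb}'$ corresponding to $\ell$ should lie in $\cI$ (after applying $\pi_{\mQ}$ to $[r+1,n]$) and $c_\ell=0$ otherwise; then define the $\ell$-th row of $\mX_3$ to be $c_\ell\,\bar{\vb}'(j_0)^{-1}$ in column $j_0$ and zero elsewhere. This gives $(\mX_3\bar{\vb}')(\ell)=c_\ell$, so $\supp{\mX_3\bar{\vb}'}$ is precisely the set of $\ell$ with $c_\ell=1$, which by construction maps under $\pi_{\mQ}$ (restricted to $[r+1,n]$) onto $\cI$. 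Hence $\supp{\vx}|_{\pi_{\mQ}([r+1,n])}=\cI$. All of this is a handful of matrix multiplications, and the resulting $\mX$ satisfies the single defining equation~\eqref{CWGI-Eq09} by the choice $\mX_1=\mId_r-\mA_2\mX_3$, so it is a legitimate GI of $\mA$; the word ``efficiently'' is justified because building $\mX_3$, computing $\mX_1$, forming $\mX\vb$, and permuting by $\mQ$ are all polynomial-time.

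I expect no serious obstacle here; the only point needing a little care is bookkeeping the permutation $\pi_{\mQ}$ so that ``support equals $\cI$'' is stated relative to the right index set, and making sure that the entries of $\mX\vb$ outside $\pi_{\mQ}([r+1,n])$ (i.e.\ the $\mX_1\bar{\vb}'$ part) are genuinely unconstrained by the requirement, which they are since $\cI\subseteq\pi_{\mQ}([r+1,n])$ only pins down the complementary block. One could also remark that the same argument gives slightly more: not merely the support but the full value of $\vx$ on $\pi_{\mQ}([r+1,n])$ can be prescribed arbitrarily, by replacing $c_\ell\in\{0,1\}$ with an arbitrary target scalar; but the statement as given follows immediately from the $0/1$ case.
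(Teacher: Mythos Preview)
Your proof is correct and follows essentially the same route as the paper: invoke Lemma~\ref{CWGI-L02} to reduce to controlling $\mX_3\bar{\vb}'$, use the nonzero entry $\bar{\vb}'(j_0)$ to place prescribed values row by row in $\mX_3$, and then recover $\mX_1$ from the constraint~\eqref{CWGI-Eq09}. The only cosmetic slip is the size of $\mX_3$: it lies in $\cM_{n-r,r}(\bF)$, not $\cM_{n-r,m}(\bF)$, since it multiplies $\bar{\vb}'\in\bF^r$; and your scaling by $\bar{\vb}'(j_0)^{-1}$ is a harmless refinement of the paper's choice of the constant $1$ (both give a nonzero entry exactly where wanted).
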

\begin{proof}
Let $\mA$, $\vb$, $\mP$, and $\mQ$ as in the theorem's hypothesis. 
Let $\mX$ be a GI of $\mA$. 
Then, Lemma \ref{CWGI-L02} leads to 
$$
    \mX\vb = \mQ \begin{pmatrix}
                   \mX_1 (\mP\vb)_{[1,r]} \\
                   \mX_3 (\mP\vb)_{[1,r]}
                 \end{pmatrix}, 
$$
where $\mX_1$ and $\mX_3$ verify the matrix equation
$\mX_1 + \mA_2\mX_3 = \mId_r$. 

Now, let $\cI$ be a subset of $\pi_{\mQ}([r+1,n])$, $i\in\cI$, and 
$1\leq j\leq r$ be such that $(\mP\vb)(j)\not=0$.  
Then, there exists $r+1\leq k\leq n$ such that $\pi_{\mQ}(k)=i$. 
Moreover, 
  $$(\mX\vb)(i)=\sum_{j=1}^r {\mX_3}(k-r,j)(\mP\vb)(j).$$
We choose now the block $\mX_3$ by
$$
{\mX_3}(k-r,\ell)= 
\begin{cases}
 1, & \mbox{if $\pi_{\mQ}(k)=i\in\cI$ and $\ell=j$} \\
 0, & \mbox{otherwise,}
\end{cases}
$$
for all $r+1\leq k\leq n$ and $1\leq\ell\leq r$. 
Then, $\mX_1=\mId_r-\mA_2\mX_3$. 

It is straightforward to check that $\vx=\mX\vb$ verifies
$\supp{\vx}|_{\pi_{\mQ}([r+1,n])}=\cI$. 
Moreover, $\vx$ can be computed in polynomial time in the size of $\mA$. 
\end{proof}

\begin{remark}\label{CWGI-R04}
Under the conditions of Theorem \ref{CWGI-T01}, assuming in addition 
$\mA_2=\mO$, the solution $\mX\vb$ will be of form
\begin{equation}
\mX\vb=\mQ\begin{pmatrix} 
             (\mP\vb)_{[1,r]} \\ 
             \mX_3(\mP\vb)_{[1,r]}
    \end{pmatrix}
\end{equation}
because $\mX_1=\mId_r$. 
Therefore, for any $\cI$ chosen as in Theorem \ref{CWGI-T01} we can
compute a solution $\vx$ such that 
$\supp{\vx}=\cI\cup \pi_{\mQ}(\supp{(\mP\vb)_{[1,r]}}).$ 
\end{remark}

\begin{remark}\label{CWGI-R05}
When $\mP\mA\mQ$ takes the form
\begin{equation}\label{CWGI-Eq11}
\mP\mA\mQ=\begin{pmatrix}
        \mA_1 & \mId_r \\
        \mO   & \mO
    \end{pmatrix}, 
\end{equation}
the blocks of the GI must verify the matrix equation 
\begin{equation}\label{CWGI-Eq12}
\mA_1\mX_1 + \mX_3 = \mId_r.
\end{equation}

The solution has the same form as in equation \eqref{CWGI-Eq10}. 
This time however, in Theorem \ref{CWGI-T01}, we will consider 
$\cI\subseteq\pi_{\mQ}([1,n-r])$ and in the proof we will assign 
$\mX_1$ instead of $\mX_3$.

A conclusion similar to that in Remark \ref{CWGI-R04} can be 
drawn when $\mA_1=0$.
\end{remark}

\subsubsection{Full rank matrices}\label{CWGI1.3}

This is another particular case of the one in Section \ref{CWGI1.1}. 
As in previous cases, our first result follows directly from
Corollary \ref{GI-C01} and Proposition \ref{GI-P01}.

\begin{corollary}\label{CWGI-C03}
Let $\mA\in\cM_{m,n}(\bF)$, $\mP\in\cGL_m(\bF)$, and 
$\mQ\in\cGL_n(\bF)$ such that $rank(\mA)=r=m<n$ and  
\begin{equation}\label{CWGI-Eq13}
\mP\mA\mQ=\begin{pmatrix}
        \mId_r & \mA_2 
    \end{pmatrix}.
\end{equation}
Then, any GI $\mX$ of $\mA$ is of the form 
\begin{equation}\label{CWGI-Eq14}
\mX=\mQ \begin{pmatrix}
      \mX_1  \\
      \mX_2 
    \end{pmatrix}
    \mP, 
\end{equation}
where $\mX_1$ and $\mX_2$ are matrices of appropriate 
sizes that verify the equation
\begin{equation}\label{CWGI-Eq15}
\mX_1 + \mA_2\mX_2 = \mId_r.
\end{equation}
\end{corollary}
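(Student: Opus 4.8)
The plan is to derive Corollary \ref{CWGI-C03} as the special case of Corollary \ref{CWGI-C01} in which the matrix $\mA$ has full row rank, that is, $r=\rank(\mA)=m<n$. First I would observe that under this hypothesis the canonical form \eqref{CWGI-Eq01} of Corollary \ref{CWGI-C01} degenerates: the zero block row $\begin{pmatrix}\mO & \mA_4\end{pmatrix}$ disappears entirely (it has $m-r=0$ rows), so $\mP\mA\mQ$ reduces to the single block row $\begin{pmatrix}\mId_r & \mA_2\end{pmatrix}$, which is exactly \eqref{CWGI-Eq13}. Such a transformation $(\mP,\mQ)$ always exists by Gaussian elimination, and in fact $\mQ$ may be taken to be a permutation matrix (one only needs to select $r$ independent columns of $\mA$ to form the identity block), so the hypotheses of Corollary \ref{CWGI-C01} are met.

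Next I would apply Corollary \ref{GI-C01}(1): since $\mP\mA\mQ=\begin{pmatrix}\mId_r & \mA_2\end{pmatrix}=\mB$, every GI of $\mA$ is $\mQ\mX'\mP$ for some $\mX'\in\GI(\mB)$, and conversely. It therefore suffices to describe $\GI\left(\begin{pmatrix}\mId_r & \mA_2\end{pmatrix}\right)$. This is precisely Example \ref{GI-E01}(3) (with $\mA_1$ there playing the role of $\mA_2$ here, and noting $\mId_r$ sits in the second block rather than the first — the symmetric computation via Proposition \ref{GI-P01}(2) applies verbatim): writing $\mX'=\begin{pmatrix}\mX_1\\\mX_2\end{pmatrix}$, the defining equations \eqref{GI-Eq03} become $(\mId_r\mX_1+\mA_2\mX_2)\mId_r=\mId_r$ and $(\mId_r\mX_1+\mA_2\mX_2)\mA_2=\mA_2$; the first gives $\mX_1+\mA_2\mX_2=\mId_r$, which is \eqref{CWGI-Eq15}, and the second is then automatically satisfied since it reads $\mId_r\cdot\mA_2=\mA_2$. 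Substituting $\mX'=\begin{pmatrix}\mX_1\\\mX_2\end{pmatrix}$ into $\mX=\mQ\mX'\mP$ yields exactly the form \eqref{CWGI-Eq14}, completing the argument.

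There is no real obstacle here; the only point requiring a word of care is the block-position bookkeeping — the identity block in \eqref{CWGI-Eq13} appears as the \emph{right} block of the row, whereas Example \ref{GI-E01}(3) already treats exactly this configuration, so one may simply cite it. (Alternatively one could invoke Example \ref{GI-E01}(4)–(5) after a further column permutation, but that is unnecessary.) As the proof is entirely a specialization, I would state it tersely, along the lines of: \emph{``Directly from Corollary \ref{GI-C01} and Proposition \ref{GI-P01}(2), specializing \eqref{GI-Eq03} with the identity block; cf.\ Example \ref{GI-E01}(3).''}
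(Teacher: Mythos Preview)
Your approach is correct and essentially identical to the paper's, which simply states that the result ``follows directly from Corollary \ref{GI-C01} and Proposition \ref{GI-P01}''; your middle paragraph carries out exactly that specialization via \eqref{GI-Eq03}, and your suggested terse version matches the paper almost verbatim. One small slip: in your final paragraph you say the identity block in \eqref{CWGI-Eq13} sits on the \emph{right}, but it is on the left (it is Example \ref{GI-E01}(3) that has $\mId_r$ on the right), so Example \ref{GI-E01}(3) is the symmetric case rather than ``exactly this configuration'' --- your actual computation via Proposition \ref{GI-P01}(2) with $\mA_1=\mId_r$ is unaffected and correct.
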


\begin{remark}\label{CWGI-R06}
\begin{enumerate}
\item The transformation \eqref{CWGI-Eq13} can always be obtained by 
  Gaussian elimination. If no requirement is imposed on the matrix 
  $\mA_2$, $\mQ$ can be obtained as a permutation matrix.
\item It is straightforward to see that Lemma \ref{CWGI-L02} and 
  Theorem \ref{CWGI-T01} remain valid in this case as well.
  Please also remark that $\bar{\vb}'$ from Lemma \ref{CWGI-L02}
  is now exactly $\bar{\vb}$ because $r=m$. 
\end{enumerate}
\end{remark}

\begin{remark}\label{CWGI-R07}
If $\mA_2=0$ in Corollary \ref{CWGI-C03}, then the only constraint 
the GI must satisfy is $\mX_1=\mId_r$. 
The solution $\mX\vb$ is then
\begin{equation}
\mX\vb=Q\begin{pmatrix} 
             \bar{\vb} \\ \mX_2\bar{\vb}
    \end{pmatrix}
\end{equation}
where for the choice of $\mX_3$ we have full flexibility.
The conclusion in Theorem \ref{CWGI-T01} can then be strengthen to
``$Supp(\vx)=\cI\cup \pi_Q(Supp((\mP\vb)[1,r]))$'', as we did in 
Remark \ref{CWGI-R04}. 
\end{remark}

\begin{remark}\label{CWGI-R08}
If we take
\begin{equation}\label{CWGI-Eq16}
\mP\mA\mQ=\begin{pmatrix}
        \mA_1 & \mId_r 
    \end{pmatrix}
\end{equation}
then the blocks of the GI $\mX$ should satisfy
\begin{equation}\label{CWGI-Eq17}
\mA_1\mX_1 + \mX_2 = \mId_r. 
\end{equation}
The discussion on the solution $\mX\vb$ is then as in 
Remark \ref{CWGI-R04} with $\mP\vb$ instead of $(\mP\vb)_{[1,r]}$. 

A conclusion similar to that in Remark \ref{CWGI-R07} can be 
drawn when $\mA_1=0$.
\end{remark}

\begin{remark}
A similar discssion to that above stands for the case when
$rank(\mA)=r=n<m$. If we assume that 
\begin{equation}\label{CWGI-Eq18}
\mP\mA\mQ=\begin{pmatrix}
        \mId_r \\ \mA_2 
    \end{pmatrix}
\end{equation}
then any GI $\mX$ of $\mA$ is of the form 
\begin{equation}\label{CWGI-Eq19}
\mX=\mQ \begin{pmatrix}
      \mX_1 & \mX_2 
    \end{pmatrix}
    \mP, 
\end{equation}
where $\mX_1$ and $\mX_2$ verify the equation
\begin{equation}\label{CWGI-Eq20}
\mX_1 + \mX_2\mA_2 = \mId_r. 
\end{equation}

The solution $\mX\vb$ has the form
$\mX\vb=\mQ(\mX_1(\mP\vb)_{[1,r]}+\mX_2(\mP\vb)_{[r+1,m]})$ 
(please also see Lemma \ref{CWGI-L01}). 

If we take
\begin{equation}\label{CWGI-Eq21}
\mP\mA\mQ=\begin{pmatrix}
        \mA_1 \\ \mId_r 
    \end{pmatrix}
\end{equation}
then the blocks of the GI $\mX$ should satisfy
\begin{equation}\label{CWGI-Eq22}
  \mA_1\mX_1 + \mX_2 = \mId_r. 
\end{equation}
\end{remark}

\subsubsection{Summarizing the results}\label{CWGI1.4}

The table in Figure 1 summarizes the form of solutions for the case 
of rank deficient and full rank matrices. The fourth column of the 
table also includes information about the matrices' sizes to help the 
reader get a pictorial view of them. 

\begin{figure}[hbt]
\begin{center}
\begin{tabular}{|C{1cm}|C{3cm}|C{4cm}|C{3cm}|}
\toprule
    & Transformation $\mP\mA\mQ$ & GI $\mX$ & Solution $\mX\vb$ \\ 
\midrule 
 1.  & $\begin{pmatrix} \mId_r & \mA_2 \\ \mO & \mO \end{pmatrix}$
     & \makecell{
          $\mQ\begin{pmatrix} \mX_1 & \mX_2 \\ 
                              \mX_3 & \mX_4 \end{pmatrix}\mP$ \\ 
          $\mX_1+\mA_2\mX_3=\mId_r$
                 }
     & $\mQ\begin{pNiceMatrix}[first-row,last-col] 
              {\color{red} r} & \\ 
              \mX_1\bar{\vb}' & {\color{red} r} \\ 
              \mX_3\bar{\vb}' & {\color{red} n-r} 
        \end{pNiceMatrix}$ \\
\midrule 
 2.  & $\begin{pmatrix} \mId_r & \mO \\ \mO & \mO \end{pmatrix}$
     & $\mQ\begin{pmatrix} \mId_r & \mX_2 \\ 
                           \mX_3 & \mX_4 \end{pmatrix}\mP$ 
     & $\mQ\begin{pNiceMatrix}[first-row,last-col] 
              {\color{red} r} & \\ 
              \bar{\vb}' & {\color{red} r} \\ 
              \mX_3\bar{\vb}' & {\color{red} n-r} 
           \end{pNiceMatrix}$ \\
\midrule 
 3.  & $\begin{pmatrix} \mId_r & \mA_2 \end{pmatrix}$
     & \makecell{ 
           $\mQ\begin{pmatrix} \mX_1 \\ \mX_2 \end{pmatrix}\mP$ \\ 
           $\mX_1+\mA_2\mX_2=\mId_r$
                 } 
     & $\mQ\begin{pNiceMatrix}[first-row,last-col]
               {\color{red} r} & \\
               \mX_1\bar{\vb}  & {\color{red} r} \\ 
               \mX_2\bar{\vb}  & {\color{red} n-r} 
           \end{pNiceMatrix}$ \\
\midrule 
 4.  & $\begin{pmatrix} \mId_r & \mO \end{pmatrix}$
     & $\mQ\begin{pmatrix} \mId_r \\ \mX_2 \end{pmatrix}\mP$ 
     & $\mQ\begin{pNiceMatrix}[first-row,last-col]
               {\color{red} r} & \\ 
               \bar{\vb}       & {\color{red} r} \\ 
               \mX_2\bar{\vb}  & {\color{red} n-r} 
           \end{pNiceMatrix}$ \\
\midrule 
 5.  & $\begin{pmatrix} \mA_1 & \mId_r \\ \mO & \mO \end{pmatrix}$
     & \makecell{ 
           $\mQ\begin{pmatrix} \mX_1 & \mX_2 \\ \mX_3 & \mX_4 \end{pmatrix}\mP$ \\
           $\mA_1\mX_1+\mX_3=\mId_r$ 
                 }
     & $\mQ\begin{pNiceMatrix}[first-row,last-col] 
               {\color{red} r} & \\
               \mX_1\bar{\vb}' & {\color{red} n-r} \\ 
               \mX_3\bar{\vb}' & {\color{red} r}
           \end{pNiceMatrix}$ \\
\midrule 
 6.  & $\begin{pmatrix} \mO & \mId_r \\ \mO & \mO \end{pmatrix}$
     & $\mQ\begin{pmatrix} \mX_1 & \mX_3 \\ \mId_r & \mX_4 \end{pmatrix}\mP$ 
     & $\mQ\begin{pNiceMatrix}[first-row,last-col] 
               {\color{red} r} & \\
               \mX_1\bar{\vb}' & {\color{red} n-r} \\ 
               \bar{\vb}'      & {\color{red} r}
           \end{pNiceMatrix}$ \\
\midrule 
 7.  & $\begin{pmatrix} \mA_1 & \mId_r \end{pmatrix}$
     & \makecell{
           $\mQ\begin{pmatrix} \mX_1 \\ \mX_2 \end{pmatrix}\mP$ \\ 
           $\mA_1\mX_1+\mX_2=\mId_r$ 
                 }
     & $\mQ\begin{pNiceMatrix}[first-row,last-col]
               {\color{red} r} & \\
               \mX_1\bar{\vb}  & {\color{red} n-r} \\ 
               \mX_2\bar{\vb}  & {\color{red} r}
      \end{pNiceMatrix}$ \\
\midrule 
 8.  & $\begin{pmatrix} \mO & \mId_r \end{pmatrix}$
     & $\mQ\begin{pmatrix} \mX_1 \\ \mId_r \end{pmatrix}\mP$ 
     & $\mQ\begin{pNiceMatrix}[first-row,last-col]
               {\color{red} r} & \\
               \mX_1\bar{\vb}  & {\color{red} n-r} \\ 
               \bar{\vb}       & {\color{red} r}
            \end{pNiceMatrix}$ \\
\bottomrule 
\end{tabular}
\end{center}
\caption{$\mA\in\cM_{m,n}(\bF)$, $\vb\in\bF^m$, 
         $\bar{\vb}=\mP\vb$, $\bar{\vb}'=(\mP\vb)_{[1,r]}$} 
\label{CWGI1.4-F01}
\end{figure}

Theorem \ref{CWGI-T05} (and the cases deduced from it) is vital in 
enumerating the solutions defined by the GI. We want 
to emphasize this here. Suppose that the GI is as
in Figure \ref{CWGI1.4-F01}, row 7. Let $\bar{\vb}$ be a non-zero vector.
The matrix $\mX_1$ can be chosen so that $\mX_1\bar{\vb}$ has any distribution 
of its non-zero components. Therefore, to obtain a specific distribution 
or weight of $\mX_2$, we will have to look for a vector $\vz_1$ so 
that $\bar{\vb}-\mA_1\vz_1$ is what we want. We can then easily compute 
$\mX_1$ to satisfy the relation $\mX_1\bar{\vb} = \vz_1$.

\subsection{The subspace weight problem and a generic GID solver for it}
\label{CWGI2}

An important sub-problem of CWP is the one obtained by 
considering $\vb=\v0$.

\begin{problem}[framed]{Subspace Weight Problem (SWP)}
Instance: & $\mA\in\cM_{m,n}(\bF)$ and positive integer $t$, 
            where $\bF$ is a finite field; \\
Question: & Is there any solution $\vx_0\in\bF^n$ to $\mA\vx=\v0$ such 
            that $\Hw{\vx_0}\leq t$?  
\end{problem}

In coding theory, SWP is usually related to the low weight codeword problem
that asks to find a codeword of small weight, of the code whose 
parity-check matrix is $\mA$ (more details about it are provided 
later in Section \ref{ISD0}). 

Even if SWP is a hard sub-problem of CWP, which is 
NP-complete, it is not immediately apparent that it is also NP-complete.
It was conjectured in \cite{BeET1978} that it is NP-complete,
but the proof was later provided in \cite{Vard1997}.  

Using GIs to solve SWP only gives us 
the solution $\mX\v0=\v0$. Characterizing the kernel of the matrix 
$\mA$ by $\cN(\mA) = \cR(\mId-\mX\mA)$, where $\mX$ is a GI 
of $\mA$, could be a solution. It requires sampling vectors 
$\vc$ and computing values $(\mId-\mX\mA)\vc$. 

We will present below a method that we believe offers a good 
potential to approach the problem. 

\begin{theorem}\label{CWGI2-T01} 
Let $\mA\in\cM_{m,n}(\bF_q)$ with full row rank, $\vb\in \cR(\mA)$
with $\vb\not=\mO$, and $\mX\in\GI(\mA)$. Then,
\begin{equation}
  \cN(\mA)=\{(\mY-\mX)\vb\mid \mY\in\GI(\mA)\}.
\end{equation}
\end{theorem}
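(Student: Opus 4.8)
The plan is to prove the two inclusions separately, leaning on the earlier characterizations of $\GI(\mA)$ and of the null space. For the inclusion ``$\supseteq$'', fix $\mY\in\GI(\mA)$. Since $\vb\in\cR(\mA)$, both $\mX\vb$ and $\mY\vb$ are solutions to $\mA\vx=\vb$ by Theorem~\ref{GI-T01}. Hence $\mA(\mY\vb)=\vb=\mA(\mX\vb)$, so $\mA\big((\mY-\mX)\vb\big)=\mO$, i.e. $(\mY-\mX)\vb\in\cN(\mA)$. This direction is immediate and requires no rank or non-vanishing hypothesis.

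For ``$\subseteq$'', I would use a counting argument analogous to the one in the proof of Theorem~\ref{CWGI1-T01}. Since $\mA$ has full row rank $m$, we have $|\cN(\mA)|=q^{n-m}$. It therefore suffices to show that the map $\mY\mapsto(\mY-\mX)\vb$, restricted to $\GI(\mA)$, takes at least $q^{n-m}$ distinct values; combined with the ``$\supseteq$'' inclusion this forces equality of the two sets. To count, fix a transformation $\mP\mA\mQ=\begin{pmatrix}\mV & \mId_m\end{pmatrix}$ of $\mA$. By Corollary~\ref{GI-C01}, every GI of $\mA$ has the form $\mY=\mQ\begin{pmatrix}\mY_1\\ \mId_m-\mV\mY_1\end{pmatrix}\mP$ with $\mY_1\in\cM_{n-m,m}(\bF_q)$ arbitrary (Example~\ref{GI-E01}(3), via $\mP\mA\mQ$), and similarly $\mX=\mQ\begin{pmatrix}\mX_1\\ \mId_m-\mV\mX_1\end{pmatrix}\mP$ for some fixed $\mX_1$. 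Then
\begin{equation*}
(\mY-\mX)\vb=\mQ\begin{pmatrix}(\mY_1-\mX_1)\mP\vb\\ -\mV(\mY_1-\mX_1)\mP\vb\end{pmatrix}.
\end{equation*}
Writing $\mZ=\mY_1-\mX_1$, which ranges over all of $\cM_{n-m,m}(\bF_q)$ as $\mY_1$ does, and using that $\mP\vb\not=\mO$ because $\mP$ is invertible and $\vb\not=\mO$, the top block $\mZ\,\mP\vb$ attains every vector in $\bF_q^{n-m}$. Since $\mQ$ is invertible and the bottom block is determined by the top, the vector $(\mY-\mX)\vb$ takes exactly $q^{n-m}$ distinct values, which matches $|\cN(\mA)|$.

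The main obstacle, such as it is, is purely bookkeeping: one must make sure the transformation reduces $\mA$ to $\begin{pmatrix}\mV & \mId_m\end{pmatrix}$ (legitimate since $\rank(\mA)=m$), correctly transport the GI description through the isomorphism of Corollary~\ref{GI-C01}, and verify that $\mZ\mapsto\mZ\,\mP\vb$ is surjective onto $\bF_q^{n-m}$ — this last point is exactly where the hypothesis $\vb\not=\mO$ is used, mirroring its role in Theorem~\ref{CWGI1-T01}. No genuinely hard step arises; the argument is a direct adaptation of the counting proof already given for the coset-weight characterization, with ``$\mX\vb$'' replaced by the difference ``$(\mY-\mX)\vb$'' and ``$q^{n-m}$ solutions'' replaced by ``$q^{n-m}$ null-space vectors''.
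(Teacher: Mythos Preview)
Your proof is correct but takes a different route from the paper. The paper's argument for the inclusion $\cN(\mA)\subseteq\{(\mY-\mX)\vb\mid\mY\in\GI(\mA)\}$ is not a counting argument: given $\vv\in\cN(\mA)$, it observes that $\mX\vb+\vv$ is a solution to $\mA\vx=\vb$ (Theorem~\ref{GI-T02}), then invokes Theorem~\ref{CWGI1-T01} to write this solution as $\mY\vb$ for some $\mY\in\GI(\mA)$, yielding $\vv=(\mY-\mX)\vb$ directly. Your approach instead replays the counting proof of Theorem~\ref{CWGI1-T01} in the null-space setting, parametrizing GIs via the transformation $\mP\mA\mQ=\begin{pmatrix}\mV&\mId_m\end{pmatrix}$ and checking that the map $\mY\mapsto(\mY-\mX)\vb$ hits $q^{n-m}$ distinct values. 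The paper in fact anticipates your approach in the paragraph following the proof, noting that one can prove Theorem~\ref{CWGI2-T01} first by such a counting argument and then derive Theorem~\ref{CWGI1-T01} from it. The trade-off is clear: the paper's proof is shorter and more conceptual because it reuses Theorem~\ref{CWGI1-T01} as a black box, whereas your argument is self-contained and, being logically prior to Theorem~\ref{CWGI1-T01}, could serve as the foundation from which both results are obtained.
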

\begin{proof}
It is straightforward to see that 
$\{(\mY-\mX)\vb\mid \mY\in\GI(\mA)\}\subseteq\cN(\mA)$.

Conversely, if $\vv\in\cN(\mA)$, then $\vx=\mX\vb+\vv$ is a
solution to $\mA\vx=\vb$ (Theorem \ref{GI-T02}). On the other
side, there exists a GI $\mY$ of $\mA$ such
that $\vx=\mY\vb$ (Theorem \ref{CWGI1-T01}). Combining the two forms
of $\vx$ we get $\vv=(\mY-\mX)\vb$. 
\end{proof}

The proof of Theorem \ref{CWGI2-T01} uses Theorem \ref{CWGI1-T01}.
One can prove Theorem \ref{CWGI2-T01} first, based on a counting argument 
as in the proof of Theorem \ref{CWGI1-T01}, and then derive 
Theorem \ref{CWGI1-T01}. 

The shape of $\mA$'s kernel elements in Theorem \ref{CWGI2-T01} 
depends on the shape of the GIs. Therefore, it
depends on the transformation applied to the matrix $\mA$
(please see our discussion on this topic at the end of Section \ref{GI2}). 
Corollary \ref{CWGI2-C01} below illustrates our discussion.

\begin{corollary}\label{CWGI2-C01}
Let $\mA\in\cM_{m,n}(\bF_q)$ with full row rank, $\vb\in \cR(\mA)$
with $\vb\not=0$, and 
$\mP\mA\mQ=\begin{pmatrix} \mV & \mId_m \end{pmatrix}$ be a
transformation of $\mA$. Then, 
\begin{equation}
  \cN_{\mP,\mQ,\vb}(\mA)=
        \left\{\mQ\begin{pmatrix} \mZ \\ -\mV\mZ \end{pmatrix}\mP\vb
               \bigm| \mZ\in\cM_{n-m,m}(\bF_q)
        \right\}
\end{equation}
\end{corollary}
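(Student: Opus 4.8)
The plan is to derive Corollary \ref{CWGI2-C01} directly from Theorem \ref{CWGI2-T01} by specializing the description of $\GI(\mA)$ to the transformation $\mP\mA\mQ=\begin{pmatrix} \mV & \mId_m\end{pmatrix}$. By Corollary \ref{GI-C01}(1), every GI of $\mA$ has the form $\mX=\mQ\mX'\mP$ where $\mX'\in\GI(\mP\mA\mQ)$, and by Example \ref{GI-E01}(3) (the $\begin{pmatrix}\mA_1 & \mId_r\end{pmatrix}$ case, here with $r=m$ and $\mA_1=\mV$) every GI of $\begin{pmatrix}\mV & \mId_m\end{pmatrix}$ has the form $\begin{pmatrix}\mX_1\\\mX_2\end{pmatrix}$ subject to $\mV\mX_1+\mX_2=\mId_m$, i.e. $\mX_2=\mId_m-\mV\mX_1$. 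So the set $\GI_{\mP,\mQ}(\mA)$ is exactly $\left\{\mQ\begin{pmatrix}\mX_1\\\mId_m-\mV\mX_1\end{pmatrix}\mP \mid \mX_1\in\cM_{n-m,m}(\bF_q)\right\}$.

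Next I would fix one particular GI of $\mA$ to play the role of $\mX$ in Theorem \ref{CWGI2-T01}; the natural choice is the one obtained by taking $\mX_1=\mO$, namely $\mX=\mQ\begin{pmatrix}\mO\\\mId_m\end{pmatrix}\mP$. Then for an arbitrary GI $\mY=\mQ\begin{pmatrix}\mZ\\\mId_m-\mV\mZ\end{pmatrix}\mP$ we compute
\begin{equation*}
\mY-\mX=\mQ\begin{pmatrix}\mZ\\\mId_m-\mV\mZ\end{pmatrix}\mP - \mQ\begin{pmatrix}\mO\\\mId_m\end{pmatrix}\mP = \mQ\begin{pmatrix}\mZ\\-\mV\mZ\end{pmatrix}\mP,
\end{equation*}
so that $(\mY-\mX)\vb=\mQ\begin{pmatrix}\mZ\\-\mV\mZ\end{pmatrix}\mP\vb$. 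As $\mY$ ranges over $\GI(\mA)$, the block $\mZ$ ranges over all of $\cM_{n-m,m}(\bF_q)$, and Theorem \ref{CWGI2-T01} gives $\cN(\mA)=\{(\mY-\mX)\vb\mid\mY\in\GI(\mA)\}$. Substituting the computed form yields exactly the claimed description of $\cN_{\mP,\mQ,\vb}(\mA)$; the notation $\cN_{\mP,\mQ,\vb}(\mA)$ simply records that this is the kernel expressed via the transformation $(\mP,\mQ)$ and the vector $\vb$, and the content of the corollary is the equality $\cN_{\mP,\mQ,\vb}(\mA)=\cN(\mA)$.

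There is essentially no obstacle here: the result is a routine unwinding of the general theorem through a specific normal form, and the only point requiring a line of care is checking that the difference of the two $\mX_2$-blocks cancels the $\mId_m$, leaving $-\mV\mZ$, and that the choice of the fixed $\mX$ does not matter for the final set (it does not, since the union over $\mY$ absorbs the shift). One might also remark, for completeness, that this agrees with the kernel description already anticipated in the introduction, and that it is consistent with the standard fact $\cN(\mA)=\cR(\mId-\mX\mA)$ — but this consistency check is optional and not needed for the proof.
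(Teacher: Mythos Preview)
Your proof is correct and follows essentially the same route as the paper: invoke Theorem \ref{CWGI2-T01}, parametrize $\GI(\mA)$ via the transformation $\mP\mA\mQ=\begin{pmatrix}\mV & \mId_m\end{pmatrix}$, and compute the difference $(\mY-\mX)\vb$. The only cosmetic difference is that the paper fixes an \emph{arbitrary} $\mX$ with block $\mX_1$ and then observes that $\mZ=\mY_1-\mX_1$ ranges over all of $\cM_{n-m,m}(\bF_q)$ as $\mY_1$ does, whereas you take the specific choice $\mX_1=\mO$ so that $\mZ=\mY_1$ directly; your remark that the choice of $\mX$ is immaterial is exactly the paper's substitution argument.
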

\begin{proof}
Let $\mX=\mQ\begin{pmatrix} \mX_1 \\ \mId_m-\mV\mX_1 \end{pmatrix}\mP$
be an arbitrary but fixed GI of $\mA$. 
Then, according to Theorem \ref{CWGI2-T01},
  $$\cN_{\mP,\mQ,\vb}(\mA)=\{(\mY-\mX)\vb\mid \mY\in\GI(\mA)\}.$$
For any $\mY\in\GI(\mA)$ there exits $\mY_1$ such that 
$\mY=\mQ\begin{pmatrix} \mY_1 \\ \mId_m-\mV\mY_1 \end{pmatrix}\mP$.
Then,
\begin{equation}
 (\mY-\mX)\vb=\mQ\begin{pmatrix} \mY_1-\mX_1 \\ 
                -\mV(\mY_1-\mX_1) \end{pmatrix}\mP\vb.
\end{equation}
When $\mY_1$ spans $\cM_{n-m,m}(\bF_q)$, $\mZ=\mY_1-\mX_1$ spans
the same space because $\mX_1$ is fixed. So, we get the result.
\end{proof}

Theorem \ref{CWGI2-T01} allows us to design algorithms for SWP 
similar to those for CWP. 
Algorithm \ref{CWGI2-Alg02} below is a generic GID solver for SWP.

\begin{algorithm}[!h]
    \caption{GID solver for SWP}
    \label{CWGI2-Alg02}
    \begin{algorithmic}[1]
        \Function{SWGI\_solve}{$\mA,t$} 
        \Repeat
        \State Choose a transformation $(\mP,\mQ)$ of $\mA$;
        \State Choose $\vb\in\cR(\mA)$ with $\vb\not=0$;
        \State $\vv\gets\cN_{\mP,\mQ,\vb}(\mA)$ until $\Hw{\vv}\leq t$
               or no more sampling is allowed;
        \Until a solution $\vv$ is found or no more transformation is
               allowed;
        \State\textbf{return} solution $\vv$ or ``fail''. 
        \EndFunction
    \end{algorithmic}
\end{algorithm}

It should be understood that steps 3 and 5 in Algorithm \ref{CWGI2-Alg02}
are performed under 
various strategies, each leading to a variant of this generic algorithm.
The vector $\vb$ can be obtained by making the product between $\mA$ 
and a randomly chosen vector $\vx_0$. $\vb$ must be non-zero and 
preferably have as little weight as possible to reduce computational 
costs.

Algorithm \ref{CWGI2-Alg02} will be illustrated on Lee-Brickell, Leon, 
and Stern's approaches in Section \ref{ISD}.

\section{Information set decoding and GID decoding}\label{ISD}

In this section, we will show that many of the information set
decoding (ISD) techniques used to solve the syndrome decoding
problem in the theory of linear codes fit very well into the 
GID approach.
In addition to the fact that the use of the GI 
provides a clearer overview of the problem, it also allows for 
improvements to the ISD techniques proposed so far.

\subsection{Information set decoding as a cryptographic attack}\label{ISD0}

``Information set decoding'' refers to a class of techniques used in 
coding theory, especially in decoding problems, based on the concept 
of an information set.
To better understand things, we will make a brief historical foray 
into developing the concept of information set decoding and its 
connection with cryptography.

The concept of information set was introduced in 1962 by Prange 
\cite{Pran1962} in its attempt to propose a method of decoding 
cyclic codes. The basic idea was that, given a word $w$ to be decoded, 
to build a set of code words that match $w$ on a certain set of 
positions called {\em information set}. 
Formally, the concept of information set is as follows. 
A {\em systematic code} is an [n, k] code in which every code word 
can be separated into $k$ information symbols and $(n–k)$ check 
symbols. The information symbols are identical with those of the 
source message before encoding. Thus, the process of encoding a 
systematic code involves the insertion of $(n–k)$ check symbols 
among the information symbols. The insertion positions must be 
the same for all the code words in the code. The set of positions 
of the information symbols is called an {\em information set}. 
Every linear code can be arranged to be systematic.

Generally speaking, the techniques based on the concept of information 
set locate a potential information-set in a generator matrix, 
parity-check matrix, codeword, and perform various processing on it. 
Next, we will see the connection between these techniques and 
cryptanalysis.

Public-key cryptography was born in May 1975 (according to  
\cite{Diff1988}) as the ``child of two problems'',  key distribution 
and digital signature.
Diffie and Helmann's 1976 paper \cite{DiHe1976} has hugely 
impacted cryptography and information security by proposing 
the key exchange method known today as the 
{\em Diffie-Hellman key exchange} and suggesting the 
public-key cryptosystem and digital signature concepts. 
One immediately noted the importance and necessity of one-way 
functions to design public-key cryptographic primitives. 
It is not surprising that the 1976-1978 period was characterized 
by an effervescent proposal of public-key cryptographic 
primitives.
Thus, in March 1977 (published May 1978), Berlekamp et al. showed 
that the problem of general decoding linear codes is NP-complete. 
In Jan-Feb 1978, McEliece proposes a public-key cryptosystem 
that bases its security on the problem of general decoding 
\cite{McEl1978}. 
For a long time, this cryptosystem was seen only as an 
``alternative to public-key cryptography based on number theory''. 
Things changed significantly when it was understood that this 
cryptosystem could provide security against quantum attacks, 
which is not the case with the systems based on factorization 
or discrete logarithm. This fact has led to McEliece's 
cryptosystem receiving much attention lately, being 
re-evaluated, re-analyzed, and looking for ways to make 
it more efficient.

\begin{cipher}[McEliece's cryptosystem \cite{McEl1978}]
Let $\mG\in\cM_{k,n}(\bF_2)$
be a generator matrix for a $t$-error correcting $[n,k]$ linear code. 
Choose $\mS\in\cGL_{k}(\bF_2)$ and $\mP\in\cS_{n}(\bF_2)$ 
and compute $\bar{\mG}=\mS\mG\mP$.
$\bar{\mG}$ is the public key. To encrypt a message
$\vm\in\bF_2^k$ randomly generate an $n$-bit vector $\ve$ with 
$\Hw{\ve}\leq t$ and compute 
\begin{equation}\label{ISD0-Eq01}
  \vc^t=\vm^t\bar{\mG}+\ve^t.
\end{equation}
To decrypt $\vc$, compute $\vc^t\mP^{-1}$ and then use a decoding
algorithm for $\mG$ to get rid of $\ve^t\mP^{-1}$ (remark that 
$\ve^t\mP^{-1}$ has Hamming weight at most $t$) and recover 
$\vm^t\mS$. Finally, get $\vm$ by means of $\mS^{-1}$. 
\end{cipher}

McEliece's security analysis of the proposed cryptosystem links 
to information set decoding, even if not explicitly. Thus, he 
says in \cite{McEl1978}: 
``A more promising attack is to select $k$ of the $n$ coordinates
randomly in hope that none of the $k$ are in error, and based on 
this assumption, to calculate $u$''. 
The method was not much exploited by McEliece but is re-discussed 
in \cite{RaNa1986} and \cite{AdMe1987}.

Let $\mA_I$ denote the restriction of the matrix $\mA$ to a given
set $I$ of column positions, that is the matrix obtained from $\mA$ 
by removing all columns whose index is not in $I$. Similarly define 
$\vv^t_I$ for vectors $\vv$. 
Assume now that $I$ has cardinality $k$. Then, one can easily obtain
  $$\vc^t_I=\vm^t\bar{\mG}_I+\ve^t_I.$$
If $I$ is a set of error free positions (that is, $I$ is an information
set), then $\ve^t_I=\v0^t$ and so,
one can recover $\vm$ by using $(\bar{\mG}_I)^{-1}$. 
As a result, the attack consists of repeating the following procedure:
\begin{itemize}
\item Randomly select $k$ positions in the ciphertext and restrict 
  it to them;
\item Apply the inverse of the matrix $\bar{\mG}$.
\end{itemize} 
If these $k$ positions are error-free (that is, they form an information
set), then the message $\vm$ is obtained. 
But how can we know that the resulting message is $\vm$? 
We will not go into details on this question, but we refer the reader
to \cite{LeBr1988} for an answer.

Let us consider now the parity-check matrix 
$\bar{\mH}\in\cM_{n-k,n}(\bF_2)$ associated to $\bar{\mG}$. 
If we multiply \eqref{ISD0-Eq01} to the right by $\bar{\mH}^t$,
we obtain
\begin{equation}
  \ve^t\bar{\mH}^t=\vs^t,
\end{equation}
where $\vs^t=\vc^t\bar{\mH}^t$ is the {\em syndrome} of $\vc^t$ and 
$\ve$ is the unknown vector. 

The above attack now focuses on determining the error instead of 
determining the message. 
But this is a specific attack on Niederreiter's cryptosystem \cite{Nied1986}.

\begin{cipher}[Niederreiter's cryptosystem \cite{Nied1986}]
Let $\mH\in\cM_{n-k,n}(\bF_2)$
be a parity-check matrix for a $t$-error correcting $[n,k]$ linear code. 
Choose $\mS\in\cGL_{k}(\bF_2)$ and $\mP\in\cS_{n}(\bF_2)$ and compute $\bar{\mH}=\mS\mH\mP$.
$\bar{\mH}$ is the public key. 
To encrypt $\ve\in \bF_2^n$ of weight $t$, compute 
\begin{equation}\label{ISD0-Eq02}
  \vc=\bar{\mH}\ve.
\end{equation}
To decrypt $\vc$, compute $\mS^{-1}\vc$ and then use a decoding
algorithm to recover $\mP\ve$. Finally, get $\ve$ by means of $\mP^{-1}$. 
\end{cipher}

So, what we have described above shows that if Niederreiter's cryptosystem  
can be easily broken, then McEliece's cryptosystem can be easily broken, 
and vice versa (for more details see \cite{LiDW1994}). 
As a result, solving matrix equation \eqref{ISD0-Eq02} becomes critical 
to the security of both cryptosystems.
Equation \eqref{ISD0-Eq02} is an instance of the 
SDP in the theory of linear codes. 

\begin{problem}[framed]{Syndrome Decoding Problem (SDP)}
Instance: & $\mH\in\cM_{n-k,n}(\bF_q)$ of full rank,
            syndrome $\vs\in\bF_q^{n-k}$, and positive integer $t$; \\
Question: & Is there a solution $\ve$ of Hamming weight 
            at most $t$ to the equation $\mH\ve=\vs$? 
\end{problem}

SDP is a particular case of CWP. Recall that, given an $[n,k]$ linear code 
$\cC$, a generator matrix $\mG$ and a
parity-check matrix $\mH$ of $\cC$, we may write 
\begin{equation}
  \cC = \{\vm^t\mG \mid \vm\in\bF_2^k\}
      = \{\vc^t\in\bF_2^n \mid \mH\vc=\v0\}
\end{equation}

We notice that the determination of a codeword of weight less than 
or equal to some positive integer $t$ is a sub-problem of SDP
(when the syndrome is zero) or, more appropriate, it is a 
particular case of SWP.  
This is often met as the {\em low weight codeword} problem or the 
{\em minimum weight codeword} problem, even if ``low weight'' is not 
necessarily ``minimum weight''.

\begin{problem}[framed]{Low Weight Codeword Problem (LWP)}
Instance: & Linear code $\cC$ specified by $\mG$ or $\mH$ and positive integer $t$; \\
Question: & Is there a nonzero codeword $c\in\cC$ such that 
            $\Hw{c}\leq t$? 
\end{problem}

LWP is NP-complete, while the computational version 
of it is NP-hard \cite{Vard1997}. 
This problem, both in decisional and computational form, is 
important in coding theory and cryptography. For instance, easily 
solving it leads to an attack on McEliece and Niederreiter's 
cryptosystems. 
Indeed, if $\vc$ is a McEliece ciphertext, then adding $\vc$ to 
the generator matrix $\bar{\mG}$ leads to a generator matrix 
$\tilde{\mG}$ for a $[k+1,n]$ linear code. 
In addition, there is a single codeword 
$d\in\langle\bar{\mG}\rangle\cap \langle\tilde{\mG}\rangle$ 
with Hamming distance $d(\vc,\vd)=t$. As a result, $\Hw{\vc-\vd}=t $. 
Moreover, no other codeword has weight $t$.
Once $\vc-\vd$ is computed, knowing $\vc$, we get $\vd$ which 
is undoubtedly the original message.

Over time, many algorithms for finding low-weight binary codewords 
have been proposed, based on the concept of information set 
(although this has not always been explicitly stated): 
\cite{Leon1988,Ster1988,CaCh1998,FiSe2009,BeLP2011,MaMT2011,BJMM2012,MaOz2015}. 
All these algorithms have been developed for the binary case. 
But there are also extensions to arbitrary finite fields, such as
\cite{CoGo1990,Pete2010,Meur2012,Hiro2016,NPCBB2017,WBSCBP2020,HoWe2021}. 

Throughout the next sections, we will denote by $(\mH,\vs,t)$ 
($(\mH,t)$) a generic instance of SDP (LWP), where 
$\mH\in\cM_{n-k,n}(\bF_q)$ is a full rank matrix, $\vs\in\bF_q^{n-k}$, 
and $t\leq n$ is a positive integer. We will also use the notation 
$r=n-k$, which is the rank of $\mH$.

\subsection{Prange's approach}\label{ISD1}

We will describe in this section Prange's approach \cite{Pran1962}
to SDP. 
We will then show that this is a particular case of our GID approach. 
We will show that Prange's ISD algorithm iterates through a smaller 
solution space than the total solution space. At the end of the section, 
we suggest improvements to Prange's method.

\subsubsection{Prange's algorithm}\label{ISD1.1}

The method proposed by Prange in the context of cyclic codes consists 
of the following. Having a word with possible errors, which we must 
decode, guess the error-free positions, and move them to the left, 
thus obtaining the error positions on the right. This method was also 
applied to McEliece's cryptosystem (see Section \ref{ISD0}) except 
that the cryptotext was restricted to error-free positions to decode it 
using the inverse of the public matrix (restricted to the same positions).

When using the parity-check matrix, Prange's method is equivalent to 
determining the error in an equation $\mH\ve=\vs$. In this case, packing 
the error-free positions to the left of $\mH$ means translating the error 
bits in $\ve$ to the base of the vector (leaving its top without error). 
This results immediately from the fact that the equation $\mH\ve = \vs$ 
is equivalent to $(\mH\mQ)(\mQ^{-1}\ve)=\vs$, for any permutation matrix 
$\mQ$.

Specifically, Prange's method in this context is as follows
($(\mH,\vs,t)$ is an instance of SDP):
\begin{enumerate}
\item Guess an information set $\cI\subseteq\{1,\ldots,n\}$ of size 
      $k$ and assume that each $i\in \cI$ is an error-free
      position;
\item Guess a permutation matrix $\mQ\in\cS_n(\bF)$ and compute a non-singular
      matrix $\mP\in\cGL_r(\bF)$ so that all columns indexed 
      by $\cI{}$ are packed to the left and $\mH$ is transformed into 
      \begin{equation}\label{ISD1-Eq01}
         \mP\mH\mQ=\begin{pmatrix} \mV & \mId_r \end{pmatrix}
      \end{equation}
      for some matrix $\mV$, where $r=n-k$ (if this is not possible, a different 
      permutation matrix is chosen). Remark that the information set $\cI$ 
      in $\mH$ becomes now the set of the first $k$ positions in $\mP\mH\mQ$
      ($\mV$ may not coincide with the sub-matrix of $\mH$ determined by $\cI$ 
      due to the use of $\mP$);  
\item The equivalences   
      $$
      \begin{array}{lcl}
       \mH\vx = \vs & \Lra & \mP^{-1} \begin{pmatrix} \mV & \mId_r \end{pmatrix} 
                           \mQ^{-1}\vx = \vs \\ [.5ex]
              & \Lra & \begin{pmatrix} \mV & \mId_r \end{pmatrix} \mQ^{-1}\vx = \mP\vs
      \end{array}
      $$
      show that $\vx$ is a solution to $\mH\vx=\vs$ if and only if $\vz=\mQ^{-1}\vx$
      is a solution to 
      \begin{equation}
      \label{ISD1-Eq02}
        \begin{pmatrix} \mV & \mId_r \end{pmatrix} \vz = \mP\vs. 
      \end{equation}
      
      However, (\ref{ISD1-Eq02}) becomes 
      \begin{equation}\label{ISD1-Eq03}
         \begin{pmatrix} \mV & \mId_r \end{pmatrix}
         \begin{pmatrix} \vz_1 \\ \vz_2 \end{pmatrix} 
         =\mP\vs
      \end{equation}
      and thus
      \begin{equation}\label{ISD1-Eq04}
        \mV\vz_1 + \vz_2 = \mP\vs, 
      \end{equation}
      where $\vz_1\in\bF^k$, $\vz_2\in\bF^r$, and 
      $\vz=\begin{pmatrix} \vz_1 \\ \vz_2 \end{pmatrix}$. 
      
      The assumption that $\vz_1$ is error-free because it corresponds to the 
      information set after permutation, leads to $\vz_1=\mO$ and $\vz_2=\mP\vs$. 
      Re-permuting $\vz$ we get 
      $\vx=\mQ\begin{pmatrix} \mO \\ \mP\vs \end{pmatrix}$, which is a 
      solution to $\mH\vx=\vs$. 
      Moreover, remark that $\Hw{\vx}=\Hw{\vz}=\Hw{\mP\vs}$ 
      because $\mQ$ is a permutation. 
\end{enumerate}

Let us refer to a pair $(\mP,\mQ)$ of matrices as in 
\eqref{ISD1-Eq01} as a {\em Prange transformation} of $\mH$. 
A solution obtained employing a Prange transformation $(\mP,\mQ)$
will be referred to as a {\em Prange $(\mP,\mQ)$-solution}. Clearly, 
it is uniquely determined by $(\mP,\mQ)$.
A {\em Prange solution} is a Prange $(\mP,\mQ)$-solution, for some 
Prange transformation $(\mP,\mQ)$.

\subsubsection{Prange's approach as a GI-based approach}\label{ISD1.2}

Prange's approach is a particular case of our generic GID solver
for CWP (Algorithm \ref{CWGI1-Alg01}). 
To see that, let us consider the transformation \eqref{ISD1-Eq01}.
According to Corollary \ref{CWGI-C03} and Remark \ref{CWGI-R08}, 
each GI of $\mH$ has the form 
\begin{equation}\label{ISD1-Eq05}
\mX=\mQ\begin{pmatrix} \mX_1 \\ \mX_2 \end{pmatrix}\mP,
\end{equation}
where $\mX_1$ and $\mX_2$ are arbitrary matrices that verify 
$\mV\mX_1+\mX_2=\mId_r$ (please see the notation from (\ref{ISD1-Eq01})).
If we take $\mX_1=0$ and $\mX_2=\mId_r$, 
$x=\mX\vs=\mQ\begin{pmatrix} \mO \\ \mP \vs \end{pmatrix}$ 
is a solution to $\mH\vx=\vs$. This is in fact the Prange $(\mP,\mQ)$-solution.

We can thus say that Prange's algorithm iterates through a particular 
subset of GIs $\mX$ (the matrix $\mQ$ in the 
transformation $(\mP,\mQ)$ is a permutation, $\mX_1 = \mO$, and 
$\mX_2 = \mId_r$) until it finds one with $\Hw{\mX\vs}\leq t$. 
The fact that $\mQ$ is a permutation ensures that the Hamming weight 
of the solution is the Hamming weight of $\mP\vs$. Therefore,
Prange's algorithm is a particular case of our GI-based approach, fact that is formally stated in the following.
\begin{corollary}
Let $\mH\in\cM_{n-k,n}(\bF)$ be a full rank matrix and $\vs\in\bF^{n-k}.$ Prange's algorithm generates solutions to the equation $\mH\vx=\vs$ of the form $\mX\vs$ with
\begin{equation}
    \mX\in\left\{\mQ\begin{pmatrix} \mO \\ \mId_r
                  \end{pmatrix}\mP \mid
              (\mP,\mQ)\in \cGL_r(\bF)\times \cS_n(\bF),\ 
              (\exists\mV:\,\mP\mH\mQ=\begin{pmatrix} 
                                          \mV&  \mId_r
                                       \end{pmatrix})\right\}.
\end{equation}
\end{corollary}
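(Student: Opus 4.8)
The plan is to unpack the two ingredients that the preceding discussion already puts in place, namely the explicit form of a generalized inverse coming from Corollary~\ref{CWGI-C03} together with Remark~\ref{CWGI-R08}, and the description of Prange's algorithm given in Section~\ref{ISD1.1}. Concretely, I would argue as follows. First, fix a Prange transformation $(\mP,\mQ)\in\cGL_r(\bF)\times\cS_n(\bF)$ of $\mH$, that is, a pair with $\mP\mH\mQ=\begin{pmatrix}\mV&\mId_r\end{pmatrix}$ for some $\mV$; such a pair exists because $\mH$ has full row rank $r=n-k$, so some set of $r$ columns is invertible and the corresponding column permutation $\mQ$ together with the row operations recorded in $\mP$ produces the required shape. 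By Corollary~\ref{CWGI-C03} (with the $\mId_r$ block on the right, as in Remark~\ref{CWGI-R08}), every $\mX\in\GI(\mH)$ obtained from this transformation has the form $\mX=\mQ\begin{pmatrix}\mX_1\\\mX_2\end{pmatrix}\mP$ with $\mV\mX_1+\mX_2=\mId_r$. In particular, the choice $\mX_1=\mO$ forces $\mX_2=\mId_r$, so $\mX=\mQ\begin{pmatrix}\mO\\\mId_r\end{pmatrix}\mP$ is a genuine generalized inverse of $\mH$.

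Next I would verify that this particular $\mX$ reproduces exactly the Prange $(\mP,\mQ)$-solution. Multiplying out, $\mX\vs=\mQ\begin{pmatrix}\mO\\\mP\vs\end{pmatrix}$, which is precisely the solution derived in Section~\ref{ISD1.1} by setting $\vz_1=\mO$, $\vz_2=\mP\vs$ and re-permuting via $\mQ$; and Theorem~\ref{GI-T01} (or a direct check) confirms $\mH(\mX\vs)=\vs$. Since Prange's algorithm, as recalled above, iterates over choices of information set, equivalently over Prange transformations $(\mP,\mQ)$, and for each such transformation outputs exactly this vector $\mX\vs$ while testing whether $\Hw{\mX\vs}\le t$, the set of vectors it ever produces is exactly $\{\mX\vs\mid \mX\in\mathcal{X}\}$, where $\mathcal{X}$ is the set displayed in the statement. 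This is really just a re-packaging of the content of Section~\ref{ISD1.2}, where the same observation is made informally; the corollary simply records it formally.

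The one point that deserves a sentence of care is the logical direction of the set equality implicit in the claim: every matrix in $\mathcal{X}$ is a valid generalized inverse of $\mH$ (so every vector $\mX\vs$ is a legitimate solution, which is what makes Prange's guesses meaningful), and conversely every solution Prange outputs arises this way. The first half is the easy computation above ($\mX_1=\mO,\mX_2=\mId_r$ satisfies $\mV\mX_1+\mX_2=\mId_r$); the second half is the definitional description of the algorithm. There is no real obstacle here — the work was already done in Corollary~\ref{CWGI-C03}, Remark~\ref{CWGI-R08}, and the analysis in Section~\ref{ISD1.1}–\ref{ISD1.2}. If anything, the mildly delicate step is making sure the quantifier ``$(\exists\mV:\mP\mH\mQ=\begin{pmatrix}\mV&\mId_r\end{pmatrix})$'' is used correctly: it should be read as ranging over all $(\mP,\mQ)$ for which \emph{some} such $\mV$ exists, i.e.\ over all Prange transformations, and for each of them $\mV$ is then determined and the displayed $\mX$ is independent of how one writes $\mV$. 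So the proof is essentially: ``Apply Corollary~\ref{CWGI-C03} and Remark~\ref{CWGI-R08} with $\mX_1=\mO$; compare with the solution constructed in Section~\ref{ISD1.1}.''
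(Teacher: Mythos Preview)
Your proposal is correct and follows essentially the same approach as the paper: the corollary is stated immediately after the informal derivation in Section~\ref{ISD1.2}, and the paper offers no separate proof beyond the preceding discussion (which invokes Corollary~\ref{CWGI-C03}, Remark~\ref{CWGI-R08}, and the choice $\mX_1=\mO$, $\mX_2=\mId_r$ to recover the Prange $(\mP,\mQ)$-solution). Your write-up is simply a more explicit unpacking of that same argument.
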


The question now is: does the space of the GIs computed 
by Prange's algorithm cover the entire space $\GI(\mH)$ or not?
This question is justified because Corollary \ref{CWGI-C03} shows us 
how to compute $\GI(\mH)$ using only a transformation $(\mP,\mQ)$
arbitrarily chosen from $\cGL_r(\bF)\times\cGL_n(\bF)$. 
On the other side, Prange's algorithm iterates on all transformations
in $\cGL_r(\bF)\times\cS_n(\bF)$ and, for each transformation
$(\mP,\mQ)$, it computes the generalized inverse 
$\mQ\begin{pmatrix} \mO \\ \mId_r \end {pmatrix}\mP$.

To answer this question we need an intermediate result regarding the 
GIs of a matrix.

\begin{theorem}\label{CWGI-T05}
Let $\mH\in\cM_{n-k,n}(\bF)$ be a full rank matrix. Then, the set 
$\GI(\mH)$ consists of all matrices
$\mX=\mQ\begin{pmatrix}\mO \\ \mId_r\end{pmatrix}\mP$, where 
$(\mP,\mQ)\in \cGL_{n-k}(\bF)\times \cGL_n(\bF)$ is a transformation
of $\mH$ such that 
$\mP\mH\mQ=\begin{pmatrix} \mV & \mId_r \end{pmatrix}$, for some 
$\mV\in\cM_{n-k,k}(\bF)$. 
\end{theorem}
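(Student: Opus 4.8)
The plan is to prove the stated set equality by two inclusions, after first isolating the elementary but crucial observation that, since $\mH$ has full row rank $r=n-k$, a generalized inverse of $\mH$ is nothing but a right inverse. Indeed, $\mH\mX\mH=\mH$ rewrites as $(\mH\mX-\mId_r)\mH=\mO$; reading this row by row, each row $\vv^t$ of $\mH\mX-\mId_r$ then satisfies $\vv^t\mH=\vO^t$, and since the $n-k$ rows of $\mH$ are linearly independent this forces $\vv=\vO$. Hence $\mX\in\GI(\mH)$ if and only if $\mH\mX=\mId_r$, and in that case $\mX$ has full column rank $r$ (from $\mH\mX=\mId_r$ one gets $\mX\vv=\vO\Rightarrow\vv=\mH\mX\vv=\vO$). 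I would state this as a short preliminary step, as it is the only place full row rank is genuinely used.

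For the inclusion ``$\supseteq$'', let $(\mP,\mQ)\in\cGL_{n-k}(\bF)\times\cGL_n(\bF)$ with $\mP\mH\mQ=\begin{pmatrix}\mV&\mId_r\end{pmatrix}$ and put $\mX=\mQ\begin{pmatrix}\mO\\\mId_r\end{pmatrix}\mP$. Writing $\mH=\mP^{-1}\begin{pmatrix}\mV&\mId_r\end{pmatrix}\mQ^{-1}$ and multiplying out, $\mH\mX=\mP^{-1}\begin{pmatrix}\mV&\mId_r\end{pmatrix}\begin{pmatrix}\mO\\\mId_r\end{pmatrix}\mP=\mId_r$, so by the preliminary step $\mX\in\GI(\mH)$. (Equivalently, this is the case $\mX_1=\mO$, $\mX_2=\mId_r$ of Remark \ref{CWGI-R08}.)

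For the reverse inclusion, take any $\mX\in\GI(\mH)$. By the preliminary step $\mX$ has full column rank, so its $r$ columns extend to a basis of $\bF^n$; collecting the $k=n-r$ extra basis vectors into a matrix $\mB\in\cM_{n,k}(\bF)$, the matrix $\mQ=\begin{pmatrix}\mB&\mX\end{pmatrix}$ is invertible. With $\mP=\mId_r\in\cGL_{n-k}(\bF)$ we obtain $\mP\mH\mQ=\begin{pmatrix}\mH\mB&\mH\mX\end{pmatrix}=\begin{pmatrix}\mV&\mId_r\end{pmatrix}$ where $\mV=\mH\mB\in\cM_{n-k,k}(\bF)$, so $(\mP,\mQ)$ is a transformation of $\mH$ of exactly the form the statement asks for, and $\mQ\begin{pmatrix}\mO\\\mId_r\end{pmatrix}\mP=\mX$ by construction. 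This writes $\mX$ in the claimed shape and completes the proof.

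The only real obstacle is the preliminary step: once one sees that over a full row rank matrix ``generalized inverse'' collapses to ``right inverse'', the rest is column-space bookkeeping (basis completion), with no case analysis on the rank. The two remaining computations with block matrices are routine.
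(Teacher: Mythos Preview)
Your proof is correct and takes a genuinely different, more direct route than the paper's. The paper fixes at the outset some Prange transformation $(\mP,\mQ)\in\cGL_r(\bF)\times\cS_n(\bF)$ with $\mP\mH\mQ=\begin{pmatrix}\mV&\mId_r\end{pmatrix}$, writes the given $\mX$ in the corresponding block form $\mQ\begin{pmatrix}\mX_1\\\mX_2\end{pmatrix}\mP$ with $\mV\mX_1+\mX_2=\mId_r$, deduces that $\begin{pmatrix}\mX_1\\\mX_2\end{pmatrix}$ has rank $r$ from the product with $\begin{pmatrix}\mV&\mId_r\end{pmatrix}$, and then applies a column Gaussian elimination $\begin{pmatrix}\mX_1\\\mX_2\end{pmatrix}=\mZ\begin{pmatrix}\mO\\\mId_r\end{pmatrix}\mY$ to obtain the new pair $(\bar\mP,\bar\mQ)=(\mY\mP,\mQ\mZ)$, finally checking by a block computation that $\bar\mP\mH\bar\mQ$ again has $\mId_r$ in its right block. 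By contrast, you isolate upfront the fact that full row rank collapses ``generalized inverse'' to ``right inverse'', which immediately gives $\mX$ full column rank; a basis completion then manufactures $\mQ=\begin{pmatrix}\mB&\mX\end{pmatrix}$ with $\mP=\mId_r$, and both required identities are one-line verifications. Your argument is shorter and more conceptual, and incidentally shows that one may always take $\mP=\mId_r$. The paper's approach, on the other hand, is phrased as a transformation \emph{of an existing} Prange decomposition into a general one, which dovetails with the paper's subsequent remark (Remark~\ref{CWGI-R10}) that the obstruction to staying within Prange's search space is precisely that $\mZ$ (hence $\bar\mQ$) need not be a permutation.
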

\begin{proof}
Let $\mH\in\cM_{n-k,n}(\bF)$ be a full rank matrix and let $r$ 
denote $n-k$. 

It is straightforward to show that all matrices $\mX$ as in the 
theorem are GIs of $\mH$. 

Let $\mX$ now be a GI of $\mH$.
There exists a transformation $(\mP,\mQ)\in\cGL_r(\bF)\times\cS_n(\bF)$ 
such that $\mP\mH\mQ = \begin{pmatrix} \mV & \mId_r \end{pmatrix}$,
for some matrix $\mV$. According to Remark \ref{CWGI-R08}, there 
are two matrices $\mX_1$ and $\mX_2$ such that   
$\mX = \mQ\begin{pmatrix} \mX_1\\ \mX_2\end{pmatrix} \mP$
and $\mV\mX_1 + \mX_2 = \mId_r$.
We have to show that $\mX$ can be written as 
$\mX = \bar{\mQ}\begin{pmatrix} \mO\\ \mId_r\end{pmatrix} \bar{\mP}$,
for some transformation 
$\bar{\mP}\mH\bar{\mQ}=\begin{pmatrix} \bar{\mV} & \mId_r \end{pmatrix}$.

We prove first that the rank of the matrix 
$\begin{pmatrix} \mX_1 \\ \mX_2 \end{pmatrix}$ is $r$. 
This simply follows from the relation 
  $$\begin{pmatrix} \mV & \mId_r\end{pmatrix}
    \begin{pmatrix} \mX_1 \\ \mX_2 \end{pmatrix} = \mId_r$$
and the well-known rank inequality \cite{Gent2017} 
\begin{equation*}
    \rank(\mA\mB)\leq \min\{\rank(\mA),\rank(\mB)\}.
\end{equation*} 

Since, $\begin{pmatrix} \mX_1 \\ \mX_2 \end{pmatrix}$ is a full column 
rank matrix, we can apply a Gaussian column elimination, which is 
equivalent to computing two operator matrices $\mZ$ and $\mY$ such that 
  $$\begin{pmatrix} \mX_1 \\ \mX_2 \end{pmatrix}=
    \mZ\begin{pmatrix} \mO \\ \mId_r \end{pmatrix} \mY.$$
Now it follows that 
  $$\mX=\mQ\begin{pmatrix} \mX_1 \\ \mX_2 \end{pmatrix} \mP
      =\mQ\mZ\begin{pmatrix} \mO \\ \mId_r \end{pmatrix} \mY\mP
      =\bar{\mQ}\begin{pmatrix} \mO \\ \mId_r \end{pmatrix} \bar{\mP},$$
where $\bar{\mQ}=\mQ\mZ$ and $\bar{\mP}=\mY\mP$.  

It remains to show that there exists a matrix $\bar{\mV}$ such that 
$\bar{\mP}\mH\bar{\mQ}=\begin{pmatrix}\bar{\mV} & \mId_r\end{pmatrix}$.
Decompose $\mZ$ into blocks
$\mZ=\begin{pmatrix} \mZ_1 & \mZ_2 \\ \mZ_3 & \mZ_4 \end{pmatrix}$ 
of appropriate sizes and write:
\begin{align*}
 \bar{\mP}\mH\bar{\mQ} & = \mY(\mP\mH\mQ)\mZ \\
     & = \mY\begin{pmatrix} \mV & \mId_r \end{pmatrix} 
            \begin{pmatrix} \mZ_1 & \mZ_2 \\ \mZ_3 & \mZ_4 \end{pmatrix} \\
     & = \begin{pmatrix} \mY(\mV\mZ_1+\mZ_3) & \mY(\mV\mZ_2+\mZ_4) \end{pmatrix}.
\end{align*} 

But,
  $$\begin{pmatrix} \mX_1 \\ \mX_2 \end{pmatrix}=
    \mZ\begin{pmatrix} \mO \\ \mId_r \end{pmatrix} \mY=
    \begin{pmatrix} \mZ_2\mY \\ \mZ_4\mY \end{pmatrix}$$
and the matrix equation $\mV\mX_1+\mX_2=\mId_r$ leads then to
$\mV\mZ_2+\mZ_4=\mY^{-1}$. This proves that
  $$\bar{\mP}\mH\bar{\mQ} = 
    \begin{pmatrix} \bar{\mV} & \mId_r \end{pmatrix},$$
where $\bar{\mV}=\mY(\mV\mZ_1+\mZ_3)$. 
\end{proof}

In what follows we will discuss several consequences of 
Theorem \ref{CWGI-T05}.
First, it is good to face $\GI(\mH)$ both according to 
Remark \ref{CWGI-R08} and to Theorem \ref{CWGI-T05}:
\begin{itemize}
\item {\em Computing $\GI(\mH)$ by fixing a transformation}: 
  According to Remark \ref{CWGI-R08}, 
      for a given transformation $(\mP,\mQ)\in \cGL_r(\bF)\times \cS_n(\bF)$
      with $\mP\mH\mQ=\begin{pmatrix} \mV & \mId_r\end{pmatrix}$ for some
      $\mV$, we have 
    \[\GI(\mH) = \left\{\mQ\begin{pmatrix} \mX_1 \\ \mX_2\end{pmatrix}\mP 
           \mid \mV\mX_1+\mX_2=\mId_r\right\};\]
\item {\em Computing $\GI(\mH)$ by fixing a GI}: 
  According to Theorem \ref{CWGI-T05}, we have 
  \begin{equation*}
  \GI(\mH) = \left\{\mQ\begin{pmatrix} \mO \\ \mId_r
                  \end{pmatrix}\mP \mid
              (\mP,\mQ)\in \cGL_r(\bF)\times \cGL_n(\bF),\ 
             (\exists\mV:\,\mP\mH\mQ=\begin{pmatrix} 
                                          \mV & \mId_r
                                       \end{pmatrix})\right\}.
\end{equation*}
\end{itemize} 

\begin{remark}\label{CWGI-R10}
In the proof of Theorem \ref{CWGI-T05}, it is not guaranteed that the 
matrix $\mZ$ is a permutation. As a result, $\bar{\mQ}$ may not be a 
permutation. This shows us that Prange's algorithm does not iterate 
through all the GIs of the matrix $\mH$, but only 
on those GIs 
$\mQ\begin{pmatrix} \mO \\ \mId_r
    \end{pmatrix}\mP$
for which $\mQ$ is a permutation. 
\end{remark}

\begin{remark}
Any Prange-like approach that generates solutions to the equation $\mH\vx=\vs$ 
of the form $\mX\vs$ with
\begin{equation}
    \mX\in\left\{\mQ\begin{pmatrix} \mX_1 \\ \mId_r-\mV\mX_1
                  \end{pmatrix}\mP \mid               (\mP,\mQ)\in \cGL_r(\bF)\times \cS_n(\bF),\ 
              (\exists\mV:\,\mP\mH\mQ=\begin{pmatrix} 
                                          \mV&  \mId_r
                                       \end{pmatrix})\right\},
\end{equation}
where $\mX_1$ is fixed, does not cover the set of all solutions to $\mH\vx=\vs$. 
This can be obtained by a similar proof line to Theorem \ref{CWGI-T01}. 
As we shall see in the next subsections, this is the case of many ISD variants, 
such as, Lee-Brickell, Leon, Stern, etc.
\end{remark}

\begin{remark}\label{CWGI-R11}
According to the GI-based framework, Prange's algorithm chooses 
one GI for each transformation it calculates. 
This can be improved in the spirit of Algorithm 2, as follows:
\begin{enumerate}
\item Choose a Prange transformation 
      $\mP\mH\mQ=\begin{pmatrix} \mV & \mId_r\end{pmatrix}$
      to take advantage of the fact that $\mQ$ is a permutation;
\item Iterate through some of the GIs 
        $$\mX=\mQ\begin{pmatrix} \mX_1 \\ \mX_2\end{pmatrix}\mP$$
      with $\mV\mX_1+\mX_2=\mId_r$, until a solution $\mX\vs$ 
      with $\Hw{\mX\vs}\leq t$ is found, or go to step 1 and choose another
      Prange transformation. 
\end{enumerate}
The second step above offers the advantage of computing several 
inverses using the same transformation. The idea is first to choose 
$\mX_1$ and compute $\mX_2$ by $\mX_2=\mId_r-\mV\mX_1$. 
We note that $\mX_1\vs$ must be different from $\vO$ 
because, otherwise, $\mX_2\vs = \mP\vs$, which leads to a Prange 
solution.
\end{remark}

\subsection{Lee-Brickell's approach}\label{ISD2}

As mentioned in Section \ref{ISD0}, a possible attack on McEliece's 
cryptosystem can be mounted guessing the error-free positions. 
An improvement of this attack has been proposed in \cite{LeBr1988}, 
where non-error-free positions are also allowed among the selected 
positions. 
This can equivalently be translated into Prange's formalism, and ours 
too. Details are provided below. In addition, we will develop a 
Lee-Brickell-like method for LWP.

\subsubsection{Lee-Brickell's algorithm}\label{ISD2.1}

In Prange's ISD algorithm, error positions are only allowed outside the 
positions in the information set. With the notation of (\ref{ISD1-Eq04}),
this corresponds to the case $\vz_1=\vO$. 
Lee-Brickell's approach \cite{LeBr1988} allows $\vz_1$ to contain errors,
say at most $p$, where $p$ is a parameter to be chosen depending on 
$n$, $r$, and $t$. 
Once $\vz_1$ is chosen in this way, the equation (\ref{ISD1-Eq04}) 
leads to $\vz_2$. 

For $\mP$ and $\mQ$ given as in (\ref{ISD1-Eq01}), the procedure above is 
iterated for each choice of $p$ positions from the first $k$ positions 
until the solution's Hamming weight is less than or equal to $t$ 
(remark that the matrix $\mQ$ does not change the Hamming weight by 
pre-multiplication).

\subsubsection{Lee-Brickell's approach as a GI-based approach}\label{ISD2.2}

Lee-Brickell's approach is a particular case of the GI-based  
approach. We will discuss two cases here.

\paragraph{Case 1: Lee-Brickell's approach for SDP}

It is trivial to describe Lee-Brickell's approach in terms of 
GIs for SDP. 
In fact, we only have to start from the generalized 
inverse calculated as in Prange's approach and impose the new requirement.
More precisely, given the GI \eqref{ISD1-Eq05} we do as 
follows:
\begin{itemize}
\item Given $p$ positions among the first $k$ positions, we choose the 
   components of $\mX_1$ so that on those $p$ positions, $\mX_1\bar{\vs}$ 
   has only $1$ and $0$ in rest. 
   This always can be done, and even efficiently, when $\bar{\vs}$ has 
   at least one non-zero
   component (please see the proof of Theorem \ref{CWGI-T01});
\item The matrix $\mX_2$ can then be obtained from the equation 
   $\mV\mX_1+\mX_2=\mId_r$.
\end{itemize}

So, Lee-Brickell's approach is just a particular case of our
generic GID solver for CWP (Algorithm \ref{CWGI1-Alg01}).

\paragraph{Case 2: Lee-Brickell's approach for LWP}

As far as we know, Lee-Brickell's approach has not been used 
to find solutions to LWP. However, one can quickly implement 
a Lee-Brickell-like strategy in Algorithm \ref{CWGI2-Alg02} to get 
a solution to this problem. 

Algorithm \ref{CWGI2-Alg02} with transformations as in  
Corollary \ref{CWGI2-C01} outputs solutions of the form 
\begin{equation}
\nonumber
  \vv=\mQ\begin{pmatrix} \mZ \\ -\mV\mZ \end{pmatrix} \mP\vb
\end{equation}
(please see Algorithm \ref{CWGI2-Alg02} for notation). 

As $\vb\not=\v0$, we can easily choose $\mZ$ to get any weight
and distribution of the non-zero components of $\mZ\mP\vb$. 
So, what we have to do is to choose a distribution that 
minimizes the Hamming weight of $-\mV(\mZ\mP\vb)$. We may try
any value of $p$, starting with one up to some reasonable upper
bound.

\subsection{Leon's approach}\label{ISD3} 

In 1988, Leon proposed a probabilistic algorithm for computing 
codewords of weight as small as possible from large error correction 
codes \cite{Leon1988}. 
Without explicitly mentioning or referring to \cite{Pran1962}, 
Leon's algorithm uses the concept of the information set. We will 
describe it in the following.

\subsubsection{Leon's algorithm}\label{ISD3.1}

Let $\mG$ be a generator matrix of an $[n,k]$ linear code for 
which we want to compute a low weight codeword. 
The basic idea of Leon's algorithm is to repeat the following two
steps until a solution is obtained or some halt criterion is fulfilled:
\begin{enumerate}
\item Apply a random permutation $\mQ$ to its columns, followed by row 
  transformations $\mP$, so that the matrix $\mG$ is brought in the form 
  \begin{equation}
  \nonumber 
    \mG'=\mP\mG\mQ=\begin{pmatrix}
           \mId_e & \mZ & \mC \\
            \m0   & \m0 & \mD 
         \end{pmatrix}
  \end{equation}
  where the total length of the first two blocks is $e+\ell$;
\item The following procedure is repeated on 
  $\vv\in\langle\begin{pmatrix} \m0 & \m0 & \mD \end{pmatrix}\rangle$ 
  until the obtained vector has the desired weight, or all $\vv$ 
  were processed:
  \begin{itemize}
  \item Add to $\vv$ zero, one, or two rows from the first $e$ rows 
    but with the requirement that their sum on the positions 
    $\{e+1,\ldots,e+\ell\}$ has weight 0.
  \end{itemize}
\end{enumerate}

Step 2 can be implemented by a systematic enumeration of the space 
$\langle\begin{pmatrix} \m0 & \m0 & \mD \end{pmatrix}\rangle$ 
or by random selection from it. In both cases, the procedure 
starts with a target weight $t$, updated each time a lower weight 
vector is obtained (also saving this vector until the next weight 
update, if necessary).
If the output is the vector $\vv$, the codeword induced by it 
is $\vc=\mQ^{-1}\vv$. 
We notice that the vector $\vv$'s weight on the first $e$ positions 
will always be less than or equal to 2.

Ignoring $\mQ$, which is a permutation and does not change the weight, 
we can say that Leon's algorithm determines codewords (solutions of 
LWP) that have low weight, weight at most two on the 
first $e$, and zero on the following $\ell$ positions. 
Since this vector must be a solution of the equation $\mH\vx = \v0$, 
where $\mH$ is the parity-check matrix of the code generated by $\mG$, 
we deduce that Leon's algorithm 
is looking for low weight solutions of LWP.
These solutions 
have weight at most two on the first $e$, and zero on the 
following $\ell$ positions. Since the transformation from step 1 
is a Prange transformation, we can view this algorithm as a 
variant of Lee-Brickell's algorithm in which the solution is as 
above. 

We can make a step further and generalize Leon's algorithm to the 
SDP. 
To understand how this is done, let us consider $(\mH,\vs,t)$ an
instance of this problem, 
$\mP\mH\mQ=\begin{pmatrix} \mV & \mId_r \end{pmatrix}$ a Prange 
transformation of $\mH$, and $p$ and $\ell$ two positive integers. 

Using the notation in Section \ref{ISD1.1}, we divide the matrices 
$\mV$, $\vz_2$, and $\bar{\vs}$ into two blocks each, 
$\mV=\begin{pmatrix} \mV_1 \\ \mV_2 \end{pmatrix}$, 
$\vz_2=\begin{pmatrix} \vz_2' \\ \vz_2'' \end{pmatrix}$, 
and 
$\bar{\vs}=\begin{pmatrix} \bar{\vs}' \\ \bar{\vs}'' \end{pmatrix}$,
where the first block contains the first $\ell$ rows in each 
and the second block the rest of the rows. 
Then, (\ref{ISD1-Eq04}) can be written as 
\begin{equation}\label{GI-Eq10} 
\begin{pmatrix} \mV_1\vz_1 \\ \mV_2\vz_1 \end{pmatrix} + 
\begin{pmatrix} \vz_2' \\ \vz_2'' \end{pmatrix} = 
\begin{pmatrix} \bar{\vs}' \\ \bar{\vs}'' \end{pmatrix} 
\end{equation}
The requirements of Leon's algorithm are now $\Hw{\vz_1}\leq p$ and  
$\vz_2'=0$. Remark also that $\vz_1$ must additionally satisfy  
$\mV_1\vz_1=\bar{\vs}'$. Assuming that all these are fulfilled,  
$\vz_2''$ is determined from $\mV_2\vz_1+\vz_2''=\bar{\vs}''$.

\subsubsection{Leon's approach as a GI-based approach}\label{ISD3.2}

Leon's approach is a particular case of the GI-based 
approach. We will discuss two cases here.

\paragraph{Case 1: Leon's approach for LWP}

Let $(\mH,t)$ be an instance of LWP. 
Algorithm \ref{CWGI2-Alg02} with transformations as in  
Corollary \ref{CWGI2-C01} outputs solutions of the form 
\begin{equation}
\nonumber
  \vv=\mQ\begin{pmatrix} \mZ \\ -\mV\mZ \end{pmatrix} \mP\vb
\end{equation}
(please see Algorithm \ref{CWGI2-Alg02} for notation). 
We further decompose $\mV\mZ$ into two parts 
\begin{equation}
\nonumber
  \vv=\mQ\begin{pmatrix} \mZ \\ -\mV_1\mZ \\ -\mV_2\mZ \end{pmatrix} \mP\vb
\end{equation}
where $\mV_1$ contains the first $\ell$ rows and $\mV_2$
the other rows of $\mV$. 

As $\vb\not=\v0$, we can easily choose $\mZ$ to get any distribution
of the non-zero components of $\mZ\mP\vb$. 
As a result, Algorithm \ref{CWGI2-Alg02} iterates through Prange
transformations of $\mH$, and for each transformation, iterates through 
some values of $\mZ$ until it finds one with the properties 
$\Hw{\mZ\mP\vb}\leq p$, $\Hw{-\mV_1\mZ\mP\vb}=\v0$, and $\Hw{\vv}\leq t$
(please also see the comments at the end of Section \ref{CWGI1.4}).

\paragraph{Case 2: Leon's approach for SDP}

Let $(\mH,\vs,t)$ be an instance of SDP. 
Algorithm \ref{CWGI1-Alg01} will iterate through Prange transformations
$\mP\mH\mQ=\begin{pmatrix} \mV & \mId_r \end{pmatrix}$, and for each
transformation it will iterate through GIs of the form 
\begin{equation}\label{ISD3.2-Eq05}
\mX=\mQ\begin{pmatrix} \mX_1 \\ \mX_2 \end{pmatrix}\mP,
\end{equation}
where $\mX_1$ and $\mX_2$ are arbitrary matrices that verify 
$\mV\mX_1+\mX_2=\mId_r$.

We further decompose $\mV\mX_1+\mX_2=\mId_r$ as 
\begin{equation}\label{GI-Eq09}
  \nonumber 
  \begin{pmatrix} \mV_1\mX_1 \\ \mV_2\mX_1 \end{pmatrix} + 
  \begin{pmatrix} \mX_2' \\ \mX_2'' \end{pmatrix} = 
           \begin{pmatrix} \mId_r' \\ \mId_r'' \end{pmatrix} 
\end{equation}
  where $\mV_1$, $\mX_2'$, and $\mId_r'$ contain the first $\ell$ rows
  of $\mV$, $\mX_2$, and $\mId_r$, respectively, while  
  $\mV_2$, $\mX_2''$, and $\mId_r''$ contain the other rows. 

Now, what we have to do is to compute $\mX_1$, $\mX_2'$, and 
$\mX_2''$ under the supplementary requirements $\mX_2'\bar{\vs}=\m0$ 
and $\Hw{\mX_1\bar{\vs}}\leq p$.
This can be implemented in various ways. 
For example, assuming $\bar{\vs}\not = \v0$, we can choose $\mX_1$ 
so that $\mX_1\bar{\vs}$ is any $(n-r)$-dimensional vector
(see, for instance, the proof of Theorem \ref{CWGI-T01}). 
So suffice it to say that this is a vector with at most $p$ positions 1. 
Then, for each such possibility, we check whether the sum of $\mV_1$'s 
columns corresponding to positions 1 in 
$\mX_1\bar{\vs}$ gives us $\bar{\vs}'$ (see the notation in 
Section \ref{ISD3.1}). If so, we keep $\mX_1\bar{\vs}$ and 
compute $\mX_2''$. This is, for instance, the method in \cite{BBCPS2019}. Tacking in both cases (Leon's approach for SDP/LWP) $p=2$ gives the original version proposed by Leon.

\subsection{Stern's approach}\label{ISD4}

Independent and almost at the same time with Leon, Stern proposed an 
algorithm to determine low-weight codewords \cite{Ster1988} that use 
the concept of the information set. 
We will describe it below and show that it fits very well in the 
GI-based approach.

\subsubsection{Stern's algorithm}\label{ISD4.1}

Unlike Leon, the codewords in Stern's algorithm are determined 
by the parity-check matrix $\mH\in\cM_{n-k,n}(\bF_2)$, so as solutions 
of the equation $\mH\vx=\m0$.
To determine a zero linear combination of $\mH$'s columns, 
Stern proposed the following algorithm that uses two parameters $p$ and
$\ell$ that are optimized later (please see Figure \ref{Fig-Stern}
for an illustration of the algorithm, and consider $\bar{\vs}=\v0$):
\begin{enumerate}
\item Select $r=n-k$ columns from $\mH$ and, by row elementary 
  transformations, form the columns of the unit matrix.
  For readability, we assume that these columns are packed to the 
  right because the columns permutation does not change the solution's 
  weight (we point out that this requirement is not present in 
  Stern's algorithm). 
  As a result, a Prange transformation is applied to $\mH$,
  $\mP\mH\mQ=\begin{pmatrix} \mV & \mId_r \end{pmatrix}$.
  Then, $\vx$ is a solution to $\mH\vx=\v0$ if and only if 
  $\vz$ is a solution to 
  $\begin{pmatrix} \mV & \mId_r \end{pmatrix}\vz=\v0$, where 
  $\vz=\mQ^{-1}\vx$; 
\item The rest of the columns are randomly distributed 
  in two sets of approximately equal size. Let $\cI_1$ and $\cI_2$ be the 
  index sets of these column sets. 
  With the notation from the previous step, we can say that $\mV$ is 
  divided into two blocks of approximately the same size,
  $\mV=\begin{pmatrix} \mV_1 & \mV_2 \end{pmatrix}$;
\item Randomly choose a set $\cL$ of $\ell$ row position; 
\item Choose $\cJ_1\subseteq \cI_1$ and $\cJ_2\subseteq \cI_2$ of equal size $p$ 
  such that the sum of the columns with index in $\cJ_1$ is equal 
  to the sum of the columns with index in $\cJ_2$ on the positions in $\cL$. 
  This will ensure that the sum of all these columns is 0 on all 
  positions in $\cL$;
\item On certain positions outside the set $\cL$, the sum of the 
  columns with the index in $\cJ_1\cup \cJ_2$ can be 1. For each such 
  position $j$, the column from the unit matrix will be added with 
  1 on position $j$. If we denote by $\cJ_3$ the set of all these 
  positions, the sum of the columns with the index in 
  $\cJ_1\cup \cJ_2\cup \cJ_3$ will be 0;
\item So the vector that has 1 only on the positions in 
  $\cJ_1\cup \cJ_2\cup \cJ_3$, and otherwise only zero will be a
  codeword with weight $2p+|\cJ_3|$.
\end{enumerate}

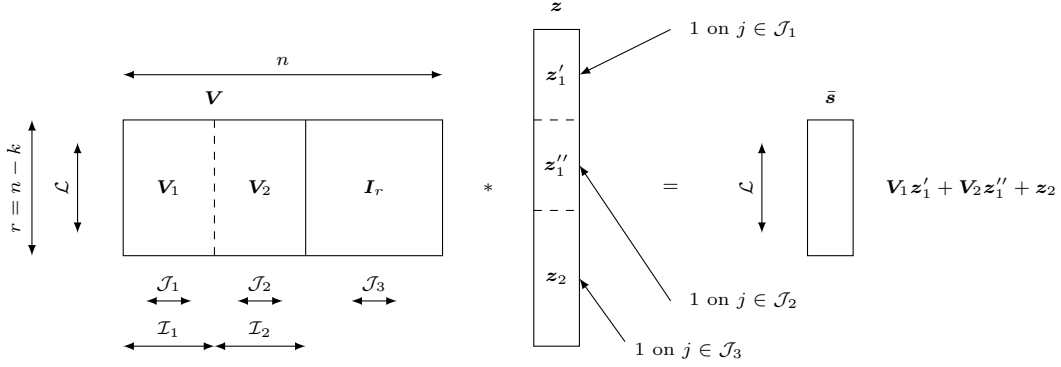
\begin{figure}[!ht]
\centering
\begin{tikzpicture}[scale=1.2,font=\scriptsize]
\draw  (-4,2.5) rectangle (-0.5,1);
\draw[latex-latex] (-5,1) -- node[sloped,above] {$r=n-k$} (-5,2.5);
\draw[latex-latex] (-4,3) -- node[above] {$n$} (-0.5,3);
\draw[dashed] (-3,2.5) -- (-3,1);
\draw (-2,2.5) -- (-2,1);
\node at (-3.5,1.75) {$\mV_1$};
\node at (-2.5,1.75) {$\mV_2$};
\node at (-1.25,1.75) {$\mId_r$};
\draw[latex-latex] (-3.75,0.5) -- node[above] {${\mathcal J}_1$} (-3.25,0.5);
\draw[latex-latex] (-2.75,0.5) -- node[above] {${\mathcal J}_2$} (-2.25,0.5);
\draw[latex-latex] (-1.5,0.5) -- node[above] {${\mathcal J}_3$} (-1,0.5);
\draw[latex-latex] (-4,0) -- node[above] {${\mathcal I}_1$} (-3,0);
\draw[latex-latex] (-3,0) -- node[above] {${\mathcal I}_2$} (-2,0);
\draw  (0.5,3.5) rectangle (1,0);
\draw[dashed] (0.5,2.5) -- (1,2.5);
\draw[dashed] (0.5,1.5) -- (1,1.5);
\draw[latex-latex] (-4.5,1.25) -- node[sloped,above] {${\mathcal L}$} (-4.5,2.25);
\draw[latex-latex] (3,1.25)  -- node[sloped,above] {${\mathcal L}$} (3,2.25);
\draw  (3.5,2.5) rectangle (4,1);
\node[right] at (4.25,1.75) {$\mV_1\vz_1'+\mV_2\vz_1''+\vz_2$};
\node at (0.75,3) {$\vz_1'$};
\node at (0.75,2) {$\vz_1''$};
\node at (0.75,0.75) {$\vz_2$};
\node at (0,1.75) {$*$};
\node at (2,1.75) {$=$};
\draw[latex-] (1,3) -- (2,3.5);
\draw[latex-] (1,2) -- (2,0.5);
\node[right] at (2.1,0.5) {1 on $j\in {\mathcal J}_2$};
\node[right] at (2.1,3.5) {1 on $j\in {\mathcal J}_1$};
\draw[latex-] (1,0.75) -- (1.5,0.05);
\node[right] at (1.5,-0.05) {1 on $j\in {\mathcal J}_3$};
\node at (3.75,2.75) {$\bar{\vs}$};
\node at (0.75,3.75) {$\vz$};
\node at (-3,2.75) {$\mV$};
\end{tikzpicture}
\caption{Stern's algorithm illustrated}\label{Fig-Stern}
\end{figure}

Stern's algorithm is a solution to LWP. We can trivially 
extend it to be a solution to SD instance $\mH\vx=\vs$. 
The equation we have to solve now is 
$\begin{pmatrix} \mV & \mId_r \end{pmatrix}\vz=\bar{\vs}$, where
$\bar{\vs}=\mP\vs$
(please see the diagram in Figure \ref{Fig-Stern}). 
This implies that on the positions in $\cL$ we must have satisfied 
the relation
\begin{equation}\label{ISD4.1-Eq01}
(\mV_1\vz_1')_{\cL} + (\mV_2\vz_1'')_{\cL}=\bar{s}_{\cL}
\end{equation}

Choosing $\vz_1'$ and $z_1''$ in this way, $\vz_2$ is obtained from 
\begin{equation}\label{ISD4.1-Eq02}
\mV_1\vz_1'+\mV_2\vz_1''+\vz_2=\bar{s}
\end{equation}
It should be noted that $\vz_2$ will have one on the positions in $\cJ_3$.
The solution $\vx$ is obtained by $\vx=\mQ\vz$.

The computation of $\vz_1'$ and $\vz_1''$, which satisfies equation 
\eqref{ISD4.1-Eq01}, can be done this way.
For each subset $\cJ_1\subseteq\cI_1$ of size $p$, the value 
$\mV_1\vz_1'+\bar{s}$ is stored in a list $L$, where $\vz_1'$ 
has $1$ on all positions in $\cJ_1$, and 0 in rest. 
Then, whenever it is necessary to find $\vz_1''$, a subset 
$\cJ_2\subseteq\cI_2$ of size $p$ is chosen.
If the equation \eqref{ISD4.1-Eq01} is fulfilled, then $\vz_2$
is obtained from \eqref{ISD4.1-Eq02}.

\subsubsection{Stern's approach as a GI-based approach}\label{ISD4.2}

Stern's approach is a particular case of the GI-based 
approach. We will discuss two cases here.

\paragraph{Case 1: Stern's approach for LWP}

Algorithm \ref{CWGI2-Alg02} with the transformations from 
Corollary \ref{CWGI2-C01} will output solutions for LWP
of the form 
\begin{equation}
\nonumber
  \vv=\mQ\begin{pmatrix} \mZ \\ -\mV\mZ \end{pmatrix} \mP\vb
\end{equation}
(please see Algorithm \ref{CWGI2-Alg02} for notation). 
As $\vb\not=\v0$, we can easily choose $\mZ$ to get any distribution
on the non-zero components of $\mZ\mP\vb$. This corresponds to $\vz_1$
in Stern's original approach. The second component, $-\mV\mZ$, 
corresponds to $\vz_2$ in Stern's approach
(please also see the comments at the end of Section \ref{CWGI1.4}). 

So, Algorithm \ref{CWGI2-Alg02} in this case will mainly iterate 
through $\mZ$'s values and transformations $(\mP,\mQ)$. 

\paragraph{Case 2: Stern's approach for SDP}

Stern's approach for SDP is a 
particular case of GI-based approach (Algorithm \ref{CWGI1-Alg01}).  
To see that, consider the transformation \eqref{ISD1-Eq01}.
Each GI of $\mH$ has the form 
\begin{equation}\label{ISD4.2-Eq05}
\mX=\mQ\begin{pmatrix} \mX_1 \\ \mX_2 \end{pmatrix}\mP,
\end{equation}
where $\mX_1$ and $\mX_2$ are arbitrary matrices that verify 
$\mV\mX_1+\mX_2=\mId_r$ (please see the notation from (\ref{ISD1-Eq01})).
We further decompose the matrices $\mV$ and $\mX_1$ into two parts
each, $\mV=\begin{pmatrix} \mV_1 & \mV_2 \end{pmatrix}$ and
$\mX_1=\begin{pmatrix} \mX_1' \\ \mX_1'' \end{pmatrix}$. Then, we
arrive at the equation
\begin{equation}
 \mV_1\mX_1' + \mV_2\mX_1'' + \mX_2=\mId_r
\end{equation}

The choice of matrices $\mX_1'$ and $\mX_1''$ can be made on 
principles similar to those in the previous section.

\subsection{Finiasz-Sendrier's approach} 

In \cite{FiSe2009}, Finiasz and Sendrier propose a ``generalization'' of 
Stern's approach, making a partial Gaussian elimination 
on the matrix. We will discuss this approach below.

\subsubsection{Finiasz-Sendrier's algorithm}

The transformation used on $\mH$ in Finiasz-Sendrier's approach is
\begin{equation}\label{GI-Eq11}
  \mP\mH\mQ=\begin{pmatrix} \mV_1 & \m0 \\ \mV_3 & \mId_{r-\ell} \end{pmatrix}, 
\end{equation}
where $\mV_1\in\cM_{\ell,k+\ell}(\bF)$, $\mV_2\in\cM_{r-\ell,k+\ell}(\bF)$,
and $\ell\geq 0$.  

The equation $\mP\mH\mQ\vz=\mP\vs$ leads then to
\begin{equation}\label{GI-Eq12} 
\left\{
\begin{array}{lcl}
  \mV_1\vz_1 &=& \bar{\vs}_1 \\
  \mV_3\vz_1 + \vz_2 &=& \bar{\vs}_2
\end{array}
\right.
\end{equation}
where $\vz=\begin{pmatrix} \vz_1 \\ \vz_2 \end{pmatrix}$, 
$\vz_1\in\bF^{k+\ell}$, $\vz_2\in\bF^{r-\ell}$, 
$\bar{\vs}=\mP\vs=\begin{pmatrix} \bar{\vs}_1 \\ \bar{\vs}_2 \end{pmatrix}$, 
$\bar{\vs}_1\in\bF^{\ell}$, and $\bar{\vs}_2\in\bF^{r-\ell}$. 

The vector $\vz_1$ is divided into two parts, $\vz_1'$ and $\vz_1''$, that
are computed as in Stern's approach, while $\vz_2$ is obtained from the 
second equation in (\ref{GI-Eq12}). At the end, $\vx=\mQ\vz$.

\subsubsection{Finiasz-Sendrier's approach and a GI-based approach.}

As with the other approaches, Finiasz-Sendrier's approach reduces 
to a GI calculation. Any GI of 
the matrix in (\ref{GI-Eq11}) has the form
$\begin{pmatrix} \mX_1 & \mX_2 \\ \mX_3 & \mX_4 \end{pmatrix}$, where
$\mX_1\in\cM_{k+\ell,\ell}(\bF)$, $\mX_2\in\cM_{k+\ell,r-\ell}(\bF)$,
$\mX_3\in\cM_{r-\ell,\ell}(\bF)$, $\mX_4\in\cM_{r-\ell,r-\ell}(\bF)$,
and the following properties hold:
\begin{equation}\label{GI-Eq13} 
\begin{cases}
  \mV_1\mX_1\mV_1         &= \mV_1 \\
  \mV_1\mX_2              &= \m0 \\
  (\mV_3\mX_1+\mX_3)\mV_1 &= \m0 \\
  \mV_3\mX_2+\mX_4        &= \mId_{r-\ell}
\end{cases}
\end{equation}

The general solution to the system $\mP\mH\mQ\vz=\mP\vs$ will then be 
\begin{equation}\label{GI-Eq14}
  \vz=\begin{pmatrix} \mX_1 & \mX_2 \\ \mX_3 & \mX_4 \end{pmatrix}
     =\begin{pmatrix} \mX_1\bar{\vs}_1+\mX_2\bar{\vs}_2 \\
                    \mX_3\bar{\vs}_1+\mX_4\bar{\vs}_2 \end{pmatrix}
\end{equation}
One can easily that $\vz_1$ and $\vz_2$ derived as in (\ref{GI-Eq14}) also
satisfy (\ref{GI-Eq12}). Indeed, first remark that, according to
(\ref{GI-Eq13}), $\mX_1$ is a GI of $\mV_1$ and so
$\mX_1\bar{\vs}_1$ is a solution to $\mV_1\vz_1=\bar{\vs}_1$. Then, 
$$
\begin{array}{lcl}
\mV_1\vz_1 &=& \mV_1(\mX_1\bar{\vs}_1+\mX_2\bar{\vs}_2) \\ [.5ex]
           &=& \mV_1\mX_1\bar{\vs}_1 + \mV_1\mX_2\bar{\vs}_2 \\ [.5ex]
           &=& \bar{\vs}_1
\end{array}
$$
and 
$$
\begin{array}{lcl}
\mV_3\vz_1+\vz_2 &=& \mV_2(\mX_1\bar{\vs}_1+\mX_2\bar{\vs}_2)+
                     \mX_3\bar{\vs}_1+\mX_4\bar{\vs}_2 \\ [.5ex]
   &=& (\mV_3\mX_1+\mX_3)\bar{\vs}_1+(\mV_3\mX_2+\mX_4)\bar{\vs}_2 \\ [.5ex]
   &=& (\mV_3\mX_1+\mX_3)\mV_1\mX_1\bar{\vs}_1 +\mId_{r-\ell}\bar{\vs}_2 \\ [.5ex]
   &=& \m0\mX_1\bar{\vs}_1+\bar{\vs}_2 \\ [.5ex]
   &=& \bar{\vs}_2.
\end{array}
$$

Several strategies of selecting matrices in \eqref{CWGI-Eq13} could be imagined. 
For example, setting $\mX_2=\mO$ and $\mX_3=-\mV_3\mX_1$ implies $\mX_4=\mId_{r-l}$ 
and so, the solution has the form 
\begin{equation}\label{GI-Eq14b}
  \vz=\begin{pmatrix} \mX_1\bar{\vs}_1 \\
                   -\mV_3\mX_1\bar{\vs}_1+\bar{\vs}_2 \end{pmatrix}
\end{equation}
In this case the following procedure could be applied:
\begin{itemize}
    \item Sample $\mX_1\in\GI(\mV_1)$ s.t. $w=\Hw{\mX_1\bar{\vs_1}}\leq t$;
    \item If $\Hw{-\mV_3\mX_1\bar{\vs}_1+\bar{\vs}_2}\leq t-w$ then output $\vz$. 
    Else, go to the previous step.
\end{itemize}

\subsection{Multiple decompositions}\label{ISD6}

Let $(\mH,\vs,t)$ be an instance of SDP and 
$\begin{pmatrix} \mV & \mId_r \end{pmatrix}$ a Prange transformation
of $\mH$. 
If we look at the GI-based approach of the Leon and Finiasz-Sendrier's algorithms, 
we find that $\mV$ is vertically decomposed into two other blocks. The GI-based 
approach of Stern's algorithm requires the horizontal decomposition of $\mV$ to 
do a ``meet-in-the-middle'' (``birthday matching'') on sub-blocks.
We can combine these two decompositions as follows. 

Let 
\begin{equation}\label{ISD6-Eq01}
  \mP\mH\mQ=\begin{pmatrix} \mV_1 & \mId_{\ell} & \m0 \\
                            \mV_2 & \mO      & \mId_{r-\ell}
    \end{pmatrix},
\end{equation}
where $\ell$ is a positive integer. 
It is straightforward to see that any GI of $\mH$ has
the form
\begin{equation}\label{ISD6-Eq02}
\mX=\mQ\begin{pmatrix}
        \mX_1                  & \mX_2\\
        \mId_{\ell}-\mV_1\mX_1 & -\mV_1\mX_2\\
        -\mV_2\mX_1            & \mId_{r-\ell}-\mV_2\mX_2
    \end{pmatrix}\mP, 
\end{equation}
where $\mX_1\in\cM_{k,\ell}(\bF)$ and $\mX_2\in\cM_{k,r-\ell}(\bF)$ 
are matrices of appropriate sizes. 
A solution will be then of the form 
\begin{equation}\label{ISD6-Eq03}
\mX\vs=\mQ\begin{pmatrix}
        \mX_1\bar{\vs}'             + \mX_2\bar{\vs}'' \\
        (\bar{\vs}'-\mV_1\mX_1\bar{\vs}') + (-\mV_1\mX_2\bar{\vs}'') \\
        (-\mV_2\mX_1\bar{\vs}')     + (\bar{\vs}''-\mV_2\mX_2\bar{\vs}'')
    \end{pmatrix}, 
\end{equation}
where $\bar{\vs}=\mP\vs$, $\bar{\vs}'=\bar{\vs}_{[1,\ell]}$, and 
$\bar{\vs}''=\bar{\vs}_{[\ell+1,r-\ell]}$.

Assuming that $\bar{\vs}'\not=\v0$ and 
$\bar{\vs}''\not=\v0$, $\mX_1\bar{\vs}'$ and $\mX_2\bar{\vs}''$ may
independently define any linear combinations $\mV_1$'s columns.
Therefore, the middle horizontal row in $\mX\vs$ is a generalization
of the corresponding part in Leon and Stern's algorithms. 
In addition, no higher computational costs are required than in these 
algorithms and leave a door open to other approaches.

We can generalize even more. For instance, the transformation 
\begin{equation}\label{ISD6-Eq04}
\mP\mH\mQ=\begin{pmatrix}
        \mV_1  & \mId_{\ell_1}  & \m0        & \m0 \\
        \mV_2  & \m0         & \mId_{\ell_2} & \m0 \\
        \mV_3  & \m0         & \m0        & \mId_{r-(\ell_1+\ell_2)} 
          \end{pmatrix}
\end{equation}
leads to GIs of the form 
\begin{equation}\label{ISD6-Eq05}
\mX=\mQ\begin{pmatrix}
        \mX_1&\mX_2&\mX_3\\
        \mId_{\ell_1}-\mV_1\mX_1&-\mV_1\mX_2&-\mV_1\mX_3\\
        -\mV_2\mX_1&\mId_{\ell_2}-\mV_2\mX_2&-\mV_2\mX_3\\
        -\mV_3\mX_1&-\mV_3\mX_2&\mId_{r-(\ell_1+\ell_2)}-\mV_3\mX_3
        \end{pmatrix}\mP
\end{equation}

If we decompose $\bar{\vs}$ into three corresponding parts,
$\bar{\vs}'$, $\bar{\vs}''$, and $\bar{\vs}'''$, and assume that each of
them is different than $\v0$, then 
$\mX_1\bar{\vs}'$, $\mX_2\bar{\vs}''$, and $\mX_3\bar{\vs}'''$
may independently define any linear combination of $\mV_1$, 
$\mV_2$, and $\mV_3$'s columns.

\section{GI-based decoding and Boolean constraint satisfaction}\label{GID-CSP}

This section will look at the CW and SWPs as optimization problems 
and link them to the MIN-SAT problem. To have a clearer picture of 
these problems, we recall first the concept of an optimization problem. 
For details, the reader is referred to \cite{ACGKMP2003, CrKS2001}.

\begin{definition}\label{def:opt-pb}
An {\em optimization problem} is a tuple $A=(\cI,Sol,\mu,Opt)$,
where:
\begin{enumerate}
\item $\cI$ is a set of {\em instances};
\item $Sol$ is a function that maps each instance $x$ to a set
	of feasible solutions of $x$;
\item $\mu$ is a {\em measure function} that associates to each pair
 	$(x,y)\in\cI\times Sol(x)$ a positive integer regarded as a measure 
 	of the instance $x$'s feasible solution $y$;
\item $Opt\in\{min,max\}$ is the {\em optimization criterion} (minimization
  	or maximization). 
\end{enumerate}
\end{definition}

Given an optimization problem as above, we denote by $Sol^*(x)$ the
set of optimal solutions of $x$ (that is, those feasible solutions 
from $Sol(x)$ that fulfill the optimization criterion $Opt$). 
Moreover, $\mu^*(x)$ stands for the value of the optimal solutions
of $x$, when they exist. 

Each optimization problem $A$ defines three related problems:
\begin{enumerate}
\item {\em Constructive problem} $A_C$: Given an instance $x\in\cI$, 
  compute an optimal solution and its measure;
\item {\em Evaluation problem} $A_E$: Given an instance $x\in\cI$,
  compute $\mu^*(x)$;
\item {\em Decision problem} $A_D$: Given an instance $x\in\cI$ and 
  a positive integer $t$, decide the predicate
  \begin{equation}
    A_D(x,t)=\left\{
               \begin{array}{ll}
                 1, & (Opt=min\ \wedge\ \mu^*(x)\leq t)\ \vee\ \\ [.5ex]
                    & (Opt=max\ \wedge\ \mu^*(x)\geq t) \\ [.5ex] 
                 0, & \text{otherwise}
               \end{array}
             \right.
  \end{equation}
\end{enumerate}
The {\em underlying language} of $A$ is the set
$\{(x,t)\in\cI\times\bN\mid A_D(x,t)=1\}$. 

\begin{definition}\label{def:NPO}
An optimization problem $A=(\cI,Sol,\mu,Opt)$ belongs to the class
NPO if:
\begin{enumerate}
\item Its set of instances is recognizable in polynomial time;
\item There exists a polynomial $p$ such that for each instance $x$,
  	its feasible solutions have size at most $p(|x|)$. Moreover,
  	for each $y$ of size at most $p(|x|)$, it is decidable in polynomial
  	time whether $y$ is a feasible solution of $x$;
\item The measure function is computable in polynomial time. 
\end{enumerate}
\end{definition}

Now is the time to introduce the MIN-CWP and MIN-SWP.

\begin{problem}[framed]{MIN Coset Weight Problem  (MIN-CWP)}
Instance: & Matrix $\mA\in\cM_{m,n}(\bF)$ with full row rank
            and vector $\vb\in\bF^m$, 
            where $\bF$ is a finite field; \\
Sol:      & Solution $\vx_0\in\bF^n$ to $\mA\vx=\vb$; \\
Measure:  & Hamming weight of $\vx_0$; \\
Opt:      & $min$.  
\end{problem}

If we choose $\vb=\v0$ in MIN-CWP we get the MIN-SWP. 
We distinguish between these problems because the methods for 
solving them are different (see Section \ref{CWGI1} for the CWP and Section \ref{CWGI2} for SWP).

Remark that the matrix $\mA$ in MIN-CWP and MIN-SWP has full row rank. That 
is particularly important for the rest of this section from several 
points of view: we may use Theorems \ref{CWGI1-T01} and \ref{CWGI2-T01} 
to characterize the set of feasible solutions of an instance, and we 
have a suitable form of the GIs. On the other hand, this 
also corresponds to SDP and LWPs. 
Of course, one may extend the framework if necessary.

\begin{proposition}\label{GID-CSP-P01}
The following properties hold:
\begin{enumerate}
\item MIN-CWP and MIN-SWP are NPO problems.
\item CWP (SWP) is the decision problem of MIN-CWP (MIN-SWP).
\item The constructive, evaluation, and the decision problems of
  	MIN-CWP (MIN-SWP) are all polynomial-time Turing-equivalent. 
\end{enumerate}
\end{proposition}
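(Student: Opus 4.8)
The plan is to treat the three items in order; the first two are routine verifications and the third is the substantive one, handled by a self-reduction.

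For item~1 I would check the three conditions of Definition~\ref{def:NPO}. An instance of MIN-CWP (resp.\ MIN-SWP) is recognised by checking that $\mA$ is a matrix over the given finite field and that it has full row rank, which Gaussian elimination does in polynomial time. A feasible solution is a vector $\vx_0\in\bF^n$ (a nonzero vector of $\cN(\mA)$ in the case of MIN-SWP, so as to agree with LWP), hence of size polynomial in the instance; given such a $\vx_0$ one decides feasibility by forming $\mA\vx_0$ (and, for MIN-SWP, checking $\vx_0\neq\v0$), and the measure $\Hw{\vx_0}$ is obtained by counting nonzero entries. Theorems~\ref{CWGI1-T01} and~\ref{CWGI2-T01} can be quoted here to describe $Sol$ explicitly, but are not needed for the NPO check. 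For item~2, Definition~\ref{def:opt-pb} says that the decision problem of MIN-CWP accepts an instance $((\mA,\vb),t)$ iff $\mu^{*}(\mA,\vb)\leq t$, i.e.\ iff $\mA\vx=\vb$ has a solution of Hamming weight at most $t$; this is exactly CWP under the full-row-rank hypothesis standing throughout Section~\ref{GID-CSP}. Taking $\vb=\v0$ and nonzero solutions gives likewise that SWP is the decision problem of MIN-SWP.

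For item~3, the inclusions ``decision $\leq_T$ evaluation $\leq_T$ constructive'' are immediate: an evaluation oracle settles a decision query by one comparison, and a constructive oracle returns an optimal solution together with its measure, whence also evaluation. It then remains to close the loop with a self-reduction ``constructive $\leq_T$ decision''. Given a decision oracle for CWP I would first compute $w=\mu^{*}(\mA,\vb)$ by binary search over $t\in\{0,1,\dots,n\}$, using $O(\log n)$ queries (if the oracle rejects even at $t=n$, the instance has no feasible solution, which for MIN-SWP happens exactly when $\rank(\mA)=n$). Then I would locate the support of an optimal solution greedily: keeping a set $Z$ of coordinates already forced to $0$, for $i=1,\dots,n$ form the system obtained from $\mA\vx=\vb$ by deleting the columns indexed by $Z\cup\{i\}$, normalise it (row-reduce, drop zero rows, and detect inconsistency along the way), and ask the oracle whether the normalised system still admits a solution of weight $\leq w$; if it does, adjoin $i$ to $Z$. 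Set $S=\{1,\dots,n\}\setminus Z$. A short argument shows $|S|\leq w$: if a weight-$\leq w$ solution $\vx^{*}$ with $\vx^{*}|_{Z}=\v0$ had a coordinate $i\in S$ with $\vx^{*}(i)=0$, then $\vx^{*}$ would have witnessed a ``yes'' at the step that processed $i$, contradicting $i\notin Z$. Hence $\mA|_{S}\vx_{S}=\vb$ is consistent and each of its solutions has weight $\leq|S|\leq w$, so solving it by ordinary linear algebra and padding with zeros outside $S$ yields an optimal solution; the whole procedure makes $O(n)$ oracle calls on instances no larger than the input plus polynomial-time linear algebra. For MIN-SWP one argues the same way, reading ``nonzero solution of weight $\leq w$'' for ``solution of weight $\leq w$'' throughout and returning at the end a nonzero vector of $\cN(\mA|_{S})$, which exists because $\vx^{*}|_{S}$ is one. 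Combining the two chains yields that the constructive, evaluation and decision problems of MIN-CWP (resp.\ MIN-SWP) are pairwise polynomial-time Turing-equivalent.

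The hard part will be the self-reduction in item~3. The naive idea --- once a support coordinate is identified, guess its value --- costs up to $|\bF|-1$ oracle calls per coordinate, which need not be polynomial in the instance size over large fields; the remedy, as above, is to use the oracle only to pin down the support $S$ and then recover the actual entries from the small linear system $\mA|_{S}\vx_{S}=\vb$. One must also bear in mind that deleting columns can lower the rank, so each query is preceded by a harmless rank-normalisation step (neither CWP nor SWP constrains the rank of its own instances, so the normalised system is a legal query).
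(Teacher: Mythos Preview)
Your argument is correct. Items~1 and~2 match the paper's treatment essentially verbatim. For item~3, however, you take a genuinely different route: the paper does not build a self-reduction at all, but simply observes that CWP (resp.\ SWP) is NP-complete by \cite{BeET1978} (resp.\ \cite{Vard1997}) and then invokes Theorem~1.5 of \cite{ACGKMP2003}, which states that for any NPO problem whose decision version is NP-complete, the constructive, evaluation, and decision problems are polynomial-time Turing-equivalent. That is a two-line proof. Your explicit support-shrinking reduction is longer but buys something real: it is self-contained, it does not appeal to NP-completeness, and in particular it goes through uniformly over every finite field $\bF$, whereas the NP-completeness results the paper cites are stated for $\bF_2$. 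Your care about the field size (recovering the entries by linear algebra on $\mA|_S$ rather than guessing them) and about rank-normalising the column-deleted subinstances is exactly what makes the reduction polynomial and keeps each oracle query well-formed.
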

\begin{proof}
We will prove the theorem in the case of the MIN-CWP.

(1) It is quite clear that each MIN-CWP instance $(\mA,\vb)$ is recognizable
in polynomial time. Given an instance $(\mA,\vb)$, its set of feasible
solutions consists of all solutions to the equation $\mA\vx=\vb$. 
According to Theorem \ref{CWGI1-T01}, this set is
  $$Sol(\mA,\vb)=\{\mX\vb\mid \mX\in\GI(\mA)\}.$$
Clearly, each feasible solution is linear in the size of the instance.
Moreover, given a vector $\vc\in\bF^n$, we can decide in polynomial time
if it is a solution to $\mA\vx=\vb$ (that is, whether it is a feasible
solution or not of the instance $(\mA,\vb)$). Finally, the measure of a
feasible solution can be computed in linear time with respect to the size
of the solution.

(2) follows directly from definitions.

(3) We know that CWP is an NP-complete problem \cite{BeET1978}. According
to Theorem 1.5 in \cite{ACGKMP2003}, all the problems associated to an 
NPO problem are polynomial-time Turing-equivalent if the decision problem 
is NP-complete. So, we get the result. 
\end{proof}

We will establish further a helpful connection between the MIN-CWP 
and MIN-SWPs over $\bF_2$, denoted MIN-CWP($\bF_2$) and 
MIN-SWP($\bF_2$), and the constraints satisfaction framework. 
Let us first recall the classical constraint satisfaction framework 
\cite{CrKS2001}. 

Let $D$ be a finite set of cardinality at least two that we consider 
fixed during this section.
A {\em constraint} over a finite set $\cV$ of variables is a pair 
$\sigma=(f,(x_{1},\ldots,x_{\ell}))$ consisting of a function 
$f:D^\ell\ra\{0,1\}$, for some $\ell\geq 1$, and a list 
$(x_{1},\ldots,x_{\ell})$ of variables from $\cV$ that 
take values in $D$. 
The function $f$ is called the {\em constraint function}, and the
list of variables, the {\em constraint scope} of $\sigma$. 
The variables in the constraint scope may not be pairwise distinct. 

An {\em instance of the constraint satisfaction problem} over a set
$\cV$ of variables is a finite set $\Sigma$ of constraints over $\cV$.
Distinct constraints in $\Sigma$ may have distinct constraint functions 
or scopes.

Let $\gamma:\cV\ra D$ be an assignment of the variables in $\cV$.
A constraint $\sigma=(f,(x_{1},\ldots,x_{\ell}))$ over $\cV$ is
{\em satisfied} by $\gamma$ is $f(\gamma(x_1),\ldots,\gamma(x_\ell))=1$.

Now, we are able to formulate the following problem. 

\begin{problem}[framed]{MIN Constraint Satisfaction Problem (MIN-CSP)}
Instance: & Finite set $\Sigma$ of constraints over a finite set $\cV$
            of variables; \\
Goal:     & Assignment $\gamma:\cV\ra D$; \\
Measure:  & Number of constraints in $\Sigma$ satisfied by  $\gamma$; \\
Opt:      & $min$.  
\end{problem}

When $D=\{0,1\}$ and each constraint function is a Boolean function,
MIN-CSP is usually referred to as the MIN-SAT problem. When the
constraints are affine, that is, expressions of the form
   $$x_1\oplus\cdots\oplus x_n=b,$$
where $n\geq 1$, $x_1,\cdots,x_n\in\cV$, and $b\in\{0,1\}$, the problem 
is usually denoted MIN-SAT(affine). 

We will show that MIN-CWP($\bF_2$) and MIN-SWP($\bF_2$) are reducible 
in a very strict way to the MIN-SAT problem. Recall first the concept
of reducibility between optimization problems
\cite{ACGKMP2003,Cres1997}.  

\begin{definition}\label{def:reduction}
Let $A=(\cI_A,Sol_A,\mu_A,Opt_A)$ and $B=(\cI_B,Sol_B,\mu_B,Opt_B)$ 
be two NPO problems. A {\em reduction} from $A$ to $B$ is a pair 
$(f,g)$ of polynomial-time computable function such that:
\begin{enumerate}
\item $f(x)\in\cI_B$, for any $x\in\cI_A$;
\item $g(x,y)\in Sol_A(x)$, for any $x\in\cI_A$ and $y\in Sol_B(f(x))$. 
\end{enumerate}
\end{definition}
\begin{definition}\label{def:S-reduction}
Let $A,B$ be two NPO problems and $(f,g)$ be a reduction from $A$ to $B$. Then $(f,g)$ is called an {\em $S$-reduction} from $A$ to $B$ if the following two properties hold:
\begin{enumerate}
\item $\mu_B^*(f(x))=\mu_A^*(x)$, for any $x\in\cI_A$;
\item $\mu_A(x,g(x,y))=\mu_B(f(x),y)$, for any $x\in\cI_A$ and 
  	$y\in Sol_B(f(x))$. 
\end{enumerate}
\end{definition}

\begin{theorem}\label{GID-CSP-T01}
MIN-CWP$(\bF_2)$ and MIN-SWP$(\bF_2)$ are $S$-reducible to MIN-SAT(affine).
\end{theorem}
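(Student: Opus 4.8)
The plan is to exhibit, for each of the two optimization problems, an explicit $S$-reduction to MIN-SAT(affine), using the GI-based characterizations of the solution sets that were proved earlier (Theorems \ref{CWGI1-T01} and \ref{CWGI2-T01}). The key observation is that, thanks to those theorems, the feasible solutions of a MIN-CWP$(\bF_2)$ instance $(\mA,\vb)$ are exactly the vectors $\vx\in\bF_2^n$ with $\mA\vx=\vb$, and the Hamming weight we wish to minimize is $\sum_{i=1}^n x_i$ over $\bF_2$ viewed as a count. So the natural target instance $f(\mA,\vb)$ of MIN-SAT(affine) uses $n$ Boolean variables $x_1,\dots,x_n$ and two groups of constraints: (i) for each row $j$ of $\mA$, the affine constraint $\bigoplus_{i:\,\mA(j,i)=1} x_i = \vb(j)$, which encodes the linear system $\mA\vx=\vb$; and (ii) for each $i\in\{1,\dots,n\}$, the unit affine constraint $x_i = 0$, which contributes $1$ to the MIN-CSP measure precisely when it is \emph{violated}, i.e.\ when $x_i=1$. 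Since MIN-CSP counts \emph{satisfied} constraints, I will instead take the constraints in (i) in complemented form as well — more cleanly, I note that MIN-SAT's measure here is the number of \emph{satisfied} constraints, so I replace group (ii) by the constraints $x_i \ne 0$ written in the allowed affine shape, i.e.\ $x_i = 1$; then the number of satisfied group-(ii) constraints equals $\Hw{\vx}$, and I force group (i) to be satisfied on every assignment I care about by a standard padding trick: duplicate each group-(i) constraint $N$ times with $N$ large (say $N = n+1$), so that any assignment not satisfying all of $\mA\vx=\vb$ has MIN-CSP measure $\ge N > n \ge$ the measure of any genuine solution, hence is never optimal and is harmless for an $S$-reduction.

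Next I would define the second component $g$. Given $(\mA,\vb)$ and a feasible solution $\gamma$ of $f(\mA,\vb)$ — that is, any assignment $\gamma:\{x_1,\dots,x_n\}\to\{0,1\}$ — set $g((\mA,\vb),\gamma)=\vx^\gamma$ where $\vx^\gamma(i)=\gamma(x_i)$ if $\gamma$ satisfies all the (padded) group-(i) constraints, and $\vx^\gamma = $ some fixed canonical solution $\mX\vb$ (which exists because $\vb\in\cR(\mA)$, $\mA$ having full row rank) otherwise. By construction $g$ is polynomial-time computable and always lands in $Sol_A(\mA,\vb)=\{\vx\mid\mA\vx=\vb\}$, so $(f,g)$ is a reduction in the sense of Definition \ref{def:reduction}. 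For the $S$-reduction conditions of Definition \ref{def:S-reduction}: first, $\mu_B^*(f(\mA,\vb)) = \mu_A^*(\mA,\vb)$, because the optimal MIN-CSP assignments must satisfy all (heavily padded) group-(i) constraints — any assignment violating even one has measure $\le$ (total \#constraints) $- N < $ the measure $n\cdot N + \Hw{\vx_0}$ achieved by an actual minimum-weight solution $\vx_0$ — hence correspond bijectively, via $\gamma\leftrightarrow\vx^\gamma$, to solutions of $\mA\vx=\vb$, with MIN-CSP measure $n\cdot N + \Hw{\vx^\gamma}$; minimizing that is minimizing $\Hw{\vx}$, up to the common additive constant $n\cdot N$. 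This forces me to absorb the constant: strictly, an $S$-reduction as defined demands $\mu_B^* = \mu_A^*$ on the nose, so I should \emph{not} pad the group-(i) constraints by replication but instead keep exactly one copy each, and handle consistency differently — see the final paragraph. Second, $\mu_A((\mA,\vb),g((\mA,\vb),\gamma)) = \mu_B(f(\mA,\vb),\gamma)$ for feasible $\gamma$: when $\gamma$ satisfies the system this is $\Hw{\vx^\gamma}$ on both sides; when it does not, one checks the canonical fallback still matches.

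The MIN-SWP$(\bF_2)$ case is identical with $\vb=\v0$: by Theorem \ref{CWGI2-T01} (equivalently, the direct fact $\cN(\mA)=\{\vx\mid\mA\vx=\v0\}$), the feasible solutions are the null-space vectors, and the same translation applies, using the homogeneous affine constraints $\bigoplus_{i:\mA(j,i)=1}x_i=0$ together with $x_i=1$ for the measure. The one subtlety worth flagging — and the place I expect the real work to lie — is the \emph{exactness} requirement $\mu_B^*(f(x))=\mu_A^*(x)$ and $\mu_A(x,g(x,y))=\mu_B(f(x),y)$ of an $S$-reduction, which is far more rigid than a mere AP- or L-reduction: I cannot afford the additive constant $n\cdot N$, nor can I let violating assignments carry arbitrary measures. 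The clean fix is to take $f(x)$ to consist of \emph{only} the $n$ constraints $x_i=1$ (so that $\mu_B(f(x),\gamma)=\Hw{\gamma}$ exactly, with $Opt=\min$), and to put the linear system $\mA\vx=\vb$ not into the constraint set but into the \emph{solution space} of the CSP instance — i.e.\ work with the constrained variant MIN-SAT(affine) whose feasible assignments are restricted by a fixed affine system, which is the standard formulation in \cite{CrKS2001,ACGKMP2003} for affine CSPs. Then $Sol_B(f(x))$ is in measure-preserving bijection with $Sol_A(x)$ via $\gamma\leftrightarrow\vx^\gamma$, $g$ is just this bijection, and both $S$-reduction equalities hold trivially. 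Verifying that this is the intended reading of MIN-SAT(affine) — and that the bijection is genuinely measure-preserving, which reduces to the tautology $\Hw{\vx}=\#\{i:\gamma(x_i)=1\}$ — is the crux; everything else is bookkeeping. I would close by remarking, as the paper's surrounding text suggests, that the tightness of this $S$-reduction means MIN-SAT heuristics transfer directly to MIN-CWP and MIN-SWP.
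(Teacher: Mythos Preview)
Your proposal identifies the central difficulty correctly---the rigidity of an $S$-reduction forces $\mu_B^*(f(x))=\mu_A^*(x)$ on the nose, ruling out padding---but your final fix does not work under the paper's definition of MIN-CSP/MIN-SAT(affine): there, a feasible solution is \emph{any} assignment $\gamma:\cV\to D$, with no side-constraint mechanism for restricting $Sol_B$ to an affine subspace. So you cannot ``put the linear system into the solution space''; the constraint set is all you have, and you have already seen that including the rows of $\mA\vx=\vb$ as constraints spoils the measure.

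The idea you are missing is to eliminate the linear system \emph{before} building the MIN-SAT instance. The paper first computes a transformation $\mP\mA\mQ=\begin{pmatrix}\mA_1 & \mId_m\end{pmatrix}$ and then uses only $n-m$ Boolean variables $z_1,\dots,z_{n-m}$ (not $n$). Every assignment $\gamma$ of these variables automatically determines a genuine solution of $\mA\vx=\vb$, namely $\mQ\vx_\gamma$ with $\vx_\gamma=\begin{pmatrix}\bar\gamma(\vz)\\ \mP\vb-\mA_1\bar\gamma(\vz)\end{pmatrix}$, because the last $m$ coordinates are forced by the first $n-m$. The $n$ constraints are then taken to be $z_i=1$ for $1\le i\le n-m$ together with the $m$ affine rows of $\mP\vb-\mA_1\vz=\bm{1}$; the number of constraints satisfied by $\gamma$ is exactly $\Hw{\bar\gamma(\vz)}+\Hw{\mP\vb-\mA_1\bar\gamma(\vz)}=\Hw{\vx_\gamma}=\Hw{\mQ\vx_\gamma}$, and $\gamma\mapsto\mQ\vx_\gamma$ is a bijection onto $Sol_A(\mA,\vb)$ with exact equality of measures. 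No padding, no restriction of $Sol_B$, and both $S$-reduction identities are immediate. The MIN-SWP case is the same with $\vb=\v0$, giving constraints $z_i=1$ and $-\mA_1\vz=\bm{1}$.
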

\begin{proof}
We will discuss first the case of the MIN-CWP$(\bF_2)$ problem. 

Let $I=(\mA,\vb)$ be an instance of the MIN-CWP$(\bF_2)$ problem, where 
$\mA\in\cM_{m,n}(\bF_2)$, $\vb\in\bF_2^m$, $\vb\not=\v0$, and 
$rank(\mA)=m<n$. 
We associate to $I$ the following instance of MIN-SAT(affine). 

First, by Gaussian elimination, transform $\mA$ into 
$\mP\mA\mQ=\begin{pmatrix} \mA_1 & \mId_m \end{pmatrix}$, 
for some matrices $\mP\in \cGL_m(\bF)$, $\mQ\in \cS_n(\bF)$, and $\mA_1$. 
Consider then a set $\cV=\{z_1,\ldots,z_{n-m}\}$ of $n-m$ Boolean 
variables and define the set $\Sigma$ of affine constraints
\begin{equation}
\Sigma:\ \left\{
\begin{array}{lcl}
  z_1 &=& 1 \\
      &\cdots& \\
  z_{n-m} &=& 1 \\
  \mP\vb - \mA_1\vz &=& \bm{1}, 
\end{array} 
\right.
\end{equation}
where $\vz=\begin{pmatrix} z_1 \\ \cdots \\ z_{n-m} \end{pmatrix}$
and $\bm{1}$ is a vector of 1's.
Each row in $\mP\vb - \mA_1\vz=\bm{1}$ is a constraint, and so $\Sigma$
contains exactly $n$ constraints. 
Thus, $(\cV,\Sigma)$ is an instance of the MIN-SAT(affine) problem,  
which can be constructed in polynomial-time from $I$. 

So, we have defined a polynomial-time computable function $f$ that,
on an instance $I$ of MIN-CWP$(\bF_2)$, associates an instance
$f(I)=(\cV,\Sigma)$ of MIN-SAT(affine). 

Let $\gamma\in Sol_B(\cV,\Sigma)$ be a feasible solution for $(\cV,\Sigma)$. 
Extend $\gamma$ homomorphically and component-wise to affine expressions 
and systems of affine expressions. Let $\bar{\gamma}$ be the extension. 
Define now $\vx_\gamma\in\bF^n$ as follows:
\begin{itemize}
\item $(\vx_\gamma)_{[1,n-m]}=\bar{\gamma}(\vz)\in\bF_2^{n-m}$;
\item $(\vx_\gamma)_{[n-m+1,n]}=\bar{\gamma}(\mP\vb-\mA_1\vz)\in\bF_2^m$.
\end{itemize}

We will prove that $\mQ\vx_\gamma$ is a feasible solution for $I$.
First, recall that each solution to $\mA\vx=\vb$ is of the form
$\mX\vb$, where $\mX$ is a GI of $\mA$
(Theorem \ref{CWGI1-T01}). Then, $\mX$ may be chosen of the form
$\mX=\mQ\begin{pmatrix} \mX_1 \\ \mX_2 \end{pmatrix}\mP$, where
$\mX_1$ and $\mX_2$ satisfy the property $\mA_1\mX_1+\mX_2=\mId_m$. 

As $\vb\not=\v0$, we may choose $\mX_1$ so that 
$\mX_1\mP\vb=\bar{\gamma}(\vz)=(\vx_\gamma)_{[1,n-m]}$ 
(see the discussion at the end of Section \ref{CWGI1.4}). 
Then, 
  $$\mX_2\mP\vb = \mP\vb-\mA_1\mX_1\mP\vb
                = \mP\vb-\mA_1\bar{\gamma}(\vz)
                = \bar{\gamma}(\mP\vb-\mA_1\vz)
                = (\vx_\gamma)_{[n-m+1,n]}.$$

Then, 
  $$\mX\vb=\mQ\begin{pmatrix} 
                     \mX_1\mP\vb \\
                     \mX_2\mP\vb                                
              \end{pmatrix}
          =\mQ\begin{pmatrix} 
                     (\vx_\gamma)_{[1,n-m]} \\ 
                     (\vx_\gamma)_{[n-m+1,n]} 
              \end{pmatrix}
          =\mQ\vx_\gamma$$
is a solution to $\mA\vx=\vb$.

So, we have defined a polynomial-time computable function $g$ that,
on an instance $I$ and a feasible solution $\gamma$ for $f(I)$,
returns a feasible solution $g(I,\gamma)=\mQ\vx$ for $I$.

To complete the proof, we have to show that $(f,g)$ is an $S$-reduction.
Directly from the construction, we obtain
  $$\mu_{CW}(I,\mQ\vx_\gamma)=\mu_{SAT}((\cV,\Sigma),\gamma)$$
and $\mu_{SAT}^*(\cV,\Sigma)\geq \mu_{CW}^*(I)$, where $\mu_{CW}$ 
and $\mu_{SAT}$ are the measure functions of MIN-CWP($\bF_2$) and 
MIN-SAT($\bF_2$), respectively. 

Let $I=(\mA,\vb)$ be an instance of MIN-CWP($\bF_2$). All feasible
solutions for $I$ are of the form $\mX\vb$, where $\mX$ is a generalized
inverse of $\mA$ as that above. 
Going in the opposite direction on the reasoning already done, we 
notice that for any feasible solution $\mX\vb$ for $I$, there is 
a feasible solution $\gamma$ for $f(I)$ so that $g(I,\gamma)=\mX\vb$. 
So, $g$ maps $Sol_{SAT}(f(I))$ onto $Sol_{CW}(I)$. Moreover, it is
straightforward to see that $g$ is also one-to-one. Therefore,
the property $\mu_{SAT}^*(\cV,\Sigma)=\mu_{CW}^*(I)$ must hold. 

The proof in the case of the MIN-SWP($\bF_2$) problem is similar to the 
previous one, but this time we will use Corollary \ref{CWGI2-C01}.
So, define the set of constraints by
\begin{equation}
\Sigma:\ \left\{
\begin{array}{lcl}
  z_1 &=& 1 \\
      &\cdots& \\
  z_{n-m} &=& 1 \\
  -\mA_1\vz &=& \bm{1} 
\end{array} 
\right.
\end{equation}
and $\vx_\gamma$ by
\begin{itemize}
\item $(\vx_\gamma)_{[1,n-m]}=\bar{\gamma}(\vz)\in\bF_2^{n-m}$;
\item $(\vx_\gamma)_{[n-m+1,n]}=\bar{\gamma}(-\mA_1\vz)\in\bF_2^{m}$.
\end{itemize}
The proof can now be accomplished using Corollary \ref{CWGI2-C01}.
\end{proof}

The proof of the theorem is constructive, providing the tightest 
reduction between problems. We can then propose the following
algorithms to solve our problems. Both of them use a MIN-SAT solver
denoted MIN-SAT\_{\scriptsize SOLVE}. 

\begin{algorithm}[!h]
    \caption{SAT solver for MIN-CWP$(\bF_2)$}
    \label{GID-CSP-Alg01} 
    \begin{algorithmic}[1]
        \Function{MIN-CWP\_solve}{$\mA,\vb$} 
        \State Choose a transformation 
               $\mP\mA\mQ=\begin{pmatrix} \mA_1 & \mId_m \end{pmatrix}$;
        \State Choose a vector $\vz$ of $n-m$ variables;
        \State Compute de MIN-SAT(affine) instance 
               $\Sigma=\left\{
                   \begin{array}{lcl}
                      \vz &=& {\bm{1}} \\
                      \mP\vb - \mA_1\vz &=& {\bm{1}};
                   \end{array}
                       \right.$
        \State $\gamma:=$MIN-SAT\_{\scriptsize SOLVE}$(\vz,\Sigma)$;
        \State Compute $(\vx_\gamma)_{[1,n-m]}:=\bar{\gamma}(\vz)$
               and $(\vx_\gamma)_{[n-m+1,n]}:=\bar{\gamma}(\mP\vb-\mA_1\vz)$;
        \State\textbf{return} $\mQ\vx_\gamma$. 
        \EndFunction
    \end{algorithmic}
\end{algorithm}

\begin{algorithm}[!h]
    \caption{SAT solver for MIN-SWP$(\bF_2)$}
    \label{GID-CSP-Alg02} 
    \begin{algorithmic}[1]
        \Function{MIN-SWP\_solve}{$\mA$} 
        \State Choose a transformation 
               $\mP\mA\mQ=\begin{pmatrix} \mA_1 & \mId_m \end{pmatrix}$;
        \State Choose a vector $\vz$ of $n-m$ variables;
        \State Compute de MIN-SAT(affine) instance 
               $\Sigma=\left\{
                   \begin{array}{lcl}
                      \vz &=& {\bm{1}} \\
                      - \mA_1\vz &=& {\bm{1}};
                   \end{array}
                       \right.$
        \State $\gamma:=$MIN-SAT\_{\scriptsize SOLVE}$(\vz,\Sigma)$;
        \State Compute $(\vx_\gamma)_{[1,n-m]}:=\bar{\gamma}(\vz)$
               and $(\vx_\gamma)_{[n-m+1,n]}:=\bar{\gamma}(-\mA_1\vz)$;
        \State\textbf{return} $\mQ\vx_\gamma$. 
        \EndFunction
    \end{algorithmic}
\end{algorithm}

The solver for MIN-SWP can be thought of as a particular case of 
the solver for MIN-CWP, namely the case where $\vb=\v0$. That is not 
true for SWP and CWP solvers (see Algorithms \ref{CWGI1-Alg01}
and \ref{CWGI2-Alg02}).

As we said, the reduction of MIN-CWP and MIN-SWP to MIN-SAT is very 
close and probably optimal. Only one Gaussian elimination is needed, 
after which a constraint is associated for each component of the 
solution vector. Moreover, the first n-m constraints are costless, 
of the form $z=1$. The other constraints are obtained by simply 
multiplying a matrix with a vector of Boolean variables. The solution 
returned by the result provided by the MIN-SAT solver is just as 
simple. The first $n-m$ components of the solution vector are exactly
the assignment of the Boolean variables provided by the MIN-SAT solver. 
The following components are obtained by multiplying a matrix by 
the assigned variable vector.

Currently, considerable effort is being made to streamline MIN-SAT 
solvers. They are widely used in optimization problems and have 
significant advantages over MAX-SAT solvers 
\cite{ALMZ2013}. 
These advantages are not the result of encodings but the result 
of the techniques used in MIN-SAT, some of which are not valid for 
MAX-SAT.

\section{Experimental results}\label{ExpRez}

We have implemented several GID algorithms in order to verify, test, 
and confront our theoretical results. The implementation was done in 
\texttt{Magma V2.25-3} running on a regular laptop equipped with  
Intel(R) Xeon(R) E-2176M processor with CPU @ 2.70GHz. 

\paragraph{Experiment 1}

This is an experiment for SDP. The instances are 
$(\mH,\vs,t)$, where $\mH\in\cM_{n-k,n}(\bF_q)$ and $\vs\in\bF_q^{n-k}$.
We have chosen $n=500$, $k=250$, and $q\in\{2,3,5,7\}$. 
The steps of the experiment are:
\begin{enumerate}
\item We have generated exactly one Prange transformation
      $\mP\mH\mQ=\begin{pmatrix} \mV & \mId_r \end{pmatrix}$;
\item We have performed ten iterations. They are represented, in a
      cumulative way, on the abscissa of the graphs in 
      Figure \ref{ExpRez-Fig01};
\item For each iteration, $k$ GIs were generated
      $\mX=\mQ\begin{pmatrix} \mX_1 \\ \mId_r-\mV\mX_1 \end{pmatrix}\mP$ 
      so that $\Hw{\mX_1\bar{\vs}}$ takes all the values in 
      $\{1,\ldots,k\}$; 
\end{enumerate}

Step 3 when implemented in practice in does the following. It selects a random subset $E$ of $\supp{\bar{\vs}}$ of cardinality $i\leq k.$ It computes the second part of the solution $\vx_2=\bar{\vs}-\sum_{i\in E}\mV(,i)r_i$ where $r_i$ is a random non-zero element from $\bF_q.$ It stores the weight of the solution $i+\Hw{\vx_2}.$

\begin{figure}[!ht]
 \captionsetup[subfigure]{justification=centering,singlelinecheck=false}
    \centering
    \begin{subfigure}[b]{.45\textwidth}
 {\includegraphics[width=.9\textwidth,height=.7\textwidth]{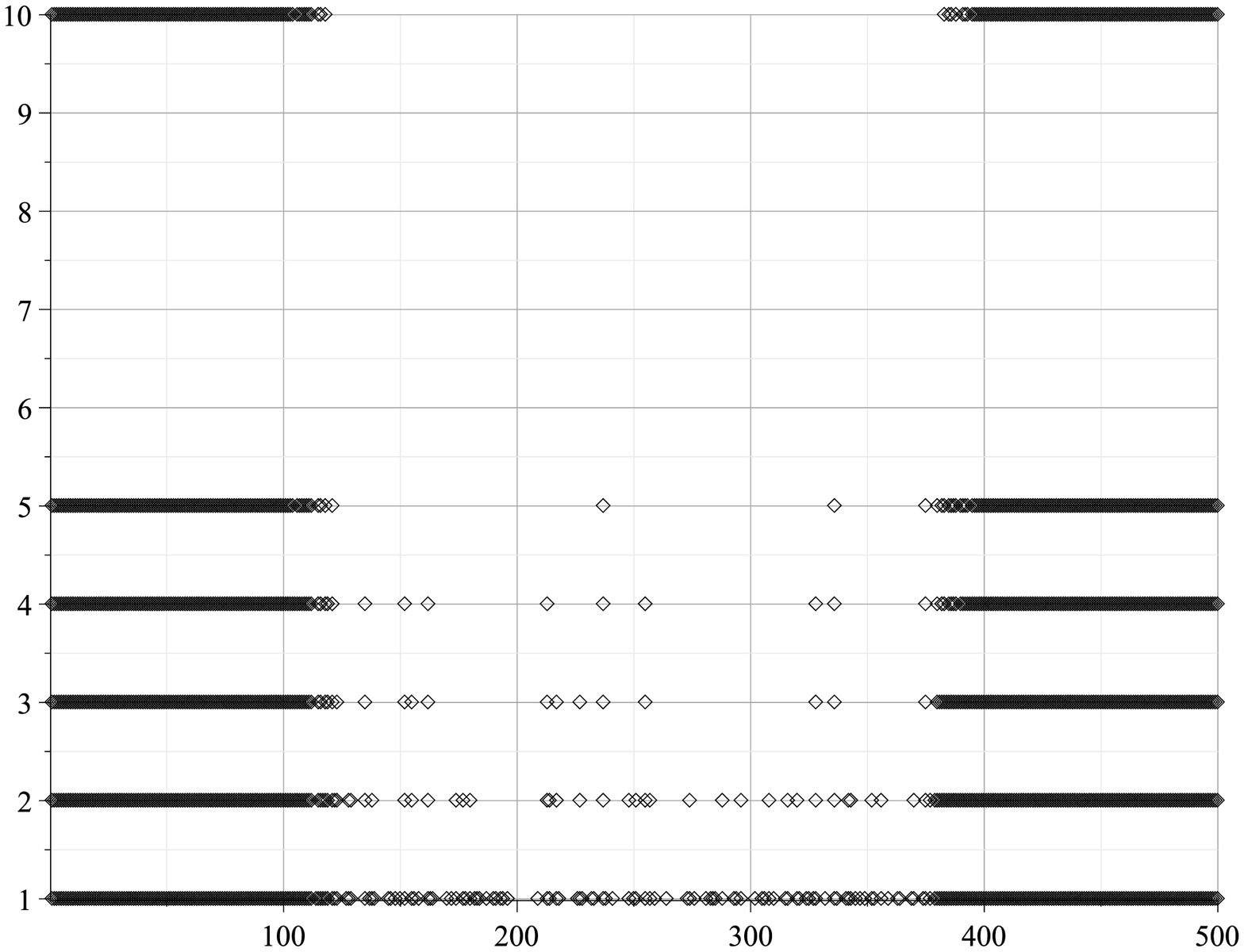}}
  \caption{$\bF_2$}
  \label{fig:W500a}
\end{subfigure}
\begin{subfigure}[b]{.45\textwidth}
    {\includegraphics[width=.9\textwidth,height=.7\textwidth]{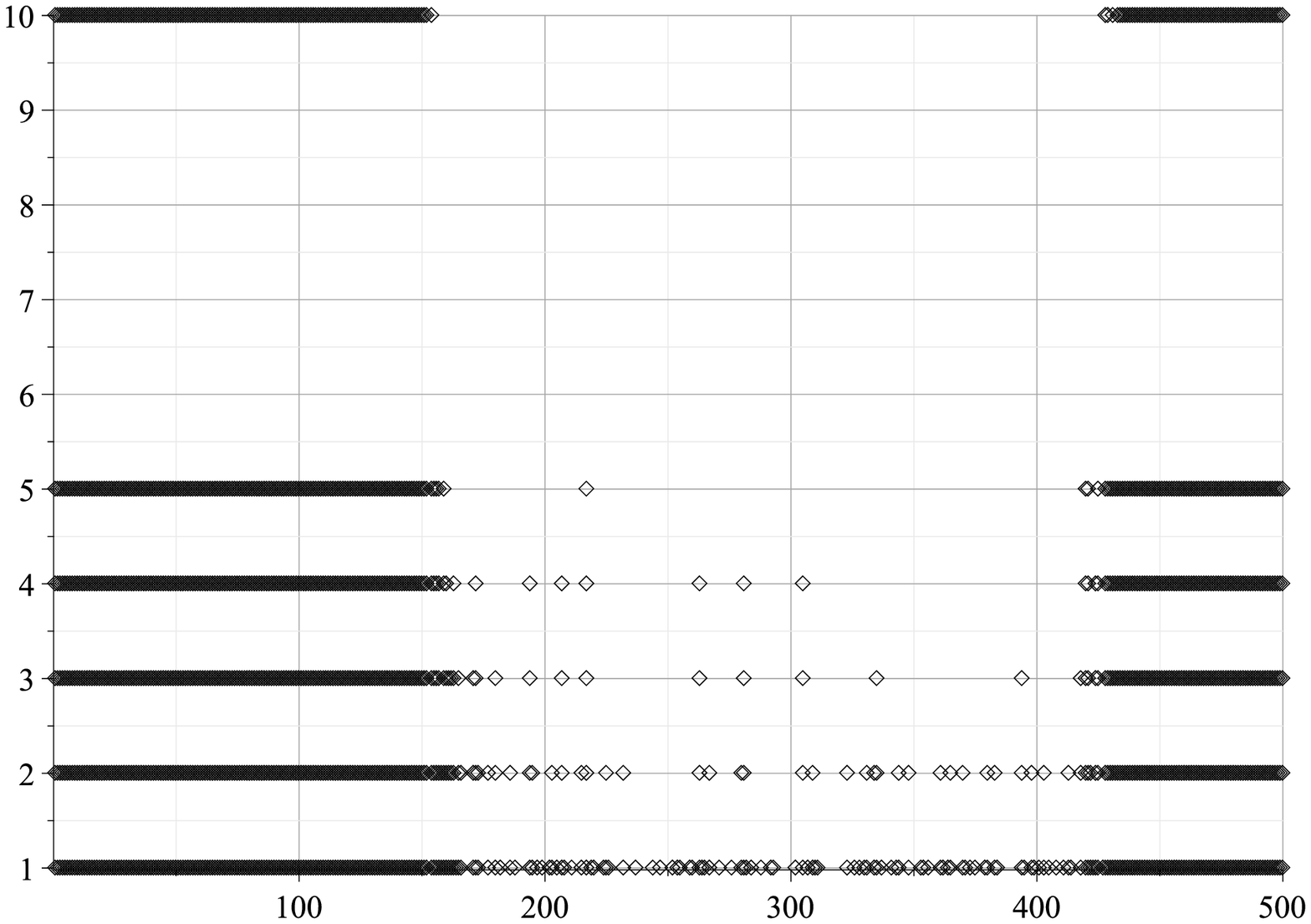}}
    \caption{$\bF_3$}
    \label{fig:W500b}
    \end{subfigure}
\begin{subfigure}[b]{.45\textwidth}
{\includegraphics[width=.9\textwidth,height=.7\textwidth]{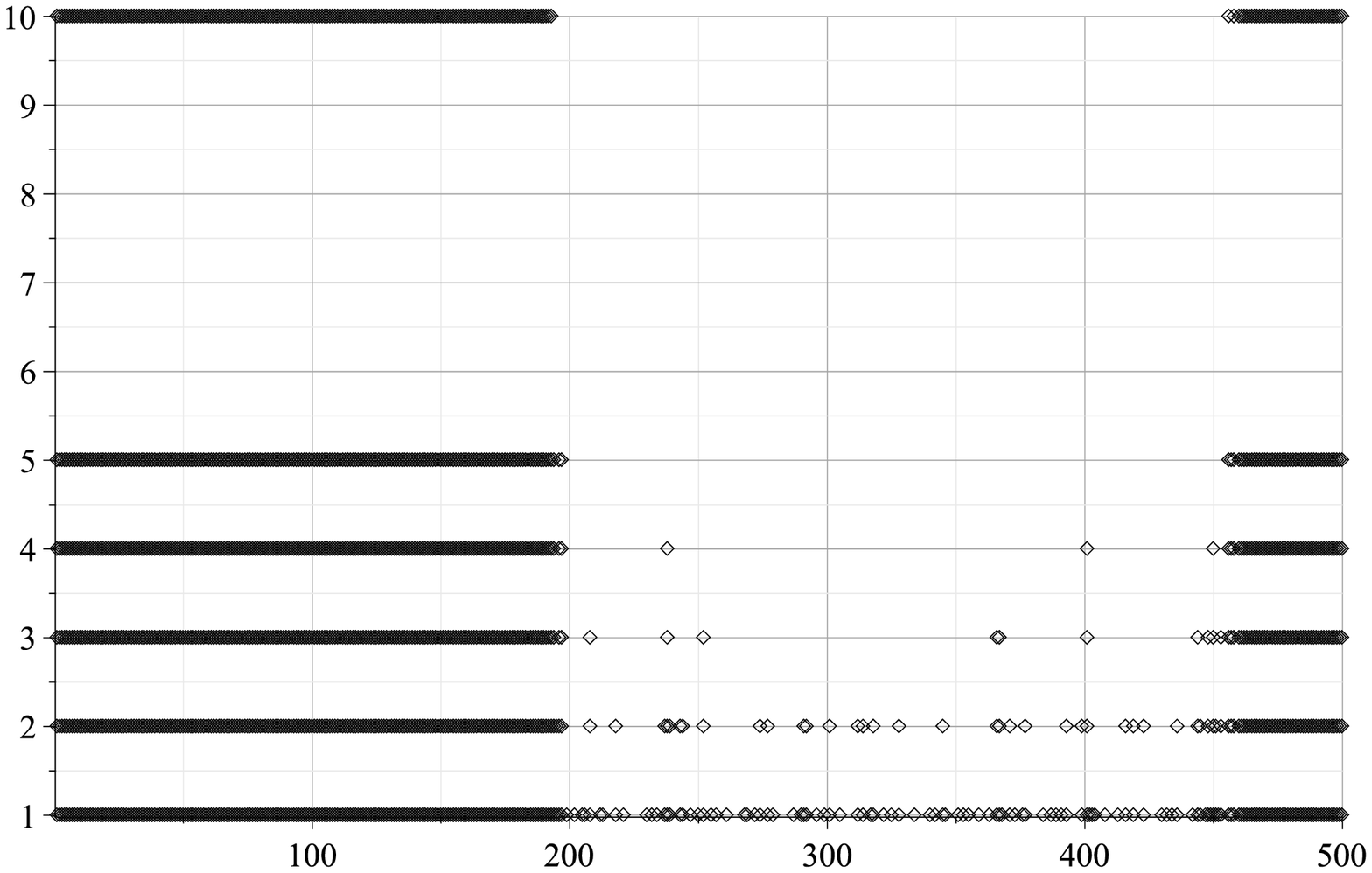}}
    \caption{$\bF_5$}
    \label{fig:W500c}
    \end{subfigure}
 \begin{subfigure}[b]{.45\textwidth}
  {\includegraphics[width=.9\textwidth,height=.7\textwidth]{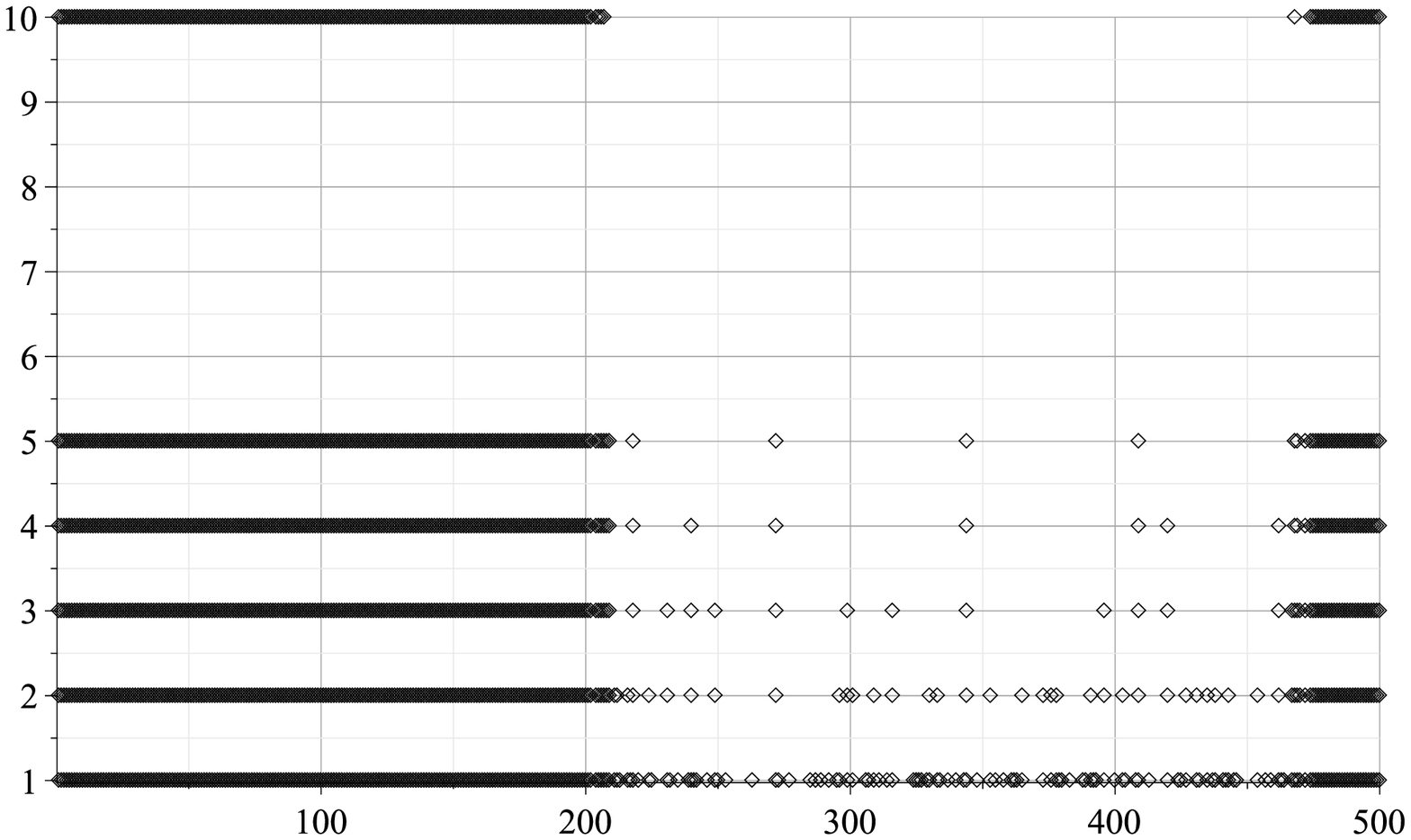}}
    \caption{$\bF_7$}
    \label{fig:W500d}
    \end{subfigure}
\caption{Hamming weights outside the range of efficiently computed solutions ($n=500,k=250$).}
\label{ExpRez-Fig01}
\end{figure}

The experiment took less than one second and the results can be seen in 
Figure \ref{ExpRez-Fig01}. 
It is noteworthy that only one Prange transformation with a very small
number of iterations produces solutions whose Hamming weight cover 
a rich range 
  $$\left[r\frac{q-1}{q},r\frac{q-1}{q}+n-r\right]$$ 
(the missing part of the line next to the $i$ value on the abscissa).

Our experiment confirms the {\em easy domain}
$[\omega_{\text{easy}}^-,\omega_{\text{easy}}^+]$ established in
\cite{DAST2019}. 
We would like to stress that we have only used one Prange transformation.

\begin{figure}[!ht]
\captionsetup[subfigure]{justification=centering,singlelinecheck=false}
    \centering
    \begin{subfigure}[b]{.45\textwidth}
{\includegraphics[width=.9\textwidth,height=.7\textwidth]{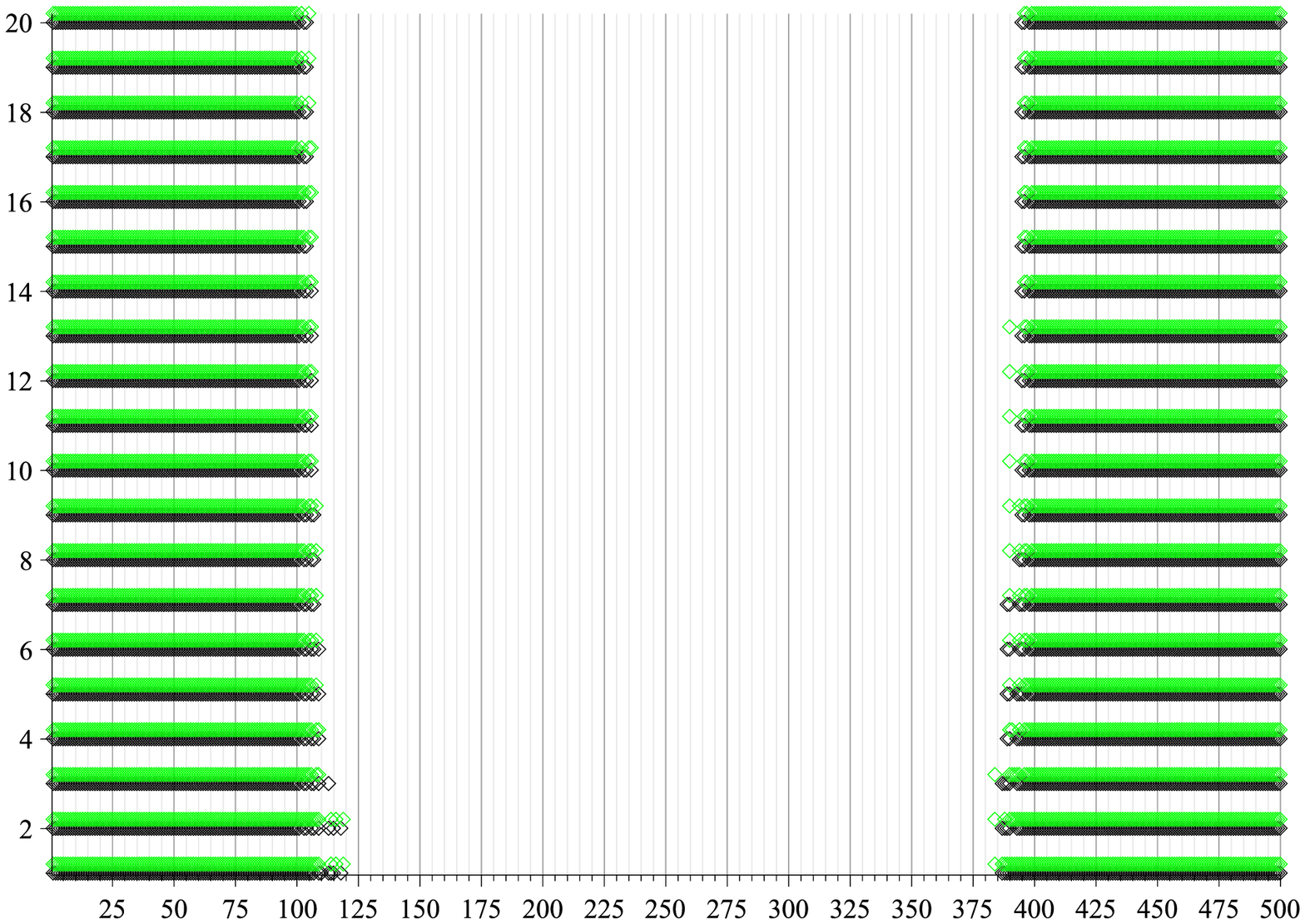}}
 \caption{$\bF_2$}
 \label{fig:W500-20PQa}
\end{subfigure}
\begin{subfigure}[b]{.45\textwidth}
    {\includegraphics[width=.9\textwidth,height=.7\textwidth]{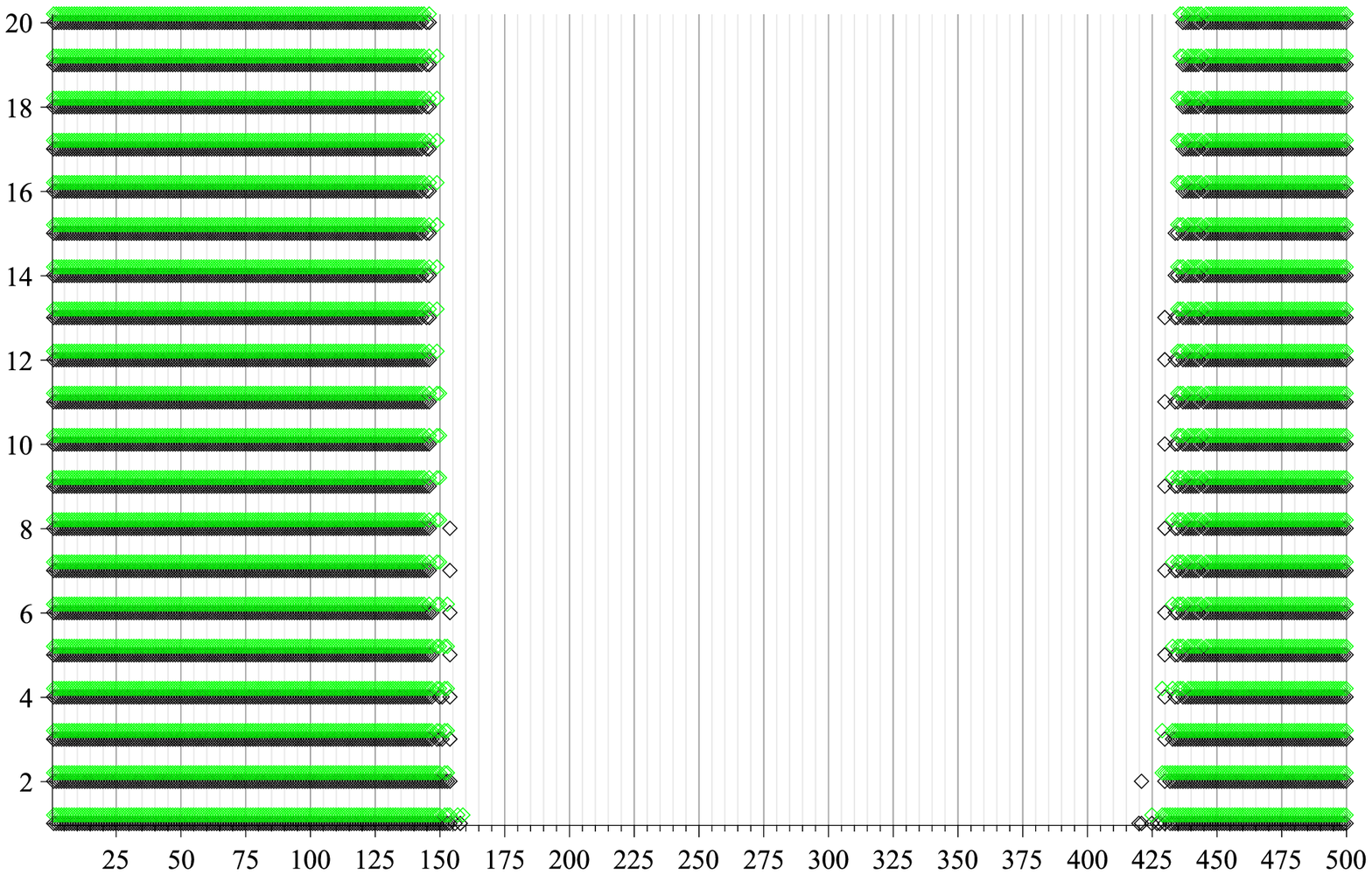}}
    \caption{$\bF_3$}
    \label{fig:W500-20PQb}
    \end{subfigure}
\caption{Hamming weights outside the range of efficiently computed solutions ($n=500,k=250$). In green a single $(\mP,\mQ)$ decomposition is used and in black 20 different decompositions are used.    \label{fig:W500-20PQ}}
\end{figure}

In order to see how the interval of easy weights behaves when multiple decompositions are used, we have repeated \textit{Experiment 1} for 20 different decompositions. We have stored an updated list of solution weights that are not retrieved and plot it in Figure \ref{fig:W500-20PQ} (in green). The same experiement was peformed using a single decomposition and plotted (in black) in the same figure. We notice the following facts.
\begin{itemize}
    \item For a fixed number of iterations, the algorithm using multiple decompositions tend to produce the same interval of weights as the algorithm using a single decomposition.
    \item For $n=500,k=250$ the value where we move from single solution to multiple solutions corresponds to $t=57$ for $\bF_2$ and $t=123$ for $\bF_3.$ Our algorithm finds solutions with weight within the interval $[102,397]$ for $\bF_2$ and $[146, 437]$ for $\bF_3.$ This corresponds to a larger interval than what in given in \cite{DAST2019}, i.e., $\left[r\frac{q-1}{q}-\sqrt{n},r\frac{q-1}{q}+n-r+\sqrt{n}\right]$, only using a small number of iterations. We have verified our statement for other code length values. For example for $n=1000,k=500$ over $\bF_3$ the value of $t=242$ and our simulations produced the interval $[300,862]$ while $\left[r\frac{q-1}{q}-\sqrt{n},r\frac{q-1}{q}+n-r+\sqrt{n}\right]=[301,801].$ 
\end{itemize}

We also wanted to illustrate how the easy weights interval behaves for other code dimensions, fact that we plot in Figure \ref{fig:W2000} for codes of length $n=2000$ and dimension increasing from $n-k=500$ to $n-k=1000.$ We stress out that even for larger values of $n$ the algorithm is fast, in less than 30 seconds we obtained the results for all the code dimensions.   

\begin{figure}[!ht]
    \centering
    \includegraphics[width=0.5\textwidth,height=0.3\textwidth]{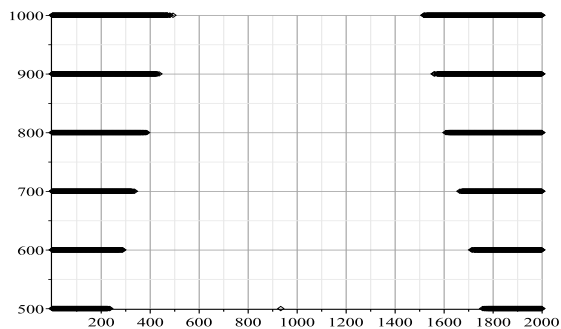}
    \caption{Hamming weights outside the range of efficiently computed solution at 10 iterations for $n=2000$ and $n-k\in\{500,600,700,800,900,1000\}.$}
    \label{fig:W2000}
\end{figure}

\paragraph{Experiment 2}

There are many variations of LWP \cite{Vard1997}. 
One of them requires finding a codeword whose weight should be in some 
given interval $[w_1,,w_2]$ (the code is specified by a parity check 
matrix). This problem is NP-complete \cite{NtHa1981}.
However, it has many easy instances as our results show. 
Moreover, an instance's hardness could depends on the positioning of 
the interval $ (w_1, w_2) $. Our simulations have shown that, as in the 
case of SDP, the interval of weights that are reached using a small 
number of iterations tends to 
$\left[r\frac{q-1}{q}-\sqrt{n},r\frac{q-1}{q}+n-r+\sqrt{n}\right].$ 
Our simulations should be interpreted as follows: 
\begin{itemize}
    \item Regarding the intractability of SDP and LWCP, as simulations 
    point our, we have good reasons to believe that in average both these 
    problems are difficult when the parameter $t$ is in a vicinity of the 
    Gilbert-Varshamov bound. 
    \item Regarding the intractability of the half-length weight codeword 
    problem \cite{DiGr1985}, our simulations show that this might not 
    be a difficult problem in average. Indeed, we were not able to find 
    codes for which the GID could not find a solution of weight $n/2$ to 
    the system $\mH\vx=\vs.$  
\end{itemize}

\section*{Acknowledgments}
This work was supported by a grant of the Ministry of Research, Innovation and
Digitization, CNCS/CCCDI – UEFISCDI,
project number
PN-III-P1-1.1-PD-2019-0285, within
PNCDI III.




\end{document}